\newif\iflong
\newif\ifshort
\else\usepackage[compact]{titlesec}\fi
\setdefaultitem{\textendash}{\textendash}{\textendash}{\textendash}
\newcommand{\cp}{cy\-cle-pair}
\newcommand{\cv}{cy\-cle-ver\-tex}
\newcommand{\A}{A}
\newcommand{\B}{B}
\newcommand{\chld}{base}
\newcommand{\chlds}{bases}
\newcommand{\layers}{\ensuremath{r}}
\newcommand{\sepsize}{\ensuremath{p}}
\newcommand{\numseps}{\ensuremath{t}}
\newcommand{\Sep}{\ensuremath{S}}
\newcommand{\Sepseq}{\ensuremath{\mathcal S}}
\newcommand{\wfsf}{well-formed separator sequence}
\newcommand{\Wfsf}{Well-formed separator sequence}
\newcommand{\vcyca}{\ensuremath{v^*}}
\newcommand{\vcycb}{\ensuremath{v^\dag}}
\newtheorem{theorem}{Theorem}[section]
\theoremstyle{definition}
\newtheorem{observation}[theorem]{Observation}
\crefname{observation}{Observation}{Observations}
\newtheorem{construction}[theorem]{Construction}
\crefname{construction}{Construction}{Constructions}
\crefname{figure}{Figure}{Figures}
\crefname{case}{Case}{Cases}
\crefname{subcase}{Case}{Cases}
\newtheorem{definition}[theorem]{Definition}
\newtheorem{lemma}[theorem]{Lemma}
\newtheorem{corollary}[theorem]{Corollary}
\crefname{ssprop}{Condition}{Conditions}
\crefname{property}{Property}{Properties}
\crefname{enumi}{Condition}{Conditions}
\newtheoremstyle{case}%
{5pt plus 1pt minus 1pt}{5pt plus 1pt minus 1pt}%
{}{1em}%
{\itshape}{.}%
{5pt}%
{\thmname{#1}\thmnumber{ #2}: \thmnote{#3}}
\theoremstyle{case}
\newtheorem{case}{Case}
\newtheorem{subcase}{Case}
\newtheorem{claim}{Claim}
\numberwithin{subcase}{case}
\newcommand{\NP}{\text{\normalfont NP}}
\newcommand{\FPT}{\text{\normalfont FPT}}
\newcommand{\ie}{\emph{i.e.}}
\newcommand{\eg}{\emph{e.g.}}
\newcommand{\Hyp}{{\cal H}}
\newcommand{\Ver}{{\cal V}}
\newcommand{\Ed}{{\cal E}}
\newcommand{\nEd}{\ensuremath{m}}
\newcommand{\nVer}{\ensuremath{n}}
\newcommand{\R}{{\cal R}}
\newcommand{\T}{\ensuremath{\mathcal{T}}\xspace}
\renewcommand{\S}{\ensuremath{\mathcal{S}}\xspace}
\newcommand{\Ssubst}{\ensuremath{\mathcal{P}}\xspace}
\newcommand{\Tsubst}{\ensuremath{\mathcal{R}}\xspace}
\newcommand{\Srev}{\ensuremath{\mathcal{S}_{\text{rev}}}}
\newcommand{\G}{\ensuremath{\mathcal{G}}\xspace}
\newcommand{\poly}{\ensuremath{\operatorname{poly}}}
\newcommand{\sssl}{\ensuremath{\sqrt{(\log_\ell k + 1)/2} - 1}\xspace}
\newcommand{\sepseqlength}{\ensuremath{\lfloor\sqrt[2r]{\log n}/6^r\rfloor}\xspace}
\newcommand{\prob}[5]{%
  \begingroup
  \par\medskip
  \noindent \textsc{#1}\nopagebreak[4]
  \par\noindent\hangindent=\parindent\textit{#2}  #3
  \par\noindent\hangindent=\parindent\textit{#4}  #5
  \par  \medskip
  \endgroup
}
\newcommand{\parprob}[4]{%
  \begingroup
  \par\medskip
  \noindent \textsc{#1}\nopagebreak[4]
  \par\noindent\hangindent=\parindent\textit{Input:}  #2
  \par\noindent\hangindent=\parindent\textit{Question:}  #3
  \par\noindent\hangindent=\parindent\textit{Parameter:}  #4
  \par
  \endgroup
}
\DeclareMathOperator{\glue}{\circ}
 \newcommand{\PS}{\textsc{Planar Support}}%
\newcommand{\HE}{\ensuremath{\mathcal{F}}}  \newcommand{\HV}{\ensuremath{V}}
\title{Well-Formed Separator Sequences, with an Application to Hypergraph Drawing\iflong{}\else{}\footnote{Due to space constraints, several details are deferred to a full version of this article (in the appendix). 
 René van Bevern, Iyad Kanj, and Manuel Sorge acknowledge support by the DFG, project DAPA (NI~369/12).  Parts of this work were done while René van Bevern was employed at TU Berlin and during a six month stay of Iyad Kanj at TU Berlin.}\fi}
\author[1]{René van Bevern}
\author[2]{Iyad Kanj}
\author[3]{Christian Komusiewicz}
\author[3]{Rolf Niedermeier}
\author[3]{Manuel Sorge}
\affil[1]{Novosibirsk State University, Novosibirsk, Russian Federation, \texttt{rvb@nsu.ru}}
\affil[2]{DePaul University, Chicago, USA, \texttt{ikanj@cs.depaul.edu}}
\affil[3]{Institut für Softwaretechnik und Theoretische Informatik, TU Berlin, Germany, \texttt{\{christian.komusiewicz, rolf.niedermeier, manuel.sorge\}@tu-berlin.de}}
\date{}
\begin{document}

\maketitle
\thispagestyle{empty}
\begin{abstract}\looseness=-1
\noindent Given a hypergraph~$\Hyp$, the \PS{} problem asks whether there is a planar graph~$G$ on the same vertex set as~$\Hyp$ such that each hyperedge
induces a connected subgraph of~$G$. \PS{} is motivated by
applications in graph drawing and data visualization.
We show that \PS{} is fixed-parameter tractable  when parameterized
by the number of hyperedges in the input hypergraph and the outerplanarity number of the sought planar graph.  To this end, we develop novel structural results for $r$-outerplanar triangulated disks,
showing that they admit sequences of separators with structural properties enabling data reduction. This allows us to obtain a problem kernel for \PS{}, thus showing its fixed-parameter tractability.
\end{abstract}

\iflong{}
\else{}
\newpage
\fi{}
\section{Introduction}
\label{sec:intro}
\looseness=-1 A \emph{support} for a hypergraph~$\Hyp=(\Ver,\Ed)$ is a graph $G$ on the same vertex set~$\Ver$ such that, for each hyperedge~$e\in \Ed$, the subgraph of $G$ induced by the vertices in~$e$ is connected.  If there is no restriction on the support, then any given hypergraph $\Hyp=(\Ver,\Ed)$ has a support, namely the clique on~$\Ver$.  For a graph property $\Pi$, the problem of deciding whether a given hypergraph~$\Hyp$ has a support that satisfies $\Pi$---shortly, a $\Pi$-support---has been studied by various research communities for numerous properties~$\Pi$.
This problem has, among others, applications in graph drawing, databases, and social and overlay networks~\cite{aar10,CMTV07,CKNSSW14, dum88, KKS08,OR11,bkmsv,BFMY83,JP87,TY84,BCPS11}. The studied graph properties include: having minimum number of edges, being a path, a cycle, a tree, having bounded treewidth, being planar, and being $r$-outerplanar. For some of these properties, the problem is known to be solvable in polynomial time (\eg, path~\cite{ks2003,bkmsv}, cycle~\cite{bkmsv}, tree~\cite{BFMY83,JP87,TY84}), for some it is known to be \NP-hard (\eg, minimum number of edges~\cite{du86}, planar~\cite{JP87}, 2-outerplanar~\cite{bkmsv}), and for some its complexity remains unresolved (\eg, outerplanar~\cite{bkmsv}).

\paragraph{Planar supports.} \looseness=-1 Perhaps the majority of the work on hypergraph support problems is related to hypergraph drawing or representation.
Here, one seeks a plane drawing of the hypergraph that captures the relations among its vertices---stipulated by its hyperedges, while revealing these relations elegantly via the drawing of the hypergraph. One method for drawing hypergraphs is to draw them as \emph{vertex-based Venn diagrams}~\cite{JP87}, %
also referred to as \emph{subdivision drawings}~\cite{KKS08}. A subdivision drawing of a hypergraph is a plane subdivision such that each vertex of the hypergraph corresponds uniquely to a face of the subdivision, and for each hyperedge the union of all the faces corresponding to the vertices in the hyperedge forms a connected region. A~hypergraph has a subdivision drawing if and only if it has a planar support~\cite{KKS08}. Deciding whether a hypergraph has a planar support is \NP-complete~\cite{JP87}.
We study the parameterized complexity of the problem %
parameterized by the combination of the number of hyperedges in the hypergraph and the outerplanarity~$r$ of the sought support:%
\looseness=-1
\parprob{\PS}
{A hypergraph~$\Hyp$ with \nVer{}~vertices and $\nEd{}$~hyperedges, and an $r \in \mathbb{N}$.}
{Does $\Hyp$ have a planar support of outerplanarity at most~$r$?}
{The number~$\nEd{}$ of hyperedges in~$\Hyp$ and~$r$ combined.}
\paragraph{Known results.}\looseness=-1 \PS{} is \NP-hard for~$r=2$~\cite{bkmsv}. \citet{bkmsv} give an \NP-hardness reduction for~$r=3$\iflong{} that transforms a 3-SAT instance into a hypergraph~$\Hyp$.  By inspecting the construction of $\Hyp$, it can be easily verified that $\Hyp$ either has a 3-outerplanar support or no planar support at all. Thus, the reduction of \citet{bkmsv} implies that, for every fixed~$r>3$, \PS~is \NP-hard as well: in the reduction, simply add a fixed~$r$-outerplanar graph to~$\Hyp$; the resulting hypergraph has an~$r$-outerplanar support if and only if~$\Hyp$ has a 3-outerplanar support.\else. A closer inspection shows NP-hardness for every constant~$r\ge 3$.\fi{} The (classical) complexity of \PS~for $r=1$ remains open~\cite{bkmsv}.

An underlying assumption for several results in the literature pertaining to \PS~(\eg, \citet[p. 179]{makinen}, \citet[p. 346]{bkmsv}, \citet[p. 399]{KKS08}) has been that the hypergraph is \emph{twinless}, that is, does not contain two vertices (\emph{twins}) such that the set of hyperedges containing the first is the same as that containing the second. (Twins were referred to as ``equivalent'' vertices by \citet{bkmsv}.)
The intuition behind this assumption is that a twin does not affect the instance because whatever can be ``achieved'' by a vertex can be achieved by its twin.\iflong{} In \cref{sec:supp}, we\else{} We\fi{} demonstrate that this assumption changes the landscape of \PS~completely: \iflong{}\else{}in the Appendix \fi{}we exhibit hypergraphs with twins that admit ($r$-outer)planar supports but depriving them of their twins results in hypergraphs with no ($r$-outer)planar supports.
This illustrates the important role that twins play in realizing ($r$-outer)planar supports for hypergraphs. Indeed, the presence of twins makes \PS~much more challenging: one can easily show \PS~for twinless hypergraphs to be fixed-parameter tractable (\FPT), whereas showing \FPT{} in general hypergraphs is much more demanding.

\looseness=-1 It can be shown that, for each value of the parameter~$\nEd{}$, there is an (unknown and distinct) algorithm solving \PS{} in $f(\nEd{})\cdot\poly(n)$~time for some function~$f$. %
In other words, \PS{} is \emph{non-uniformly} \FPT{} when parameterized by~$\nEd{}$ and hence, also when parameterized by~$\nEd{}$ and~$r$ combined.  To obtain this result, one can use the known machinery of \emph{well-quasi orderings}~\cite{DF13} to prove that each yes-instance of \PS{} contains some minimal yes-instance, and that the number and size of minimal yes-instances depends only on~$\nEd{}$\iflong{}. We outline the proof in \cref{sec:nonuni-fpt}.\else{} (see Appendix).\fi{} Note, however, that even undecidable problems can be non-uniformly \FPT{}.  Therefore, non-uniform \FPT{} results in general are unimplementable and extensive research has focused on making non-uniform \FPT{} results uniform (\eg, numerous \FPT{} results that can be obtained using graph minor theory).

\paragraph{Our contributions.} We present an algorithm that decides a given instance of \PS{} in $f(\nEd{},r)\cdot\poly(n)$~time, where $f$~is \emph{explicitly given}. This implies that \PS{} is strongly uniformly \FPT{}, which is a strong improvement over the above-mentioned non-uniform \FPT{} result, and a necessary step in order to get applicable algorithms.

\looseness=-1 We prove that \PS{} is \FPT{} by providing a problem kernel. Notably, the number of hyperedges is perhaps the most natural parameter to study, as already observed in previous work~\cite{CKNSSW14,HHIOSW12}. Also note that a problem kernel with respect to the studied parameter combination is a stronger result than having  for each fixed~$r$ a problem kernel with respect to the parameter ``number of hyperedges''.  The main ingredient of the problem kernel is the non-trivial observation that, indeed, removing one of \emph{sufficiently many} twins does not affect the instance.
To obtain the problem kernel, based on the crucial observation that, without loss of generality, we can focus on triangulated disks, we prove a general structural result about separators of $r$-outerplanar triangulated disks. This is of independent interest: Given an embedding of an $r$-outerplanar triangulated disk $G$ ($r\geq 1$) on \nVer{}~vertices, we show that one can construct in polynomial time a sequence of separators for~$G$, which we refer to as a \emph{\wfsf{}} and whose length is some increasing, unbounded function in~$r$ and~$n$. %

\looseness=-1 We formally introduce \wfsf{}s in \cref{sec:wfsf} and compare them to other separator families found in the literature. Their structural properties make them amenable to a \emph{gluing} operation, which is also introduced in \cref{sec:wfsf}.  Gluing removes the subgraph of~$G$ ``between'' any two separators in the embedding, and identifies the separators.  We show that gluing any two separators in a \wfsf{} preserves the $r$-outerplanarity of~$G$. To apply this toolkit to \PS, we show that if the number of vertices in the hypergraph $\Hyp$ is ``large'' with respect to the parameter, then there are two separators in the planar support (if one exists) such that the subgraph between the two separators is ``redundant''\iflong{} (\ie, does not have any effect on the connectivity of the hyperedges)\fi, a property that we capture using the notion of \emph{separator signatures} (\cref{sec:application}). The above allows us to conclude that if an $r$-outerplanar support for $\Hyp$ exists, then a support whose size is upper-bounded by a function of the parameter must exist as well. This gives a problem kernel and, as a consequence, an \FPT~algorithm for \PS.  \cref{sec:algo} provides the technical construction of well-formed separator sequences.

\section{Preliminaries}
\label{sec:prelim}
We use standard terminology from graph theory~\citep{west} and parameterized complexity~\citep{DF13,FG06,rolfbook}.
\iflong{}%
\paragraph{Graphs.}
\fi{} Unless stated otherwise, all graphs are without parallel edges or loops. A~\emph{cut-vertex} (resp.\ \emph{cut-edge}) in a connected graph~$G$ is a vertex~$v$ (resp. an edge~$e$) such that $G-v$ (resp.\ $G-e$) is disconnected.  A~connected graph~$G$ is \emph{biconnected} if no vertex in~$G$ is a cut-vertex.  The \emph{blocks} of a graph~$G$ are its maximal biconnected subgraphs, its cut-edges, and its isolated vertices. %
\iflong{}
\paragraph{\boldmath $r$-Outerplanar disks.}
\fi{}
A \emph{plane graph}~$G = (V, E)$ is a planar graph given with a fixed embedding in the plane. The \emph{layer decomposition} of~$G$ with respect to the embedding is a partition of~$V$ into layers~$L_1\uplus\dots\uplus L_r$ is defined inductively as follows. Layer~$L_1$ is the set of vertices that lie on the outer face of~$G$, and layer~$L_i$ is the set of vertices that lie on the outer face of~$G - \bigcup_{j=1}^{i-1} L_j$ for $ 1 < i \leq r$. The graph $G$ is called \emph{$r$-out\-er\-pla\-nar} if it has an embedding with a layer decomposition consisting of at most $r$~layers. If $r=1$, then $G$~is simply said to be \emph{outerplanar}.
A plane graph~$G$ is said to be \emph{triangulated} if each face of~$G$, including the outer face, is a triangle, and $G$
is said to be a \emph{triangulated disk} if its outer face is a simple cycle (not necessarily a triangle), and all its inner faces are triangles~\cite{Biedl15}.  It is easy to see that the vertices on the outer face of a biconnected $r$-outerplanar graph form a simple cycle.
In most sections of this paper, we will be working with a fixed $r$-outerplanar triangulated disk~$G$, that is, we implicitly fix an embedding of~$G$. When the context is clear, we will often abuse the notation and use
$L_1$ to refer to the simple cycle that delimits the outer face of~$G$. It is known that any vertex~$v$ in layer~$L_i$, $i > 1$, of an $r$-outerplanar triangulated disk~$G$ has a neighbor in layer $L_{i-1}$~\cite{Biedl15}.

\iflong{}
\paragraph{Hypergraphs.}
\else{}
\looseness=-1
\fi{} A \emph{hypergraph} $\Hyp = (\Ver,\Ed)$ consists of a vertex set $\Ver = V({\cal H})$ and an edge set~$\Ed = E({\cal H})$ such that $e \subseteq \Ver$ for every $e \in \Ed$.  Throughout this work, we denote $\nVer{}:=|\Ver|$ and $\nEd{}:=|\Ed|$.  The \emph{size} of a hyperedge is the number of vertices in it.  Unless stated otherwise, we assume that hypergraphs do not contain hyperedges of size at most 1 or multiple copies of the same hyperedge. %
For a vertex $v \in \Hyp$, we denote $\Ed(v):=\{e \in {\cal H} \mid v \in e\}$.  A~vertex~$v$ \emph{covers} a vertex~$u$ if $\Ed(u) \subseteq \Ed(v)$. Two vertices $u, v \in \Ver$ are \emph{twins} if $\Ed(v)=\Ed(u)$. Clearly, the relation~$\R$ on~$\Ver$ defined by $\forall u, v \in \Ver, u \R v\iff\Ed(u)=\Ed(v)$ is an equivalence relation.
We write $[u]_{\R}$ to denote the \emph{twin class} of a vertex~$u \in \Ver$ under the above relation~$\R$.  \emph{Removing a vertex set}~$S$ from a hypergraph~$\Hyp = (\Ver,\Ed)$ results in the hypergraph~$\Hyp-S:=(\Ver\setminus S, \Ed')$ where~$\Ed'$ is obtained from $\{e\setminus S\mid e\in \Ed\}$ by removing the empty set and singleton sets. We use~$\Hyp[S]:=\Hyp-(\Ver\setminus S)$ and $\Hyp-v:=\Hyp-\{v\}$.

\iflong{}
 \paragraph{Parameterized complexity.}
 A \emph{parameterized problem} is a set of instances of the form $({\cal I}, k)$, where
 ${\cal I} \in \Sigma^*$ for a finite alphabet~$\Sigma$, and $k \in \mathbb{N}$ is the \emph{parameter}.
 A parameterized problem $Q$ is \emph{fixed-parameter tractable}, shortly \FPT, if there exists an algorithm that on input
 $({\cal I}, k)$ decides if $({\cal I}, k)$ is a yes-instance of~$Q$ in $f(k)|{\cal I}|^{O(1)}$ time,
 where $f$ is a computable function independent of $|{\cal I}|$.
 A parameterized problem $Q$ is \emph{kernelizable}
 if there exists a polynomial-time self-reduction that maps an instance $({\cal I},k)$ of
 $Q$ to another instance $({\cal I}',k')$ of $Q$ such that: (1) $|{\cal I}'| \leq \lambda(k)$ for
 some computable function $\lambda$, (2) $k' \leq \lambda(k)$, and (3) $({\cal I},k)$ is a yes-instance
 of $Q$ if and only if $({\cal I}',k')$ is a yes-instance of $Q$. The instance
 $({\cal I}',k')$ is called the \emph{problem kernel} of $({\cal I}, k)$.
 It is well known that a parameterized problem is fixed-parameter tractable if and only if
 the problem is kernelizable. %
\fi{}

\section{\Wfsf{}s}\label{sec:wfsf}
\tikzstyle {sepa} = [line width=14pt, color=red, opacity=0.5, line cap=round, line join=round]
\tikzstyle {sepb} = [line width=11pt, color=green,opacity=0.5, line cap=round, line join=round]
\tikzstyle {sepc} = [line width=8pt, color=blue, opacity=0.5, line cap=round, line join=round]
\tikzstyle {vert} = [circle, thick, fill=white, draw, inner sep=0pt, minimum size=2mm]
\tikzstyle {layer} = [dashed]

\begin{figure}[t]
   \begin{tikzpicture}[y=0.35cm,x=0.75cm,baseline=(current bounding box.center)]
     \node[vert] (l3a) at (0,1) {};
     \node[vert] (l3b) at (2,1) {};
     \node[vert] (l3c) at (2,0) {};
     \node[vert] (l3d) at (1,0) {};
     \node[vert] (l3e) at (0,0) {};

     \node[vert] (l2a) at (0,3) {};
     \node[vert] (l2b) at (2,3) {};
     \node[vert] (l2c) at (2,-2) {};
     \node[vert] (l2d) at (1,-2) {};
     \node[vert] (l2e) at (0,-2) {};

     \node[vert] (l1a) at (0,5) {};
     \node[vert] (l1a') at (1,5) {};
     \node[vert] (l1b) at (2,5) {};
     \node (l1c) at (2,-4) {};
     \node[vert] (l1d) at (1,-4) {};
     \node[vert] (l1e) at (0,-4) {};
     \begin{pgfonlayer}{background}
       \draw[layer] (l3a)--(l3b) to[bend left=90] (l3c)--(l3d)--(l3e) to[bend left=90] (l3a);

       \draw[layer] (l2a)--(l2b) to[bend left=90] (l2c)--(l2d)--(l2e) to[bend left=90] (l2a);

       \draw[layer] (l1a)--(l1b) to[bend left=90] (l1c)--(l1d)--(l1e) to[bend left=90] (l1a);

       \draw[sepa] (l1a.center) -- (l2a.center) -- (l3a.center) -- (l3e.center) -- (l2e.center) -- (l1e.center); \draw[sepb] (l1a'.center)--(l2a.center)--(l3a.center)--(l3d.center)--(l2d.center)--(l1d.center); \draw[sepc] (l1b.center)--(l2b.center)--(l3b.center)--(l3c.center)--(l2c.center)--(l1d.center);

       \draw (l1a.center) -- (l2a.center) -- (l3a.center) -- (l3e.center) -- (l2e.center) -- (l1e.center); \draw (l1a'.center)--(l2a.center)--(l3a.center)--(l3d.center)--(l2d.center)--(l1d.center); \draw (l1b.center)--(l2b.center)--(l3b.center)--(l3c.center)--(l2c.center)--(l1d.center);
     \end{pgfonlayer}
   \end{tikzpicture}\hfill
   \begin{tikzpicture}[y=0.35cm,x=0.75cm,baseline=(current bounding box.center)]
     \node[vert] (l3a) at (0,1) {};
     \node[vert] (l3b) at (2,1) {};
     \node[vert] (l3c) at (2,0) {};
     \node[vert] (l3d) at (1,0) {};
     \node[vert] (l3e) at (0,0) {};

     \node[vert] (l2a) at (0,3) {};
     \node[vert] (l2b) at (2,2) {};
     \node[vert] (l2b') at (1,2.5) {};
     \node (l2c) at (2,-1) {};
     \node[vert] (l2d) at (1,-1.5) {};
     \node[vert] (l2e) at (0,-2) {};

     \node[vert] (l1') at (3.5,0.5) [label=right:$v^*$] {};
     \node (l1a) at (0,4) {};
     \node (l1a') at (1,4) {};
     \node (l1b) at (2,4) {};
     \node (l1c) at (2,-3) {};
     \node (l1d) at (1,-3) {};
     \node (l1e) at (0,-3) {};

     \node (l0a) at (0,5) {};
     \node (l0b) at (3.25,5) {};
     \node (l0c) at (3.25,-4) {};
     \node (l0d) at (0,-4) {};

     \draw[layer] (l0a.center)--(l0b.center) to[bend left=90] (l0c.center)--(l0d.center) to[bend left=90] (l0a.center);

     \begin{pgfonlayer}{background}
       \draw[layer] (l3a)--(l3b) to[bend left=90] (l3c)--(l3d)--(l3e) to[bend left=90] (l3a);

       \draw[layer] (l2a.center)--(l2b.center) to[bend left=90] (l2c.center)--(l2d.center)--(l2e.center) to[bend left=70] (l2a.center);

       \draw[layer] (l1a.center)--(l1b.center) to[out=0,in=90] (l1'.center) to[out=-90,in=0] (l1c.center) -- (l1d.center) -- (l1e.center) to[bend left=70] (l1a.center);

       \draw[sepa] (l1'.center) to[out=90,in=30] (l2a.center)--(l3a.center)--(l3e.center)--(l2e.center) to[out=-30,in=-90](l1'.center);

       \draw[sepb] (l1'.center) to[out=100,in=30] (l2b'.center) -- (l3a.center) -- (l3d.center) -- (l2d.center) to[out=-30,in=-100] (l1'.center);

       \draw[sepc] (l1'.center) to[out=120, in=0] (l2b.center) -- (l3b.center) -- (l3c.center) --(l2d.center) to[out=-30,in=-100](l1'.center);
       \draw (l1'.center) to[out=90,in=35] (l2a.center)--(l3a.center)--(l3e.center)--(l2e.center) to[out=-35,in=-90](l1'.center);

       \draw (l1'.center) to[out=100,in=30] (l2b'.center) -- (l3a.center) -- (l3d.center) -- (l2d.center) to[out=-30,in=-100] (l1'.center);

       \draw (l1'.center) to[out=120, in=0] (l2b.center) -- (l3b.center) -- (l3c.center) --(l2d.center);
     \end{pgfonlayer}
   \end{tikzpicture}\hfill
   \begin{tikzpicture}[x=0.75cm,y=0.35cm,baseline=(current bounding box.center)]
     \node[vert] (l3a) at (0,1) {};
     \node[vert] (l3b) at (2,1) {};
     \node[vert] (l3c) at (2,0) {};
     \node[vert] (l3d) at (1,0) {};
     \node[vert] (l3e) at (0,0) {};

     \node[vert] (l2a) at (0,3) {};
     \node[vert] (l2b) at (2,2) {};
     \node[vert] (l2b') at (1,2.5) {};
     \node (l2c) at (2,-1) {};
     \node[vert] (l2d) at (1,-1.5) {};
     \node[vert] (l2e) at (0,-2) {};

     \node[vert] (l1') at (3,2) [label=right:$v^*$] {};
     \node[vert] (l1'') at (3,-1) [label=right:$v^\dag$] {};
     \node (l1a) at (0,4) {};
     \node (l1a') at (1,4) {};
     \node (l1b) at (2,4) {};
     \node (l1c) at (2,-3) {};
     \node (l1d) at (1,-3) {};
     \node (l1e) at (0,-3) {};

     \node (l0a) at (0,5) {};
     \node (l0b) at (3,5) {};
     \node (l0c) at (3,-4) {};
     \node (l0d) at (0,-4) {};

     \draw[layer] (l0a.center)--(l0b.center) to[bend left=90] (l0c.center)--(l0d.center) to[bend left=90] (l0a.center);

     \begin{pgfonlayer}{background}
       \draw[layer] (l3a)--(l3b) to[bend left=90] (l3c)--(l3d)--(l3e) to[bend left=90] (l3a);

       \draw[layer] (l2a.center)--(l2b.center) to[bend left=90] (l2c.center)--(l2d.center)--(l2e.center) to[bend left=70] (l2a.center);

       \draw[layer] (l1a.center)--(l1b.center) to[out=0,in=90] (l1'.center) -- (l1''.center) to[out=-90,in=0] (l1c.center) -- (l1d.center) -- (l1e.center) to[bend left=70] (l1a.center);

       \draw[sepa] (l1'.center) to[bend right=40](l2a.center)--(l3a.center)--(l3e.center)--(l2e.center) to[bend right=40](l1''.center)--(l1'.center);

       \draw[sepb] (l1'.center) to[bend right=30] (l2b'.center) -- (l3a.center) -- (l3d.center) -- (l2d.center) to[bend right=20](l1''.center)--(l1'.center);

       \draw[sepc] (l1'.center)--(l2b.center) -- (l3b.center) -- (l3c.center) --(l2d.center) to[bend right=20](l1''.center)--(l1'.center);
       \draw (l1'.center) to[bend right=40](l2a.center)--(l3a.center)--(l3e.center)--(l2e.center) to[bend right=40](l1''.center)--(l1'.center);

       \draw (l1'.center) to[bend right=30] (l2b'.center) -- (l3a.center) -- (l3d.center) -- (l2d.center) to[bend right=20] (l1''.center)--(l1'.center);

       \draw (l1'.center) -- (l2b.center) -- (l3b.center) -- (l3c.center) --(l2d.center);
     \end{pgfonlayer}
   \end{tikzpicture}

  \caption{Three \wfsf{}s\iflong{} according to \cref{def-wfsf}\fi, one satisfying \cref{wfsf4a}, one satisfying \cref{wfsf4b} with $v^*=v^\dag$, and one satisfying \cref{wfsf4b} with $v^* \neq v^\dag$.  Dashed lines are the layers, solid lines are the edges, and shaded areas in each drawing are the separators, where different separators are distinguished using different shades.}
  \label{fig:seps}
\end{figure}
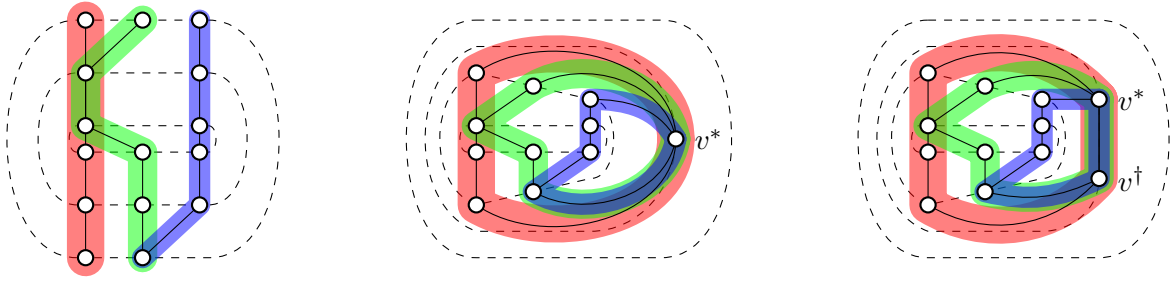

In this section, we introduce \wfsf{}s, state our main structural contribution, and compare it to results of similar nature in the literature.  Moreover, we introduce the gluing operation that \wfsf{}s are amenable to.

The separators in a \wfsf{} all have the same number of vertices and are either all induced paths or all induced cycles. Moreover, the separators stretch along consecutive layers of the $r$-outerplanar graph such that each separator contains at most two vertices from each layer and there is a one-to-one layer-correspondence between the vertices of the separators in the sequence (see \cref{fig:seps} for illustration).

\begin{definition}[\Wfsf]\label{def-wfsf}
  Let~$G=(V,E)$ be a graph with a fixed plane embedding with layers~$L_1,\dots, L_r$. A \emph{\wfsf{}} of \emph{length~$\numseps$} and \emph{width~$\sepsize$} for~$G$ is a sequence $(\A_1,S_1,\B_1), \allowbreak\dots, (\A_\numseps,S_\numseps,\B_\numseps)$ satisfying the following properties:
  \begin{compactdesc}
  \item[Linear Separation:] \leavevmode For each $i \in \{1, \ldots, \numseps\}$,
    \begin{compactenum}[(i)]
    \item\label[property]{wfsf1} $V=\A_i\cup \B_i$,
    \item\label[property]{wfsf1'} there is no edge between $\A_i\setminus \B_i$
      and $\B_i\setminus \A_i$,
    \item\label[property]{wfsf3} $\Sep_i=\A_i\cap\B_i$, $|\Sep_i|=p$, and
    \item\label[property]{wfsf2} $\A_i\subsetneq \A_{i+1}$ and $\B_i\supsetneq \B_{i+1}$.
    \end{compactenum}
  \item[Simple Shape:]\leavevmode
    \begin{compactenum}[(i)]
      \setcounter{enumi}{4}
    \item\label[property]{wfsf4} One of the following two conditions holds:
      \begin{compactenum}
      \item\label[property]{wfsf4a} for all $i \in \{1, \ldots, \numseps\}$, vertex set~$S_i$ induces a path~$(v_{i,1},\dots,v_{i,\sepsize'})$
        with $v_{i,1},v_{i,\sepsize'}\in L_1$ (in this case, $p'=p$); or
      \item\label[property]{wfsf4b} $L_1 \subseteq \A_1$ and, for all $i \in \{1, \ldots, \numseps\}$, vertex set~$S_i$ induces
        a cycle~$(\vcyca,v_{i,1},\dots,v_{i,p'},\vcycb)$, where $\vcyca$ and~$\vcycb$ are on the layer of minimum index that intersects~$S_i$ and, possibly, $\vcyca=\vcycb$ (in this case, $p'\in\{p-1,p-2\}$).
      \end{compactenum}
    \item\label[property]{wfsf4c} For $1\leq i,j\leq\numseps{}$ and $1\leq
      k,\ell \leq\sepsize'$, if~$v_{i,k}=v_{j,\ell}$, then~$k=\ell$.
    \end{compactenum}
  \item[Layering:] \leavevmode
    \begin{compactenum}[(i)]
      \setcounter{enumi}{6}
    \item\label[property]{wfsf7'} $S_i$~contains at most two vertices from each layer of $G$, and
    \item\label[property]{wfsf7} for $1\leq i,j\leq\numseps{}$ and $1\leq
      k\leq\sepsize{}'$, vertex~$v_{i,k}$ and vertex~$v_{j,k}$ are on the same layer.
    \end{compactenum}
  \end{compactdesc}
\end{definition}
\noindent Our main structural contribution in this paper is the following theorem, proved in \cref{sec:algo}.

\begin{theorem}\label[theorem]{sepseq}
  Any $\layers$-outerplanar triangulated disk with \nVer{}~vertices contains a \wfsf{} of length at least~$\lfloor\sqrt[2r]{\log(n)}/6^\layers\rfloor$ and width at most~$2\layers$.
\end{theorem}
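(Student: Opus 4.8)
The plan is to prove \cref{sepseq} by induction on the number~$\layers$ of layers, peeling off the outer cycle~$L_1$ of the triangulated disk~$G$ at each step. Throughout, ``producing a \wfsf{}'' means exhibiting a chain of separators that are all induced paths (as in \cref{wfsf4a}) or all induced cycles (as in \cref{wfsf4b}), all of the same width, and that additionally stay aligned across the layers in the sense of \cref{wfsf7,wfsf4c}; the gluing compatibility discussed in \cref{sec:wfsf} is not needed for this theorem, only existence, which buys a lot of slack.

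\emph{Base case $\layers=1$.} Then $G$ is a triangulated polygon with boundary cycle~$L_1$, and a width-$2$ \wfsf{} is precisely a chain of nested diagonals: every diagonal $\{a,b\}$ is an induced path that separates~$G$, and two consecutive diagonals of the chain bound a common triangle, so they share a vertex and yield the strict inclusions $\A_i\subsetneq\A_{i+1}$ and $\B_i\supsetneq\B_{i+1}$. The weak dual of~$G$ is a tree on $\nVer-2$ nodes of maximum degree~$3$ (a triangle has only three sides), hence it contains a path with at least $\log_2(\nVer-2)$ edges; the diagonals crossed by that path form a chain of this length. To upgrade the chain to a genuine \wfsf{} one still needs \cref{wfsf4c}, i.e.\ a $2$-colouring of the involved vertices under which every chosen diagonal is bichromatic: either the intersection graph of the diagonals along the path is itself a path, hence bipartite, and such a colouring exists outright; or some vertex lies on many of these diagonals, which then form a fan and directly give a \wfsf{} with that vertex always in position~$1$. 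Since $\log_2(\nVer-2)$ dwarfs $\sqrt{\log\nVer}/6$, a crude sub-selection meets the stated bound with room to spare.

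\emph{Inductive step $\layers\ge 2$.} Let $C_1,\dots,C_q$ be the components of $G-L_1$; each is, after triangulating any non-triangular inner face (which cannot raise the outerplanarity under the fixed embedding), an $(\layers-1)$-outerplanar triangulated disk whose outer cycle is $L_2\cap C_j$. If some~$C_j$ is sufficiently large, apply the induction hypothesis inside~$C_j$ and then extend every separator outward by one layer: because every vertex of~$L_2$ has a neighbour in~$L_1$, prolong the two endpoints of a path-separator (respectively attach the vertex~$\vcyca$ of a cycle-separator) to vertices of~$L_1$ selected from the embedding so that the resulting path or cycle also cuts~$L_1$ on the correct side and hence separates all of~$G$. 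This adds at most two vertices per separator and one new ``layer-$1$'' position common to all of them, so the width stays at most~$2\layers$, while \cref{wfsf7} and the linear-separation and simple-shape properties persist (the inclusions among the $\A_i$ are inherited and $L_1$ ends up on one side throughout); the only property that can break is \cref{wfsf4c}, when two of the freshly added $L_1$-vertices coincide in different positions, and this is repaired by dropping to a subsequence. If instead no component is large, then (using that the interior of~$G$ is fully triangulated) $L_1$ is large and carries~$\Omega(\nVer)$ chords, or else contains a high-degree ``fan'' vertex; in either situation the annulus between $L_1$ and $L_2$ supplies many pairwise non-crossing nested separators of bounded width (a chord of~$L_1$, or a short path through one $L_2$-vertex), whose number grows with~$\nVer$, and the weak-dual argument of the base case applied to~$L_1$ with its chords delivers the length, the common shape being enforced by the same bipartition/fan dichotomy.

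\emph{Quantitative bound and main obstacle.} In each case the length guaranteed with~$\layers$ layers is, up to the sub-selection losses incurred when restoring \cref{wfsf4c} and the cross-component alignment, a fixed root of the length guaranteed with~$\layers-1$ layers; iterating from the base value $\Theta(\log\nVer)$ through $\layers$ levels, each level costing at most a constant factor~$6$ and one further exponent in the root, yields $\lfloor\sqrt[2\layers]{\log\nVer}/6^{\layers}\rfloor$, and the width increases by at most two per peeled layer from the base width~$2$. The genuinely delicate part is not this counting but the simultaneous bookkeeping of all the properties in \cref{def-wfsf} under the outward-extension step: one must choose the new $L_1$-endpoints so that each enlarged set is still a minimal separator lying strictly between its neighbours in the chain (which means reading the cyclic order of $L_1$-neighbours off the embedding), one must enforce the one-to-one layer correspondence of \cref{wfsf7} together with \cref{wfsf4c} \emph{for every separator at once} (this is what forces the wasteful, slowly growing bound), and, in the no-large-component case, one must merge structures coming from different components of $G-L_1$ into a single aligned chain. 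The construction in \cref{sec:algo} carries all of this out explicitly and in polynomial time.
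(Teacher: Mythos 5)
Your overall plan---induction on~$r$, peeling off $L_1$, extending an inner separator sequence outward, and treating the remaining case by separators of width two or three living near $L_1$---is essentially the paper's route, but two of your steps have genuine gaps. First, you apply the induction hypothesis to the \emph{components} of $G-L_1$; a component of $G-L_1$ need not be biconnected (it can be a tree or have cut vertices), so its outer face is not a simple cycle and it is not a triangulated disk, and triangulating inner faces does not repair this. The paper instead works with a large \emph{block} of $G-L_1$, which is a triangulated disk. Second, and more seriously, your outward-extension step is missing its key ingredient. When the inner separators are paths and you prolong both ends into $L_1$, the result need not be an induced path: the chosen $L_1$-vertex may be adjacent to both $L_2$-endpoints (a \cv{}), or the two chosen $L_1$-vertices may be adjacent (a \cp{}), and then the separator must be of cycle type; moreover \cref{wfsf4b} forces \emph{all} cycle separators of the sequence to pass through the same $\vcyca$ (and~$\vcycb$), and \cref{wfsf4} forces the whole sequence to be uniformly of path type or of cycle type. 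You only mention index collisions violating \cref{wfsf4c} and propose to ``drop to a subsequence'', but without a bound on how much survives this is vacuous---a priori almost the entire sequence could be discarded. The paper's fix is precisely \cref{lem:3cycles,lem:3cycles1}: for the whole inner sequence there are at most two distinct cycle-vertices and at most two distinct \cp{}s, so after splitting into three types a single common apex serves at least half of the relevant subsequence; this is where the controlled loss factor~$6$ in \cref{cons:3} comes from. Your ``constant factor $6$ per level'' is asserted, not derived.

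The remaining case is also much thinner than it needs to be. ``Many chords of $L_1$ or a high-degree fan vertex'' is roughly the right intuition, but the separators one obtains are order-two separations (chords) and order-three separations (paths $u$--$v$--$w$ with $v\notin L_1$), and these cannot simply be mixed: \cref{wfsf3,wfsf4} require uniform width and shape, and---the real sticking point---an order-three separation may be \emph{triangular} ($u$ adjacent to $w$), so it induces a triangle rather than an induced path, and a long run of such triangles fits \cref{wfsf4b} only if the triangles share a common base edge. Handling this is the bulk of the paper's argument (\cref{next-small-sep,cons-smallseps,small-seps}, with the potential~$q$, bases of triangular separations, and hinged/homogeneous subsequences), and it is also what produces the weak length bound $\sqrt{(\log_\ell k+1)/2}-1$ that ultimately drives the $\sqrt[2r]{\log n}$ in the theorem; your proposal does not engage with any of it. Two smaller points: if the inner separators are cycles you should keep them unchanged and merely add $L_1$ to the $A$-sides rather than attach $\vcyca$ to $L_1$; and in your base case the dichotomy ``the intersection graph of the diagonals is a path, or they form a fan'' is not exhaustive, although the fact you actually need---that the diagonals crossed by a path in the weak dual form a forest and hence admit a consistent position labelling---is true and provable.
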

\noindent\looseness=-1 There are several well-known approaches for constructing separators for planar/$\layers$-outerplanar graphs satisfying some of the properties of~\wfsf{}s. For example, an $\layers$-outerplanar graph $G$ has treewidth at most~$3\layers-1$ and branchwidth at most $2\layers$~\citep{Biedl15}, and thus, we can construct separator families that satisfy \crefrange{wfsf1}{wfsf3} and have width~$3\layers$ or $2\layers$ from the respective tree or branch decompositions: each bag of a tree decomposition for~$G$ is a separator for~$G$, and each edge in a branch decomposition corresponds to a separator.  Moreover, since branch decompositions are trees of bounded degree, there is an arbitrarily long path in a branch decomposition of a sufficiently large graph, and thus, an arbitrarily long sequence of separators additionally satisfying \cref{wfsf2}. However, arbitrarily large subsequences satisfying our key \crefrange{wfsf4}{wfsf7} may not be extracted from a tree/branch decomposition of $G$.  %

\looseness=-1 \emph{Layered separators}~\cite{Duj15,DMW13} yield, for  $\layers$-layer embeddings of sufficiently large graphs, arbitrarily long separator sequences of bounded width that satisfy \cref{wfsf2,wfsf7'}.  %
The ones of \citet{Duj15} yield a sequence satisfying \cref{wfsf1,wfsf3,wfsf2,wfsf7'}, but only a weaker variant of \cref{wfsf1'}, namely, that there is no edge between $(A_i\cap B_{i-1})\setminus B_i$ and $B_i\setminus (A_i\cap B_{i-1})$.  That is, each separator~$S_i$ is a separator for~$G[B_{i-1}]$ but not necessarily for~$G$. The (slightly different) ones of \citet*{DMW13} yield a sequence satisfying \cref{wfsf1,wfsf1',wfsf2,wfsf7'}, if one changes~\cref{wfsf3} so that $S_i=(A_i\cap B_i)\setminus A_{i-1}$.  That is, the separator~$A_i\cap B_i$ might use more than two vertices of a layer if these vertices are in~$A_{i-1}$.
Neither variant of the layered separators in~\cite{Duj15,DMW13} satisfies the key \cref{wfsf4,wfsf7} of \wfsf{}s. %

\paragraph{Gluing separators of a \wfsf{}.}
\label{sec:gluing}
In the following, we show a property of \wfsf{}s exploited in our algorithm for \PS. Consider the following operation for a given \wfsf:\iflong{} Pick two arbitrary separators~$(A_i,S_i,B_i)$ and~$(A_j,S_j,B_j)$ in the sequence; remove everything in the graph that is contained ``between'' the separators, that is, keep only~$A_i\cup B_j$; and glue the two separators~$S_i$ and~$S_j$ by identifying their vertices. %
\fi
\begin{definition}[Gluing]\label[definition]{def:glue} Let~$G$ be an $r$-outerplanar triangulated disk, and let~$T_i=(A_i,S_i,B_i)$
  and~$T_j=(A_j,S_j,B_j)$, $i<j$, be two separators of a \wfsf{} of width~$p$ for~$G$. We define $G({T_i\glue T_j})$ to be the graph obtained by taking the disjoint union of~$G[A_i]$ and~$G[B_j]$ and identifying each~$v_{i,k}$ in~$S_i$ with~$v_{j,k}$ from~$S_j$, for $k=1, \ldots, p$.
\end{definition}

\noindent\iflong{}As we show below,\else{}One can show that\fi{} \wfsf{}s behave nicely with respect to the gluing operation in the sense that the resulting graph is
again~$\layers$-outerplanar.
\begin{lemma}\label{lem:glue-outerp}
$G(T_i\glue T_j)$ is~$\layers$-outerplanar.
\end{lemma}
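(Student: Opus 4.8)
The plan is to produce an explicit embedding of $G(T_i \glue T_j)$ with at most $r$ layers by splicing together the fixed embeddings of $G[A_i]$ and $G[B_j]$ inherited from $G$. First I would fix the plane embedding of $G$ with layers $L_1,\dots,L_r$. Since $T_i=(A_i,S_i,B_i)$ satisfies \cref{wfsf1,wfsf1',wfsf3}, the separator $S_i$ disconnects $A_i\setminus B_i$ from $B_i\setminus A_i$; in the plane embedding, $S_i$ (being an induced path or induced cycle by \cref{wfsf4}) traces out a curve (an arc from $L_1$ to $L_1$ in case \cref{wfsf4a}, or a closed curve through the minimum-index layer it meets in case \cref{wfsf4b}) that separates the plane, with $G[A_i]$ drawn entirely on one side (together with $S_i$ on the boundary) and $G[B_i]$ on the other. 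The same holds for $S_j$. Because $i<j$, property \cref{wfsf2} gives $A_i\subsetneq A_j$ and $B_i\supsetneq B_j$, so $G[A_i]$ and $G[B_j]$ occupy disjoint regions of the embedding separated by both curves. The key point is that the cyclic order in which $S_i$ meets the layers, read from its boundary, matches that of $S_j$: by \cref{wfsf7} vertex $v_{i,k}$ and $v_{j,k}$ lie on the same layer, by \cref{wfsf7'} each separator meets each layer in at most two vertices, and by \cref{wfsf4} both are induced paths/cycles with endpoints (or the distinguished vertex $v^*$, $v^\dag$) on the extreme layer. Hence $S_i$ and $S_j$ are "parallel" curves with combinatorially identical layer-incidence patterns.

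Next I would describe the gluing geometrically: take the region $R_i$ of the embedding bounded by $S_i$ containing $G[A_i]$, take the region $R_j$ bounded by $S_j$ containing $G[B_j]$, and glue $R_i$ to $R_j$ along their boundary curves by the boundary identification $v_{i,k}\mapsto v_{j,k}$, which is well-defined and a homeomorphism of the two boundary arcs/cycles because of the matching layer patterns (and because \cref{wfsf4c,wfsf7} ensure the identification is consistent: equal indices correspond and only equal indices can be identified). The result is a planar embedding of $G(T_i\glue T_j)$. It remains to bound the number of layers. I would argue that the layer structure is preserved: a vertex $v\in A_i$ that lay on layer $L_\ell$ of $G$ still lies on layer $L_\ell$ (or a lower-indexed one) in the glued embedding, and similarly a vertex $v\in B_j$ on layer $L_{\ell'}$ in $G$ — since $B_j\subseteq B_i$ and everything in $B_j$ was "surrounded" in $G$ by the part of $G$ between $S_i$ and $S_j$, which we have removed — can only move to a lower-indexed layer, never a higher one. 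The crucial consistency check is at the seam: the identified separator $S_i=S_j$ sits on layers $L_{a},\dots$ (the layers it met in $G$), and on neither side do we introduce new vertices outside these layers, so no vertex is pushed past layer $r$. Formally I would verify by induction on the layer index, peeling off $L_1$: the outer face of $G(T_i\glue T_j)$ is contained in $L_1(G[A_i])\cup (L_1(G[B_j])\setminus \text{interior part})$, using that $L_1\subseteq A_1\subseteq A_i$ in case \cref{wfsf4b} (so all of the original outermost cycle on the $B_j$-side that survives is genuinely outer), and then $G(T_i\glue T_j)$ minus its first layer is obtained by the same gluing from $G$ minus its first layer, which is $(r-1)$-outerplanar.

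The main obstacle I expect is the last step: rigorously controlling how layers behave under the gluing, in particular ruling out that some vertex of $G[B_j]$ ends up on a layer of larger index after removing the chunk between the separators. Intuitively removing material only moves vertices outward (to smaller layer index), but making this precise requires care — one must show the peeling recursion is compatible with gluing, i.e. that $(G(T_i\glue T_j))$ with its first layer removed equals $(G$ with its first layer removed$)$ glued along the truncated separators, and that the truncated $(A_i,S_i,B_i)$, $(A_j,S_j,B_j)$ still form (the relevant portion of) a well-formed separator sequence for the smaller graph — here properties \cref{wfsf7',wfsf7} are what keep the induction going, since they guarantee each separator loses exactly its (at most two) layer-$1$ vertices in lockstep on both sides. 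The distinction between cases \cref{wfsf4a} and \cref{wfsf4b} needs to be handled separately in this induction, since in case \cref{wfsf4b} the separator is a cycle enclosing $G[B_j]$ and one must track the distinguished vertices $v^*,v^\dag$ across the peeling; the condition $L_1\subseteq A_1$ is exactly what prevents a degenerate situation on the $B_j$ side.
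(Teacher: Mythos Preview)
Your planarity argument---splicing the inherited embeddings of $G[A_i]$ and $G[B_j]$ along the matching boundary curves $S_i\sim S_j$---is essentially the paper's approach; the paper just makes the topological gluing explicit by taking the disjoint union, adding the edges $\{v_{i,k},v_{j,k}\}$ one at a time inside a common face, and then contracting them. One point you gloss over, and which the paper proves using the triangulated-disk hypothesis: it is not automatic from \cref{wfsf1'} alone that the curve traced by $S_i$ has $A_i\setminus S_i$ entirely on one side and $B_i\setminus S_i$ on the other. The paper argues that if both sides of the curve contained vertices of each, some inner face (a triangle) would witness an edge between $A_i\setminus S_i$ and $B_i\setminus S_i$.

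Where you genuinely diverge from the paper is in the $r$-outerplanarity argument. The paper does not peel layers. Instead it uses that $G(T_i\glue T_j)$ is again a triangulated disk and invokes Biedl's characterization: a vertex lies on layer $q$ iff its shortest path to $L_1$ has length $q-1$. One then checks directly that every shortest path in $G$ from $v\in A_i$ (or $v\in B_j$) to $L_1$ can be rerouted inside $G[A_i]$ (resp.\ through $S_j\sim S_i$ and then $G[A_i]$), so distances to $L_1$ do not increase under gluing. This is short and avoids any induction.

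Your peeling induction, by contrast, has a real obstacle that you flag but underestimate. For the step ``$(G(T_i\glue T_j))$ minus its first layer equals $(G-L_1)$ glued along the truncated separators'' to go through as a genuine induction, you need the truncated path $S_i\setminus L_1$ to again have both endpoints on the outer layer of $G-L_1$. The \wfsf{} axioms do not give this directly: \cref{wfsf7',wfsf7} say which layers are hit and that $S_i,S_j$ hit them in lockstep, but they do \emph{not} assert that the path visits layers monotonically (down then back up), so after deleting the two $L_1$-endpoints the new endpoints need not lie in $L_2$. Establishing monotonicity requires an extra argument using the triangulated-disk structure (essentially the same Biedl fact the paper uses), at which point the shortest-path route is more direct. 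Moreover, $G-L_1$ need not be a single triangulated disk, so the inductive hypothesis must be reformulated for the relevant block containing the truncated separators; this adds further bookkeeping that the paper's approach sidesteps entirely.
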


\begin{proof}First, observe that if~$S_i$ is trivial in the sense
  that~$A_i=S_i$, then the lemma holds trivially since the gluing
  operation degenerates to taking a subgraph of~$G$. By symmetry, the
  same holds if~$S_j=B_j$.
  \iflong{}%
  If~$S_i$ and~$S_j$ are nontrivial, then we distinguish two cases
  based on whether the two separators induce paths or cycles.
  \begin{case}[{$G[S_i]$ and~$G[S_j]$ are paths}]
    \else{}%
    It remains to treat nontrivial $S_i$ and~$S_j$. We show the lemma for the case where $G[S_i]$ and~$[S_j]$ are paths and defer the case where they are cycles to a full version of this article.\par
    \fi{}
    Consider the fixed embedding of~$G$. Since~$S_i$ is an induced path with its two endpoints in~$L_1$ and no other vertices in~$L_1$, it separates the region enclosed by~$L_1$ into two regions that intersect only in the vertices and edges of~$G[S_i]$.  Towards a contradiction, assume that one of these two regions contains vertices from both~$A_i\setminus S_i$ and~$B_i\setminus S_i$. Then, since $G[S_i]$~is an \emph{induced} path, there is a face in this region that contains vertices from~$A_i\setminus S_i$ and~$B_i\setminus S_i$.  Since this face is a triangle, there is an edge between~$A_i\setminus S_i$ and~$B_i\setminus S_i$; this contradicts \cref{wfsf1'}. Thus, one of the two regions contains the vertices of~$A_i\setminus S_i$ and the other one contains the vertices of~$B_i\setminus S_i$. Therefore, deleting all vertices in the region containing~$B_i\setminus S_i$ gives an embedding of~$G[A_i]$ in which all vertices of~$S_i$ and all vertices of~$A_i\cap L_1$ lie on the boundary of the outer face.
  The same statement holds for~$G[B_j]$, that is, there is an
  embedding of~$G[B_j]$ such that all vertices of~$S_j\cup (B_j\cap
  L_1)$ lie on the outer face of~$G[B_j]$. Moreover, the same is true
  for the disjoint union~$G'$ of~$G[A_i]$ and~$G[B_j]$ (using
  translation, we can assume that the embedding of~$G[B_j]$ is
  strictly to the right of~$G[A_i]$). Now, for each~$\ell$, $1\le
  \ell\le p$, add the edge~$\{v_{i,\ell},v_{j,\ell}\}$. The resulting
  graph is planar: the edge~$\{v_{i,1},v_{j,1}\}$ is between different
  connected components of~$G'$; thus it can be added without
  destroying planarity. The resulting outer face either has a
  counterclockwise face walk which contains the subsequence $(v_{j,p},
  v_{j,p-1}, \ldots, v_{j,2}, v_{j,1}, v_{i,1}, v_{i,2}, \ldots,
  v_{i,p-1}, v_{i,p})$ or this situation can be achieved by suitable
  reflections of~$G[A_i]$ or $G[B_j]$ along the horizontal axis. Now,
  adding~$\{v_{i,2},v_{j,2}\}$ replaces the old outer face by two new
  faces; one of these faces has a counterclockwise face walk with the
  subsequence~$(v_{j,p}, v_{j,p-1}, \ldots, v_{j,3}, v_{j,2}, v_{i,2},
  v_{i,3}, \ldots, v_{j,p-1}, v_{j,p})$. Hence, $(v_{i,3},v_{j,3})$ can
  be added in the same way, again creating two new faces. This process
  can be repeated until finally the edge~$\{v_{i,p},v_{j,p}\}$ is
  added. The resulting graph is planar and by contracting each of
  the~$p$ edges added to~$G'$ we obtain again a planar graph. This
  graph is exactly~$G(T_i\glue T_j)$: after these contractions, the
  neighborhood of each~$v_{i,\ell}$ is exactly the union
  of~$N(v_{i,\ell})\cap (A_i\setminus S_i)$ and~$N(v_{j,\ell})\cap
  (B_j\setminus S_j)$ plus~$v_{i,\ell-1}$ and~$v_{i,\ell+1}$ if they
  exist. All neighborhoods in~$A_i\setminus S_i$ remain the same
  in~$G$ and the constructed graph, and all neighborhoods
  in~$B_j\setminus S_j$ remain the same except that~$v_{j,\ell}$ is
  replaced by~$v_{i,\ell}$ in each neighborhood.%

It remains to show $\layers$-outerplanarity. First, observe that the
vertices of~$A_i\cap L_1$ and of~$(B_j\setminus S_j)\cap L_1$ are on
the boundary of the outer face of~$G(T_i\glue T_j)$ (if it is embedded
as described above).  This also implies that, in~$G(T_i\glue T_j)$,
each vertex of~$S_i$ is in the same layer as in~$G$.  It remains to
show that also each vertex~$v$ of~$A_i$ is in the same layer as
in~$G$. %
To this end, we exploit that $G(T_i\glue T_j)$~is a triangulated disk and, thus, that a vertex~$v$ is in~$L_i$ if and only if a shortest path from~$v$ to~$L_1$ has length exactly~$i$ \citep{Biedl15}.

\looseness=-1 Take any path witnessing that~$v\in A_i$ is in layer~$L_q$ of~$G$. If this
path contains no vertex from~$S_i$, then this path is also present
in~$G[A_i]$. If this path contains some vertex~$v_{i,k}$ from~$S_i$,
then we may assume that all the vertices that come after~$v_{i,k}$ on
this path are also in~$S_i$ (there is a direct path from $v_{i,k}$ to
the outer face in~$S_i$). Therefore, this path is present in~$G[A_i]$,
and hence in~$G(T_i\glue T_j)$, still witnessing that~$v$ is in
layer~$L_q$. By symmetry, the same holds for vertices in~$B_j$.
\iflong{}
\end{case}

\begin{case}[{$G[S_i]$~and~$G[S_j]$ are cycles}]\looseness=-1
  Assume that~$v^*\neq v^\dag$ in the following; the proof for~$v^*= v^\dag$ is completely analogous. Assume furthermore that~$A_i\setminus S_i$ and~$B_j\setminus S_j$ are nonempty; otherwise, the claim is trivially fulfilled as the gluing operation degenerates to taking a subgraph of~$G$. Let~$C_i$ and~$C_j$ denote the cycles induced by~$S_i$ and~$S_j$. Both~$C_i$ and~$C_j$ divide the plane into two regions. Since~$C_i$ is an \emph{induced} cycle, the vertices in the unbounded region for~$C_i$ can be only from~$A_i\setminus B_i$:  By \cref{wfsf4b} of~\cref{def-wfsf}, we have~$L_1\subseteq A_i$. Moreover,~$L_1\setminus S_i\neq \emptyset$ since~$L_1$ contains at least three vertices. Thus, if this region contains a vertex from~$B_i\setminus S_i$, then there is a face containing vertices of~$B_i\setminus S_i$ and of~$A_i\setminus S_i$. This face is a triangle and thus there is an edge between~$B_i\setminus S_i$ and of~$A_i\setminus S_i$. This contradicts~\cref{wfsf1'}. Thus, all vertices of~$B_i\setminus S_i$ are contained in the region enclosed by~$C_i$. By the same argument there, there can be no vertex of~$A_i\setminus S_i$ in the region enclosed by~$C_i$. When using the embedding of~$G$ for~$G[A_i]$, this implies that there is one face such that the vertex set in its boundary is exactly~$S_i$. Similarly, for~$C_j$, the unbounded region contains all vertices of~$A_j$ and no vertices of~$B_j\setminus A_j$. This implies in particular that using the embedding of~$G$ for~$G[B_j]$, the vertex set in the boundary of the outer face is exactly~$S_j$.

  Consider now the disjoint union of~$G[A_i]$ and~$G[B_j]$ where the copies of~$\vcyca$ and~$\vcycb$ introduced by adding~$G[A_i]$ and~$G[B_j]$ are denoted~$v^*_i$, $v^*_j$, and $v^\dag_i$, $v^\dag_j$, respectively. Modify the embedding of~$G[B_j]$ so that all vertices of~$B_j$ lie in the face whose boundary is~$S_i$. Assume without loss of generality that, in the combinatorial embedding of~$G[A_i]$, the face whose boundary is~$S_i$ has a counterclockwise face walk~$(v^*_i,v_{i,1},\ldots ,v^\dag_i)$ and that the outer face for~$G[B_j]$ has a counterclockwise face walk~$(v^\dag_j,v_{j,p}, \ldots , v^*_j)$. Herein, observe that the orientation of the face walk is defined by the viewpoint of the face. Thus, the order of the indices is essentially the same for both cycles. If, initially, the order for~$G[B_j]$ is the reverse, then we can use the reflection along a vertical line of the original embedding of~$G[B_j]$ instead, which reverses the order of the outer face.

  Adding the edge~$\{\vcyca_i,v^*_j\}$ to this plane graph can be done without introducing a crossing, yielding a face with face walk~$(\vcyca_i,v_{i,1},\ldots ,\vcycb_i, \vcyca_i, v^*_{j}, v^\dag_j, v_{j,p}, \ldots , v_{j,1}, v^*_j)$. Then, adding the edge~$\{v_{i,1},v_{j,1}\}$ can be again done within this embedding without introducing a crossing and such that the resulting face has the face walk~$(v_{i,1},\ldots ,\vcycb_i, \vcyca_i, v^*_{j}, v^\dag_j, v_{j,p}, \ldots , v_{j,1})$. This process can be continued, that is, we add the edge~$\{v_{i,\ell},v_{j,\ell}\}$ for increasing~$\ell$, each time obtaining a face in which the edge~$\{v_{i,\ell+1},v_{j,\ell+1}\}$ can be added without separating any vertices with higher index from the face. This is done until, finally, the edge is~$\{\vcycb_i,v^\dag_j\}$ is added. The resulting graph is planar and has an embedding with the same outer face as~$G$ in its initial embedding. Contracting each of the edges added between~$G[A_i]$ and~$G[B_j]$ gives a planar graph. After the contraction, we rename~$\vcyca_i$ to~$\vcyca$ and~$\vcycb_i$ to~$\vcycb$. By the same arguments as for the path separators, the resulting graph is exactly~$G(T_i\glue T_j)$. Thus, $G(T_i\glue T_j)$ is planar and has an embedding such that the boundary of the outer face is~$L_1$.

  \looseness=-1 It remains to show the~$\layers$-outerplanarity
  of~$G(T_i\glue T_j)$. The layering of the cycle separator and the
  fact that~$\vcyca$ and~$\vcycb$ are on the lowest layer that
  contains vertices from~$S_i$ implies that any shortest path
  from~$\vcyca$ to~$L_1$ and from~$\vcycb$ to~$L_1$ is contained
  in~$G[A_i]$ and thus in~$G(T_i\glue T_j)$. Now consider a vertex~$v$
  from~$S_i\setminus \{\vcyca,\vcycb\}$. Again the layering implies
  that there is a shortest path from~$u$ to~$L_1$ that
  contains~$\vcyca$ or~$\vcycb$ and is contained in~$G[A_i]$: Starting
  from~$v$, visit a neighbor of the current vertex that is in~$S_i$
  and in a lower layer until~$\vcyca$ or~$\vcycb$ is reached, then
  take the shortest path from this vertex to~$L_1$ (which is contained
  in~$G[A_i]$ by the previous argument). This also implies that there
  is a shortest path from any vertex~$v\in A_i\setminus S_i$ to~$L_1$
  that is completely contained in~$G[A_i]$: If a shortest path
  from~$v$ to~$L_1$ does not contain vertices from~$S_i$, then this is
  trivially true. Otherwise, a shortest path from~$v$ to~$L_1$
  contains a path from~$v$ to some vertex~$u$ in~$S_i$ that is
  completely contained in~$G[A_i]$. Then, concatenating this path with
  a shortest path from~$u$ to~$L_1$ that is completely contained
  in~$G[A_i]$ gives a shortest path from~$v$ to~$L_1$ that is
  in~$G[A_i]$. Finally, for each vertex~$v$ of~$B_j\setminus A_i$,
  observe first that, since~$L_1\subseteq A_j$, every path from~$v$
  to~$L_1$ contains a vertex of~$S_j$. Thus, let~$u$ denote the first
  vertex of~$S_j$ on a shortest path from~$v$ to~$L_1$. The shortest
  path from~$v$ to~$u$ is contained in~$G[B_j]$ and there is a
  shortest path from~$u$ to~$L_1$ which is contained in~$G[A_i]$. Both
  subpaths are contained in~$G(T_i\glue T_j)$. Thus, the distance of
  each vertex to a vertex in~$L_1$ is at least as large in~$G$ as it
  is in~$G(T_i\glue T_j)$. Hence, $G(T_i\glue T_j)$
  is~$\layers$-outerplanar. \qedhere
  \end{case}
\fi{}
\end{proof}

\section{Application: A problem kernel for Planar Support}
\label{sec:application}
\newcommand{\suppsize}{\ensuremath{2^{2^{2r(\nEd{}\cdot (r^2+r+1))}\cdot 6^{2r^2}}}}
\newcommand{\sepnum}{\ensuremath{2^{\nEd{}(\layers^2+\layers+1)}}}
We now use the existence of long \wfsf{}s to give a problem kernel for \PS{}. %
Assume that the hypergraph has an $\layers$-outerplanar support. Observe that, whenever it is convenient, we can assume that this~$\layers$-outerplanar support is a triangulated disk: triangulating interior faces and adding edges to make~$L_1$ a cycle does not increase the outerplanarity of the graph and also does not destroy the support property. Clearly, we have the desired problem kernel if $n$~can be bounded in terms of~$\nEd{}$ and~$\layers$. Otherwise, if~$\nEd{}, \layers \ll n$, then, by~\cref{sepseq}, there exists a \wfsf{} that is long in comparison with~$\nEd{}$. In this case, intuitively speaking, for at least two separators in this sequence, their ``status'' must be the same with respect to the hyperedges of~$\Hyp$ crossing them. These two separators can be glued resulting in a new graph. This new graph is not a support for~$\Hyp$ since it has less vertices. The missing vertices, however, can be ``redrawn'' to obtain an $r$-outerplanar support for~$\Hyp$\todo{ms: IFTITE there should be some more high-lvl description here.}. We formalize next the concepts discussed above.

\begin{definition}[Representative support]\label[definition]{def:representative}
  We call a graph~$G=(V,E)$ a \emph{representative support} of a hypergraph
  $\Hyp = (\Ver,\Ed)$ if
    every vertex~$u\in D:=\Ver\setminus V$ is covered by some
    vertex~$v\in V$, and
    $G$ is a support for~$\Hyp-D$.
\end{definition}
\noindent\looseness=-1 We call an~$\layers$-outerplanar support of a hypergraph~$\Hyp$ a \emph{solution}, and a representative~$\layers$-outerplanar support a \emph{representative solution} for $\Hyp$. Using~\cref{sepseq}, we now show that the size of a smallest representative solution can be upper-bounded by a function of the number~$\nEd{}$ of hyperedges of~$\Hyp$ plus the outerplanarity~$\layers$ of a solution. To this end, we first formally define the notion of two separators having the same status with respect to the hyperedges that cross the separators. To simplify the definition, we assume that, in the case of cycle separators, the vertices~$\vcyca$ and~$\vcycb$ also have indices, that is, for all~$i$, if $\vcyca = \vcycb$ then we set $\vcyca:=v_{i,p}$ and otherwise $\vcyca:=v_{i,p}$ and $\vcycb:=v_{i,p - 1}$.
\begin{definition}[Separator signature]\label[definition]{def:sepsig}\looseness=-1 Let~$(A_1,S_1,B_1), \ldots , (A_t,S_t,B_t)$ be a \wfsf{} of width~$p$ of a planar graph~$G=(V,E)$ that is a representative support for a hypergraph~$\Hyp = (\Ver,\Ed)$. The \emph{signature} of a separator~$S_i$ in this sequence is a triple~$(\Gamma_i,\phi_i,\Pi_i)$, where
  \begin{compactitem}
  \item $\Gamma_i:=\{[u]_\R\mid u\in A_i \}$ is the set of twin classes of~$A_i$,
  \item $\phi_i: \{1,\ldots,p'\}\to \{[v_{i,j}]_\R\mid u\in \Ver \},j\mapsto[v_{i,j}]_\R$
    maps each index of a vertex in~$S_i$ to
    the twin class of that vertex, and
  \item
    $\Pi_i:=\left\{(e,j,\ell)\mathrel{}\middle\vert\mathrel{}\parbox{12cm}{$e\in \Ed\wedge j<\ell\wedge v_{i,j},v_{i,\ell}\in e\wedge v_{i,j}$ and $v_{i,\ell}$ are in the same connected component of $G[B_i\cap e]$} \right\}.$
    \end{compactitem}
\end{definition}
\iflong{}\noindent Observe that, in \cref{def:sepsig}, $G$~is a representative support for~$\Hyp$, and hence, $V$~does not necessarily contain all vertices of~$\Hyp$. Moreover, the number of distinct separator signatures of a \wfsf{} is upper-bounded by a function of~$p$ and~$\nEd{}$: There are at most~$2^{\nEd{}} - 1$~twin classes in~$\Hyp$.  Furthermore, for $i<j$, we have~$A_i\subset A_j$, which implies $\Gamma_i\subseteq \Gamma_j$. Thus, either $\Gamma_i=\Gamma_{i+1}$ or $\Gamma_{i+1}$~has at least one additional twin class.  Since the number of twin classes can increase at most~$2^{\nEd{}} - 2$ times, the number of different~$\Gamma_i$ is less than~$2^{\nEd{}}$. Next, there are at most~$2^{\nEd{}}$ choices for a twin class for each~$v_{i,j}\in S_i$, leading to at most $2^{{\nEd{}p}}$ different possibilities. For the last part of the signature, we have~$\nEd{}\cdot (p^2-p)/2$ different triples, and $\Pi_i$~is an element of the power set of this set of triples. Since~$p\le 2r$, we have the following upper bound on the number of possible signatures:
\begin{observation}\label{obs:sig-num}Every \wfsf{} of a representative solution has less than \sepnum{}~different separator signatures.
\end{observation}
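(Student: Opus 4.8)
The plan is to bound the number of signatures that can occur along a single \wfsf{} by bounding, independently, the number of possibilities for each of the three coordinates $\Gamma_i$, $\phi_i$, $\Pi_i$ of a signature, and then multiplying the three bounds. The only global fact about the hypergraph needed is that $\Hyp$ has at most $2^{\nEd{}}-1$ twin classes, since each twin class is determined by the nonempty set $\Ed(u)\subseteq\Ed$ that its members realize.

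First I would handle $\Gamma_i$, where the crucial point is monotonicity: by \cref{wfsf2} we have $A_1\subsetneq A_2\subsetneq\cdots\subsetneq A_\numseps$, hence $\Gamma_1\subseteq\Gamma_2\subseteq\cdots\subseteq\Gamma_\numseps$, so the sets $\Gamma_i$ form a chain of subsets of the set of all twin classes. A chain of subsets of an $N$-element set has at most $N+1$ members, so with $N\le 2^{\nEd{}}-1$ the coordinate $\Gamma_i$ takes fewer than $2^{\nEd{}}$ distinct values along the sequence; this is the step where one avoids the naive double-exponential bound on arbitrary sets of twin classes. Next, $\phi_i$ is a function from $\{1,\dots,p'\}$ into the set of at most $2^{\nEd{}}$ twin classes, so there are at most $(2^{\nEd{}})^{p'}\le 2^{\nEd{}p}$ possibilities for it. Finally, $\Pi_i$ is an arbitrary subset of the set of triples $(e,j,\ell)$ with $e\in\Ed$ and $1\le j<\ell\le p'$, of which there are at most $\nEd{}\binom{p}{2}$, so there are at most $2^{\nEd{}\binom{p}{2}}$ possibilities for $\Pi_i$. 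Multiplying the three bounds gives at most $2^{\nEd{}(1+p+\binom{p}{2})}$ distinct signatures, and substituting the width bound $p\le 2\layers$ from \cref{sepseq} and simplifying yields the stated bound \sepnum{}.

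I do not expect a genuine obstacle here: the statement is a routine counting argument once the correct universe for each of the three coordinates is identified. The two steps that need a little care are (a) exploiting the chain structure of the $\Gamma_i$ so that the first coordinate contributes only a single-exponential factor in $\nEd{}$ rather than the double-exponential factor one would obtain by treating $\Gamma_i$ as an arbitrary set of twin classes, and (b) tracking the dependence on $p$ before plugging in $p\le 2\layers$ so that the final exponent has the claimed form. One should also note that, since $G$ is only a \emph{representative} support, $V$ need not be all of $\Ver$; this is harmless, because a signature refers only to the twin classes and hyperedges of $\Hyp$ and to the vertices that actually appear in the separators, all of which still have well-defined twin classes in $\Hyp$.
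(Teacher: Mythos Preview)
Your proposal is correct and mirrors the paper's own argument essentially line for line: bound the three coordinates separately, exploit the chain $\Gamma_1\subseteq\Gamma_2\subseteq\cdots$ to get fewer than $2^{\nEd{}}$ values for the first coordinate, count functions for $\phi_i$ and subsets for $\Pi_i$, multiply, and substitute $p\le 2r$. One minor caveat worth noting when you write it up: the arithmetic $1+p+\binom{p}{2}$ with $p=2r$ gives $2r^2+r+1$, not the $r^2+r+1$ appearing in the macro \sepnum{}; this looks like a small slip in the paper's constant and is immaterial for the FPT claim, but you should not assert that the simplification literally ``yields'' \sepnum{} without checking.
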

\else
\noindent The main observation for the following statement is that the number of different separator sequences is upper-bounded in a function of~$\nEd{}$ and~$r$.
\fi
\begin{lemma}\label{lem:supp-size}
  If a hypergraph $\Hyp = (\Ver,\Ed)$ has a solution, then it has a
  representative solution with at most $\suppsize$ vertices.
\end{lemma}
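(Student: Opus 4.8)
The plan is to start with an arbitrary solution $G$ for $\Hyp$, assume without loss of generality (as noted in the text) that $G$ is an $r$-outerplanar triangulated disk, and then repeatedly apply the gluing operation to shrink $G$ until its size is bounded. Concretely, suppose $G$ has more than $\suppsize$ vertices. The key quantitative estimate I would establish first is that a graph on $n$ vertices with $n > \suppsize$ must, by \cref{sepseq}, contain a \wfsf{} of length $t \ge \sepnum{}$ and width $p \le 2r$; one just needs to check that $\lfloor\sqrt[2r]{\log n}/6^r\rfloor \ge \sepnum{}$ holds for $n > \suppsize$, which is the reason for the triple-exponential form of the bound. By \cref{obs:sig-num} (in the long version) the number of distinct separator signatures is strictly less than $\sepnum{}$, so by pigeonhole there are two indices $i < j$ in the sequence with $(\Gamma_i,\phi_i,\Pi_i) = (\Gamma_j,\phi_j,\Pi_j)$.

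The heart of the argument is then to show that $G' := G(T_i \glue T_j)$, together with the vertices of $D := \Ver \setminus V(G')$ declared ``redrawn'', is again a representative solution. By \cref{lem:glue-outerp}, $G'$ is $r$-outerplanar, and it is a triangulated disk by construction, so the outerplanarity constraint is preserved; since $j > i$ the gluing strictly decreases the vertex count, so iterating this terminates. It remains to verify the two conditions of \cref{def:representative}. For the covering condition: the vertices removed by gluing are those in $(B_i \cap A_j) \setminus S_j$ together with possibly some vertices of $D$ already outside $V$; I would show that $\Gamma_i = \Gamma_j$ forces every twin class appearing among the removed vertices to still have a representative surviving in $A_i \subseteq V(G')$ — this is exactly what equality of the sets of twin classes of $A_i$ and $A_j$ buys us (and $\phi_i = \phi_j$ guarantees the identified separator vertices $v_{i,k} = v_{j,k}$ are in the same twin class, so the identification is consistent with $\R$). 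For the support condition: for each hyperedge $e$, I need $G'[e \cap V(G')]$ to be connected (after deleting the newly-out-of-$V$ vertices, which are covered). Here I split $e$ into its part inside $A_i$, its part inside $B_j$, and argue that the connectivity across the seam is controlled by $\Pi_i = \Pi_j$: two separator vertices $v_{i,k}, v_{i,\ell}$ of $S_i$ lying in $e$ are joined within $G[B_i \cap e]$ iff the same holds for $S_j$ in $G[B_j \cap e]$, and this is precisely the data needed to glue together a connecting path on the $A_i$ side with one on the $B_j$ side. One also has to handle hyperedges entirely on one side of the seam and, using that $G$ was a support for $\Hyp - (\Ver\setminus V)$, the case where $e$ loses vertices to $D$.

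The main obstacle I anticipate is the support-preservation argument — specifically making precise how $\Pi_i = \Pi_j$ lets one transport a connected subgraph of $G[e]$ across the identified separator. The subtlety is that connectivity of $e$ in $G'$ must be reconstructed from connectivity information recorded only on the width-$p$ separator: a path in $G[e]$ that wanders from $A_i$ into $B_i$ and back may cross $S_i$ several times, so one has to argue that it suffices to know, for each pair of indices, whether the corresponding separator vertices are connected on the $B_i$ side of the hyperedge, and then reassemble a spanning connected subgraph of $e \cap V(G')$ from the $A_i$-side components, the $B_j$-side components, and these pairwise connections — checking that no hyperedge gets disconnected and that the covered vertices in $D$ are genuinely redundant for every hyperedge that contained them. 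A secondary, purely arithmetic obstacle is verifying the stated bound $\suppsize$: one must unwind $t \ge \sepnum{}$ through the length formula of \cref{sepseq} and the width bound $p \le 2r$ feeding into \cref{obs:sig-num}, and confirm the exponents $2r(\nEd(r^2+r+1))$ and $6^{2r^2}$ come out as claimed, but this is routine once the pigeonhole step is set up.
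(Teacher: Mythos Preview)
Your proposal is correct and follows essentially the same approach as the paper: obtain a long \wfsf{} via \cref{sepseq}, apply pigeonhole on signatures using \cref{obs:sig-num}, glue two equal-signature separators, and verify that $\Gamma_i=\Gamma_j$, $\phi_i=\phi_j$, $\Pi_i=\Pi_j$ respectively handle the covering condition, the consistency of the identification with twin classes, and the reconstruction of connectivity across the seam. The only cosmetic difference is that the paper phrases the argument as a minimal-counterexample contradiction rather than an explicit iteration, which spares you from checking that the glued graph is again a triangulated disk (you would need to re-triangulate before the next round, which is harmless but worth noting).
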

\iflong{}
\begin{proof}
  Let~$G=(V,E)$ be a representative solution for~$\Hyp$ with the minimum number of
  vertices, and assume towards a contradiction
  that~$|V|>\suppsize$. We show that there is a representative support
  for~$\Hyp$ with less than~$|V|$ vertices. As mentioned above, we can
  assume that~$G$ is a triangulated disk.

  Since~$G$ is~$\layers$-outerplanar with more than~$\suppsize$ vertices, by~\cref{sepseq}, there is a \wfsf{} of length at least
  \begin{align*}
    \left\lfloor\frac{\sqrt[2r]{\log \suppsize}}{6^r}\right\rfloor & =
    \left\lfloor\frac{\sqrt[2r]{2^{2r(\nEd{}{\cdot (r^2+r+1)})}\cdot 6^{2r^2}}}{6^r}\right\rfloor = \left\lfloor\frac{\sepnum\cdot 6^{r}}{6^r}\right\rfloor =
    \sepnum.
  \end{align*}
 \Cref{obs:sig-num} and the
  pigeonhole principle thus imply that there are two
  separators~$T_i=(A_i,S_i,B_i)$ and~$T_j=(A_j,S_j,B_j)$, $i<j$, of
  this sequence that have the same separator signature.

  We show that the graph~$G(T_i\glue T_j)$ is a representative solution
  for~$\Hyp$. This will contradict our choice of~$G$, thus proving the
  claim. First, by \cref{lem:glue-outerp}, $G':=G(T_i\glue T_j)$ is
  an~$\layers$-outerplanar graph. Therefore, it remains to show
  that~$G'=(V',E')$ is a representative support.

  By \cref{def:glue} of the gluing operation, the vertex set of~$G'$ is~$A_i\cup (B_j\setminus S_j)$ (or equivalently, $(A_i\setminus S_i) \cup B_j$). Since the separators~$T_i$ and~$T_j$ have the same signature, we have that each twin class of~$\Hyp$ with at least one member in~$G$ has also at least one member in~$G'$: All vertices that are removed in the gluing operation are from~$A_j$ and, since~$\Gamma_i=\Gamma_j$, also in~$A_i$. Now, since each vertex of~$\Ver\setminus V$ is covered by some vertex~$v\in V$, it follows that each vertex of $\Ver\setminus V'$ is also covered by some vertex~$v'\in V'$. This shows the first of the two properties in \cref{def:representative} of representative supports. It remains to show that~$G'$ is a support for~$\Hyp[V']$.

\looseness=-1  Consider a hyperedge~$e'$ of~$\Hyp[V']$. We show that~$G'[e']$ is connected. First, let~$e$ be a hyperedge of~$\Hyp[V]$ such that~$e\cap V'=e'$, that is, $e\supseteq e'$ and the vertices of~$e$ that are not in~$e'$ are all removed during the gluing operation. Observe that such a hyperedge~$e$ exists and that, since~$G$ is a representative support of~$\Hyp$, $G[e]$~is connected. To show that~$G'[e']$ is connected we distinguish two cases.

  \begin{case}[{$e\cap S_i=\emptyset$}]
    We either have~$e\subseteq A_i\setminus S_i$ or~$e\subseteq B_j\setminus S_j$. In both cases, $G[e]=G'[e]=G'[e']$ (as~$G[A_i]=G'[A_i]$ and $G[B_j]=G'[B_j]$). Since~$G[e]$ is connected, so is $G'[e']$.
  \end{case}

  \begin{case}[{$e\cap S_i\neq\emptyset$}] %
    Observe that~$S_i\cap e$ and~$S_j\cap e$ are separators in~$G[e]$. To show that~$G'[e']$ is connected, we show three claims.
  \end{case}

  \begin{claim}[{In~$G'[e']$, each vertex~$a\in e'\cap A_i$ is connected to some vertex of~$e'\cap S_i$}]
    We have that~$G[e]$ is connected, that~$e$ contains a vertex of~$S_i$ and that~$S_i\cap e$ is a separator in~$G[e]$. Thus, $G[e]$ contains a path from~$a$ to some vertex of~$S_i$ that contains only vertices of~$A_i$. Since~$G[A_i]=G'[A_i]$ this path is also contained in~$G'$.
  \end{claim}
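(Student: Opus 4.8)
The plan is to produce, for each $a\in e'\cap A_i$, an explicit path inside $G'[e']$ that ends in $S_i$, by first finding such a path in $G[e]$ and then transferring it to $G'$. The starting observation is that, since $S_i\subseteq A_i\subseteq V'$, we have $e'\cap S_i=e\cap S_i$ and $e'\cap A_i=e\cap A_i$; in particular both $a$ and the target set $e\cap S_i$ lie inside $A_i$. If $a\in S_i$ the claim is immediate (the single-vertex path suffices), so I would assume $a\in A_i\setminus S_i$, which by \cref{wfsf3} means $a\in A_i\setminus B_i$.

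First I would exploit that $G$ is a representative support with $e\supseteq e'$, so $G[e]$ is connected; together with the Case~2 hypothesis $e\cap S_i\neq\emptyset$ this yields a path $P$ in $G[e]$ from $a$ to some vertex of $S_i$. The key step is to truncate $P$ at its first encounter with $S_i$: let $s$ be the first vertex of $P$ (reading from $a$) that lies in $S_i$, and let $P'$ be the prefix of $P$ from $a$ to $s$. By \cref{wfsf1} every internal vertex of $P'$ lies in $V=A_i\cup B_i$, and since it avoids $S_i=A_i\cap B_i$ it lies in $A_i\setminus B_i$ or in $B_i\setminus A_i$. Because $a\in A_i\setminus B_i$ and, by \cref{wfsf1'}, there is no edge between $A_i\setminus B_i$ and $B_i\setminus A_i$, a straightforward induction along $P'$ shows the walk cannot enter $B_i\setminus A_i$ before hitting $S_i$; hence all internal vertices of $P'$ lie in $A_i\setminus B_i$, and $P'$ is entirely contained in $G[A_i]$.

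Finally I would transfer $P'$ to $G'$. Since the gluing operation leaves the induced subgraph on $A_i$ intact, i.e.\ $G[A_i]=G'[A_i]$, the path $P'$ is also a path of $G'$; and as all its vertices lie in $e$ and in $A_i\subseteq V'$, they lie in $e\cap V'=e'$. Thus $P'$ is a path of $G'[e']$ joining $a$ to $s\in e'\cap S_i$, which is exactly the claim.

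The main obstacle is the truncation step---guaranteeing that the portion of $P$ up to its first $S_i$-vertex never leaves $A_i$. This is precisely where \cref{wfsf1'} is indispensable: it makes $e\cap S_i$ a genuine vertex separator of $G[e]$ between $e\cap(A_i\setminus B_i)$ and $e\cap(B_i\setminus A_i)$, so any first-hitting walk from $a$ to $S_i$ must stay on the $A_i$ side. The remaining steps---the identity $G[A_i]=G'[A_i]$ and the set identities $e'\cap A_i=e\cap A_i$ and $e'\cap S_i=e\cap S_i$---are routine consequences of \cref{def:glue} and of $S_i,A_i\subseteq V'$.
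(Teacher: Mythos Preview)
Your proposal is correct and follows essentially the same approach as the paper: find a path in $G[e]$ from $a$ to $S_i$, argue via the separator property that it stays inside $A_i$, and transfer it to $G'$ using $G[A_i]=G'[A_i]$. Your version is simply more explicit---you spell out the truncation-and-induction argument behind ``$S_i\cap e$ is a separator in $G[e]$'' and you verify that the resulting path lies in $e'$ rather than merely in $G'$, a point the paper's one-line proof leaves implicit.
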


  \begin{claim}[{In~$G'[e']$, each vertex~$b\in e'\cap B_j$ is connected to some vertex of~$e'\cap S_i$}]
    The claim is trivially true if~$b\in S_i$. Thus, assume that~$b\in B_j\setminus S_i$. We have that~$G[e]$ is connected, that~$e$ contains a vertex of~$S_j$, and that~$S_j\cap e$ is a separator in~$G[e]$. Thus, $G[e]$ contains a path from~$b$ to some vertex~$v$ of~$S_j$ that contains only vertices of~$B_j$. Assume that this path from~$b$ to~$v$ has minimum length among all paths from~$b$ to any vertex in~$S_j$. Let~$w\in B_j$ denote the neighbor of~$v$ in this path and observe that~$w\notin S_j$. Since~$e\cap(B_j\setminus S_j)=e'\cap(B_j\setminus S_j)$ and since~$G[B_j\setminus S_j]=G'[B_j\setminus S_j]$, this path is also contained in~$G'[e'\cap(B_j\setminus S_j)]$. Now let~$v:=v_{j,k}$, that is, $v$ is the~$k$-th vertex in separator~$S_j$. By \cref{def:glue} of the gluing operation, there is in~$G'$ an edge from~$v_{i,k}$ to~$w$. Observe that~$v_{i,k}\in e'\cap S_i$ since~$\phi_i=\phi_j$ which implies that~$v_{i,k}$ and~$v_{j,k}$ are twins. Thus, $G'[e']$ contains a path from~$u$ to~$w$ to~$v_{i,k}\in e'\cap S_i$.
  \end{claim}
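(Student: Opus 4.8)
The plan is to build a path from~$b$ back to the separator that lives entirely in the ``right half''~$G[B_j]$ of~$G$ and then carry it into~$G'$ through the gluing identification. First I would dispose of the trivial case~$b\in S_i$, where~$b$ itself is already a vertex of~$e'\cap S_i$. So assume~$b\notin S_i$; as~$S_i$ and~$S_j$ are identified in~$G'$ and~$S_j=A_j\cap B_j$ by \cref{wfsf3}, this means~$b\in B_j\setminus A_j$.

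The first ingredient is that~$e$ must meet~$S_j$. Since~$T_i$ and~$T_j$ have the same signature (\cref{def:sepsig}), $\phi_i=\phi_j$, so~$v_{i,k}$ and~$v_{j,k}$ lie in the same twin class for every index~$k$. Twins belong to exactly the same hyperedges of~$\Hyp$, and~$e$ is the restriction to~$V$ of one such hyperedge, so~$v_{i,k}\in e$ iff~$v_{j,k}\in e$; combined with the Case~2 hypothesis~$e\cap S_i\neq\emptyset$ this gives~$e\cap S_j\neq\emptyset$.

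Next I would extract a path from~$b$ to~$S_j$ inside~$B_j\cap e$. Because~$(A_j,S_j,B_j)$ satisfies Linear Separation, $S_j\cap e$ separates~$(A_j\setminus B_j)\cap e$ from~$(B_j\setminus A_j)\cap e$ in~$G[e]$ (\cref{wfsf1'}). As~$G$ is a representative support, $G[e]$~is connected, and it meets~$S_j$; hence there is a shortest path~$P$ in~$G[e]$ from~$b$ to some~$v=v_{j,k}\in S_j$. Minimality makes~$v$ the only vertex of~$S_j$ on~$P$, and since no edge joins~$B_j\setminus A_j$ to~$A_j\setminus B_j$, every earlier vertex of~$P$ lies in~$B_j\setminus S_j$; thus~$P\subseteq B_j\cap e$. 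Writing~$w$ for the penultimate vertex, the prefix of~$P$ from~$b$ to~$w$ lives in~$G[(B_j\setminus S_j)\cap e]$, which the gluing leaves untouched and which equals~$G'[(B_j\setminus S_j)\cap e']$ since~$e\cap(B_j\setminus S_j)=e'\cap(B_j\setminus S_j)$. By \cref{def:glue} the edge~$\{w,v_{j,k}\}$ becomes~$\{w,v_{i,k}\}$ in~$G'$, and~$v_{i,k}\in e'\cap S_i$ because~$v_{i,k}$ is a twin of~$v_{j,k}=v\in e$ (so~$v_{i,k}\in e$) and lies in~$A_i\subseteq V'$ (so~$v_{i,k}\in e'$). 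Concatenating gives the required path from~$b$ to~$v_{i,k}\in e'\cap S_i$ in~$G'[e']$.

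The step I expect to be the main obstacle is ensuring both that the chosen path stays inside~$B_j$ until it hits~$S_j$ and that its endpoint's glued image lies in~$e'\cap S_i$: the former rests on minimality together with the separator property, while the latter rests entirely on the signature equality~$\phi_i=\phi_j$, which is exactly what converts~$v_{j,k}\in e$ into~$v_{i,k}\in e'$. The remaining bookkeeping about which induced subgraphs the gluing preserves is routine.
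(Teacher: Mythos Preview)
Your argument is correct and follows essentially the same route as the paper's: take a shortest path in~$G[e]$ from~$b$ to~$S_j$, use minimality plus the separator property to confine all but the last vertex to~$B_j\setminus S_j$, transport the prefix to~$G'$ unchanged, and replace the final edge~$\{w,v_{j,k}\}$ by~$\{w,v_{i,k}\}$ via the gluing identification and~$\phi_i=\phi_j$. You are in fact slightly more explicit than the paper at one point: the paper simply asserts that ``$e$~contains a vertex of~$S_j$'', whereas you derive it from the Case~2 hypothesis~$e\cap S_i\neq\emptyset$ together with~$\phi_i=\phi_j$.
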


  \begin{claim}[{In~$G'[e']$, each pair of vertices~$u,v\in e'\cap S_i$ is connected}]
  \noindent Observe that~$u$ and~$v$ are connected by a path~$(u=p_1,\ldots,p_q=v)$ in~$G[e]$. Since~$S_i\cap e$ is a separator in~$G[e]$, this path can be decomposed into subpaths that have (respectively) only vertices in~$A_i\setminus B_i$, only vertices in~$B_i\setminus A_i$, and only vertices in~$S_i$. Let~$u=w_1,\ldots,w_x=v$ denote the vertices of this path that are in~$S_i$, that is, for each~$\ell$, $1\le \ell< x$, there is in~$G[e]$ a path from~$w_\ell$ to~$w_{\ell+1}$ that does not contain other vertices from~$S_i$. We show that, in~$G'[e']$, there is also such a path. Since each~$w_\ell\in e'$, this implies that there is a path from~$u$ to~$v$ in~$G'[e']$.

  If~$w_\ell$ and~$w_{\ell+1}$ are adjacent in~$G$, then they are also adjacent in~$G'$ and, thus, connected in~$G'[e']$. Otherwise, if the path from~$w_\ell$ to~$w_{\ell+1}$ contains vertices from~$A_i\setminus B_i$, then all these vertices are also contained in~$e'$ as~$A_i\subseteq V'$. Since~$G[A_i]=G'[A_i]$, this path is also present in~$G'[e']$. In the remaining case, the path contains vertices from~$B_i\setminus A_i$. Hence, $w_\ell$ and~$w_{\ell+1}$ are in the same connected component of~$G[B_i\cap e]$. Let~$v_{i,y}:=w_\ell$ and~$v_{i,z}:=w_{\ell+1}$. Moreover, let~$v_{j,y}$ and~$v_{j,z}$ denote the vertices that are identified with~$w_\ell$ and~$w_{\ell+1}$ in the gluing operation.

\looseness=-1  Observe that~$v_{j,y}$ and~$v_{j,z}$ are in the same connected component of~$G[B_j\cap e]$ since the separators have the same signature, which implies~$\Pi^B_i$ and~$\Pi^B_j$. Moreover, observe that~$G[B_j]$ is isomorphic to~$G'[(B_j\setminus S_j)\cup S_i]$ where the isomorphism maps each vertex of~$B_j\setminus S_j$ to itself and maps each vertex of~$S_i$ to the vertex of~$S_j$ that it is identified with. Consequently, $w_\ell$ and~$w_{\ell+1}$ are in the same connected component of~$G'[((B_j\setminus S_j)\cup S_i) \cap e')]$. Hence, there is a path from~$w_\ell$ to~$w_{\ell+1}$ in~$G'[e']$. \qedhere
  \end{claim}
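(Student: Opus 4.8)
The plan is to transport a connecting path from the original support into the glued graph, separator vertex by separator vertex. Since~$G$ is a representative support, $G[e]$ is connected, so~$u$ and~$v$ are joined by a path~$(u=p_1,\dots,p_q=v)$ in~$G[e]$. By \cref{wfsf1'} there is no edge of~$G$ between~$A_i\setminus B_i$ and~$B_i\setminus A_i$, so~$S_i=A_i\cap B_i$ intersected with~$e$ is a separator of~$G[e]$; I would therefore cut the path at its occurrences in~$S_i$, writing~$u=w_1,\dots,w_x=v$ for the separator vertices it meets in order. Each~$w_\ell$ lies in~$e'$, so it suffices to produce, for every consecutive pair, a~$w_\ell$--$w_{\ell+1}$ path in~$G'[e']$ and concatenate. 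By the choice of the~$w_\ell$, the original subpath between~$w_\ell$ and~$w_{\ell+1}$ meets no other vertex of~$S_i$, and hence (again by \cref{wfsf1'}) all of its internal vertices lie on a single side of the separator.

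First I would dispose of the two easy sides. If~$w_\ell$ and~$w_{\ell+1}$ are already adjacent in~$G$, the edge survives the gluing. If the subpath runs through~$A_i\setminus B_i$, then all its internal vertices lie in~$A_i\subseteq V'$ and hence in~$e'$, and since gluing leaves the~$A_i$-part untouched ($G[A_i]=G'[A_i]$ by \cref{def:glue}) the entire subpath reappears verbatim in~$G'[e']$.

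The hard part will be the remaining case, in which the subpath runs through~$B_i\setminus A_i$: those internal vertices sit strictly between the two separators and are deleted by the gluing, so the subpath cannot be reused directly. Here I would invoke the signature equality. Writing~$w_\ell=v_{i,y}$ and~$w_{\ell+1}=v_{i,z}$, the subpath witnesses that~$v_{i,y}$ and~$v_{i,z}$ lie in the same connected component of~$G[B_i\cap e]$, that is, that the triple~$(e,y,z)$ belongs to~$\Pi_i$. As~$T_i$ and~$T_j$ carry the same signature we have~$\Pi_i=\Pi_j$ (and~$\phi_i=\phi_j$, which makes~$v_{i,k}$ and~$v_{j,k}$ twins, so they belong to exactly the same hyperedges), whence~$v_{j,y}$ and~$v_{j,z}$ lie in the same connected component of~$G[B_j\cap e]$. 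Finally I would use the isomorphism between~$G[B_j]$ and~$G'[(B_j\setminus S_j)\cup S_i]$ that fixes~$B_j\setminus S_j$ and sends each~$v_{j,k}$ to the vertex~$v_{i,k}$ with which it is identified; it carries the connecting path in~$G[B_j\cap e]$ to a~$w_\ell$--$w_{\ell+1}$ path inside~$G'[((B_j\setminus S_j)\cup S_i)\cap e']\subseteq G'[e']$. The crux is to verify that~$\Pi$ records exactly the within-$B$ connectivity needed to substitute an equivalent~$B_j$-detour for the deleted~$B_i$-detour, and that the twin identities recorded by~$\phi_i=\phi_j$ keep every substituted separator vertex inside~$e'$.
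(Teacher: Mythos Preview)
Your proposal is correct and follows essentially the same approach as the paper's proof: decompose a $u$--$v$ path in~$G[e]$ at its~$S_i$-vertices, handle the adjacent and $A_i$-side subpaths directly, and for the $B_i$-side subpaths use the equality~$\Pi_i=\Pi_j$ to replace the deleted $B_i$-detour by a $B_j$-detour transported via the natural isomorphism $G[B_j]\cong G'[(B_j\setminus S_j)\cup S_i]$. Your account is in fact slightly more explicit than the paper's in naming the relevant triple of~$\Pi_i$ and in invoking~$\phi_i=\phi_j$ to keep the substituted separator vertices inside~$e'$.
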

\end{proof}
\fi
\noindent We now use this upper bound\iflong{} on the number of vertices in representative
solutions to obtain a problem kernel for~\PS. First, we show that
representative solutions can be extended in a particularly simple way
to obtain a solution.
\begin{lemma}\label{lem:representative-hypergraph}
  Let~$G=(V,E)$ be a representative solution for a hypergraph~$\Hyp =
  (\Ver,\Ed)$. Then, $\Hyp$ has a solution in which all vertices
  of~$\Ver\setminus V$ have degree one.
\end{lemma}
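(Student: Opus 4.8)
The plan is to recover the solution from~$G$ by reinserting every deleted vertex as a pendant (degree-one) vertex hanging off a vertex that covers it. Write $D:=\Ver\setminus V$. By \cref{def:representative}, each~$u\in D$ has a vertex~$v(u)\in V$ with $\Ed(u)\subseteq\Ed(v(u))$; fix such a~$v(u)$ for every~$u\in D$. Let~$G'$ be the graph on vertex set~$\Ver$ obtained from~$G$ by adding the edge~$\{u,v(u)\}$ for every~$u\in D$. Then $G'[V]=G$ and every vertex of~$D$ has degree exactly one in~$G'$, so it remains to show that~$G'$ admits an~$\layers$-outerplanar embedding and that~$G'$ is a support for~$\Hyp$.

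First I would check the support property. Fix~$e\in\Ed$. Note $e\not\subseteq D$: otherwise any~$u\in e$ would have $v(u)\in e\cap V$, a contradiction. Hence $|e\setminus D|\ge 1$. If $|e\setminus D|\ge 2$, then~$e\setminus D$ is a hyperedge of~$\Hyp-D$, so $G'[e\setminus D]=G[e\setminus D]$ is connected because~$G$ is a support for~$\Hyp-D$; if $|e\setminus D|=1$, then $G'[e\setminus D]$ is a single vertex and hence connected as well. It now suffices to observe that each~$u\in e\cap D$ is adjacent in~$G'$ to its covering vertex~$v(u)$, and $v(u)\in e\setminus D$: indeed $v(u)\in V$, and $e\in\Ed(u)\subseteq\Ed(v(u))$ forces $v(u)\in e$. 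Therefore every vertex of~$e\cap D$ is attached to the connected graph~$G'[e\setminus D]$, so~$G'[e]$ is connected.

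The main work, and the step I expect to be the main obstacle, is to choose the plane embedding so that the reinsertions do not increase the outerplanarity. Start from a plane embedding of~$G$ with layer decomposition~$L_1,\dots,L_{\layers}$. I would first prove the following claim by induction on~$\ell$: if a single pendant vertex~$u$ is attached to a vertex~$v$ lying in layer~$L_\ell$, then~$u$ can be placed inside a face of~$G$ incident to~$v$ so that in the resulting plane graph every old vertex stays on its layer and~$u$ joins layer~$L_\ell$ (so the number of layers does not grow). For $\ell=1$, place~$u$ in the outer face at~$v$; then $u\in L_1$ and peeling off $L_1\cup\{u\}$ recovers $G-L_1$ unchanged. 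For $\ell\ge 2$, apply the induction hypothesis in $G_2:=G-L_1$ (which is $(\layers-1)$-outerplanar and in which~$v$ lies on layer~$\ell-1$) to obtain the face~$f_2$ of~$G_2$ incident to~$v$ into which~$u$ should go; since $v\notin L_1$, some face~$f$ of~$G$ contained in~$f_2$ is incident to~$v$, and placing~$u$ there yields a plane graph~$G'$ with $L_1(G')=L_1$ and with $G'-L_1(G')$ equal to the induction-hypothesis graph, so~$G'$ has the same layers as~$G$ except that~$u$ joins layer~$L_\ell$. The subtlety this induction handles is that dropping~$u$ into an \emph{arbitrary} face incident to~$v$ can push~$u$ onto layer~$L_{\ell+1}$ and thereby raise the outerplanarity; one must use precisely the face of~$G$ that becomes part of the outer face after the first~$\ell-1$ layers are peeled.

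Finally I would iterate this over all~$u\in D$, placing all pendants with the same covering vertex into one common face so that they all land on that vertex's layer; since adding such pendants leaves the embedding of~$G$ on the vertex set~$V$ — and hence the relevant faces — intact, all reinsertions can be carried out, and the resulting graph~$G'$ is an~$\layers$-outerplanar support for~$\Hyp$ in which every vertex of~$\Ver\setminus V$ has degree one. Everything except the careful pendant placement is routine.
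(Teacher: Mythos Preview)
Your argument is correct and complete. The paper takes a different, slightly shorter route: it first replaces~$G$ by a triangulated $\layers$-outerplanar disk (which preserves both the support property and the outerplanarity), and then invokes the fact, due to Biedl, that in a triangulated disk every vertex~$v\in L_\ell$ with $\ell>1$ has a neighbor in~$L_{\ell-1}$; hence there is an inner face of~$G$ incident to both~$v$ and that neighbor, and dropping the pendant into this face automatically places it on layer~$\le\ell$. Your inductive argument on the layer index avoids both the triangulation step and the external reference, and in particular works for an arbitrary $\layers$-outerplanar representative solution without massaging it first. The trade-off is that the paper's proof is a two-line appeal to a known structural fact, whereas yours has to carry out the peel-one-layer induction explicitly (including the observation that every face of~$G_2=G-L_1$ incident to~$v$ contains a face of~$G$ incident to~$v$); on the other hand, your approach is entirely self-contained.
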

\begin{proof}
\looseness=-1  Assume, without loss of generality, that $G$~is a triangulated $\layers$-outerplanar disk.  Let $G'$~be the graph obtained from~$G$ by making each vertex~$v$ of~$\Ver\setminus V$ a degree-one neighbor of a vertex in~$V$ that covers~$v$ (such a vertex exists by the definition of representative support). Clearly the resulting graph is planar. It is also $\layers$-outerplanar: If the neighbor~$v$ of a new degree-one vertex is in~$L_1$, then $v$~can be placed in the outer face. Otherwise, $v$~can be placed in the face whose boundary contains~$v$ and a neighbor of~$v$ that lies in~$L_{i-1}$ (which exists since~$G$ is a triangulated disk~\citep{Biedl15}).

  It remains to show that~$G'$ is a support for~$\Hyp$. Consider a hyperedge~$e\in \Ed$. Since~$G$ is a representative support for~$\Hyp$, we have that~$e\cap V$ is nonempty and that $G[e\cap V]$ is connected. In~$G'$, each vertex~$u\in e\setminus V$ is adjacent to some vertex~$v\in V$ that covers~$u$. This implies, in particular, that~$v\in e$. Thus, $G'[e]$ is connected as $G'[e\cap V]$ is connected and all vertices in~$e\setminus V$ are neighbors of a vertex in~$e\cap V$.
\end{proof}

\noindent We can now use this observation to show that, if there is a twin class that is larger than a minimal representative solution, then we can \else {} to \fi {}safely remove one vertex from this twin class.
\begin{lemma}\label{lem:rule}
  Let~$\Hyp = (\Ver,\Ed)$ be a hypergraph and let~$v\in \Ver$ be a
  vertex such that~$|[v]_{\R}|\ge \alpha$. If~$\Hyp$ has a representative
  solution with less than~$\alpha$ vertices, then~$\Hyp-v$ has a
  solution.
\end{lemma}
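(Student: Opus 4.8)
The plan is to first replace the given representative solution by an honest $\layers$-outerplanar support of a simple shape via \cref{lem:representative-hypergraph}, and then obtain a solution for~$\Hyp-v$ from it by deleting one vertex and relabeling one vertex.

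Concretely, I would take a representative solution $G=(V,E)$ for~$\Hyp$ with $|V|<\alpha$, which exists by assumption. Since $|[v]_{\R}|\ge\alpha>|V|$, the twin class~$[v]_{\R}$ is not contained in~$V$, so I can fix a vertex~$w\in[v]_{\R}\setminus V$ if $v\in V$ (note $w\neq v$ then), and set $w:=v$ if $v\notin V$; in both cases $w$ lies in $D:=\Ver\setminus V$ and is a twin of~$v$. Applying \cref{lem:representative-hypergraph} to~$G$ gives a solution~$G'$ for~$\Hyp$ with $V(G')=\Ver$ in which every vertex of~$D$, in particular~$w$, is a degree-one vertex. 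Making the ``$D$-part'' pendant in this way is the crucial preparation: it is exactly what lets one relabel~$v$ to~$w$ later without creating an adjacency that could spoil planarity.

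I would then let~$H$ be the graph obtained from~$G'$ by deleting~$w$ and renaming the vertex~$v$ to~$w$ (when $v=w$, this just means $H:=G'-v$), so that $V(H)=\Ver\setminus\{v\}$. Since~$H$ is isomorphic to the induced subgraph~$G'-w$ of the $\layers$-outerplanar graph~$G'$, it is $\layers$-outerplanar, and it only remains to check that it is a support for~$\Hyp-v$. Every hyperedge~$e'$ of~$\Hyp-v$ has the form $e'=f\setminus\{v\}$ with $|e'|\ge 2$ for some $f\in\Ed$, so it suffices, for each such~$f$, to show that $H[e']$ is connected. If $v\notin f$, then $w\notin f$ by twinness, hence $e'=f\subseteq\Ver\setminus\{v,w\}$, a vertex set on which~$H$ and~$G'$ induce the same graph; thus $H[e']=G'[f]$, which is connected because~$G'$ is a support. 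If $v\in f$, then $w\in f$ and $|f|\ge 3$, and under the isomorphism $H\cong G'-w$ the set~$e'$ (which contains the vertex renamed from~$v$) is carried to $f\setminus\{w\}$; here $G'[f\setminus\{w\}]=G'[f]-w$ is connected, since~$G'[f]$ is connected, has at least three vertices, and~$w$ has degree at most one in it. Hence $H[e']$ is connected in every case, so~$H$ is a solution for~$\Hyp-v$.

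The step I expect to need the most care is the translation between hyperedges of~$\Hyp$ and of~$\Hyp-v$: deleting~$v$ may shrink a size-two hyperedge to a singleton (which then vanishes) or identify two distinct hyperedges, so the verification must be run per hyperedge of~$\Hyp-v$ rather than per hyperedge of~$\Hyp$. The twin~$w$ is precisely what ``absorbs'' the connectivity role previously played by~$v$ in every hyperedge that contained~$v$, while making~$w$ a pendant of~$G'$ beforehand is what ensures that substituting~$v$ by~$w$ alters no edge relevant to $\layers$-outerplanarity. I would also state the degenerate case $v\notin V$ explicitly, where $w=v$ and $H=G'-v$ and the same case distinction applies verbatim, now with~$v$ itself playing the role of the deleted pendant.
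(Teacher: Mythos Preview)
Your proof is correct and follows essentially the same approach as the paper: apply \cref{lem:representative-hypergraph} to obtain a solution in which some twin of~$v$ is a pendant, delete that pendant, and use that a degree-one vertex is never a cut-vertex. The only difference is cosmetic---the paper compresses your explicit choice of~$w$ and the renaming step into a single ``without loss of generality, $v\notin V$'' clause, whereas you spell out the relabeling and the resulting case distinction.
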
\iflong
\begin{proof}
  Let~$G=(V,E)$ be a representative solution for~$\Hyp$ such that~$|V|<\alpha$. Then, at least one vertex of~$[v]_{\R}$ is not in~$V$ and we can assume, without loss of generality, that this vertex is~$v$. Thus, by~\cref{lem:representative-hypergraph}, $\Hyp$ has a support~$G'$ in which~$v$ has degree one. The graph~$G'-v$ is a support for~$\Hyp-v$: for each hyperedge~$e$ in~$\Hyp-v$, we have that~$G'[e\setminus \{v\}]$ is connected because~$v$ is not a cut-vertex in~$G'[e]$ (since it has degree one).
\end{proof}\fi
\iflong\noindent Now we combine the observations above with the fact that there are
small solutions to obtain a kernelization algorithm.\else {} \noindent Altogether, this leads to a kernelization algorithm.\fi
\begin{theorem}
  \PS~admits a problem kernel with
  at most~$2^{\nEd{}}\cdot \suppsize$ vertices which can be computed in linear time. Hence, \PS~is
  fixed-parameter tractable with respect to~$\nEd{}+\layers$.
\end{theorem}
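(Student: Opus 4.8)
\noindent The plan is to obtain the kernel from \cref{lem:supp-size,lem:rule,lem:representative-hypergraph} by means of a single data reduction rule. Set~$\alpha:=\suppsize+1$ and consider the rule: \emph{if some twin class~$[v]_{\R}$ of the input hypergraph~$\Hyp$ contains at least~$\alpha$ vertices, then delete one vertex~$v$ of~$[v]_{\R}$.} First I would verify that this rule is safe, that is, that~$\Hyp$ has a solution if and only if~$\Hyp-v$ does. For the ``only if'' direction, if~$\Hyp$ has a solution, then by \cref{lem:supp-size} it has a representative solution with at most~$\suppsize<\alpha$ vertices; since~$|[v]_{\R}|\ge\alpha$, \cref{lem:rule} then provides a solution for~$\Hyp-v$. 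For the ``if'' direction, fix a twin~$u\in[v]_{\R}\setminus\{v\}$, which exists because~$|[v]_{\R}|\ge\alpha\ge 3$ and which survives in~$\Hyp-v$. Any solution~$G'$ of~$\Hyp-v$ is itself a representative solution of~$\Hyp$: the unique vertex of~$\Ver$ missing from~$V(G')$ is~$v$, which is covered by its twin~$u\in V(G')$, and~$G'$ is a support for~$\Hyp-\{v\}$ by assumption; hence \cref{lem:representative-hypergraph} turns~$G'$ into a solution of~$\Hyp$.

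Next I would analyze exhaustive application of the rule. Since every application removes a vertex, it terminates after at most~$n$ steps, and in the final hypergraph every twin class has size at most~$\suppsize$. Moreover, each time the rule fires we have~$|[v]_{\R}|\ge\alpha\ge 3$, so every hyperedge containing~$v$ also contains the at least two other members of~$[v]_{\R}$; a short argument then shows that deleting~$v$ neither makes a hyperedge vanish or shrink to a singleton nor lets a hyperedge merge with another, so that~$\nEd$ is unchanged and the twin classes of the surviving vertices are exactly the former ones with~$v$ removed. Exhaustive application therefore amounts to the single sweep ``compute the twin classes of~$\Hyp$ and, for each class larger than~$\suppsize$, discard all but~$\suppsize$ of its vertices.'' As distinct twin classes correspond to distinct subsets~$\Ed(v)\subseteq\Ed$, there are at most~$2^{\nEd}$ of them, so the reduced hypergraph has at most~$2^{\nEd}\cdot\suppsize$ vertices while the parameter~$\nEd+\layers$ is unchanged; this is the claimed kernel. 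For the running time I would invoke standard partition refinement: starting from the single class~$\Ver$ and processing the hyperedges one at a time, split each class~$C$ that meets the current hyperedge~$e$ into~$C\cap e$ and~$C\setminus e$ in~$\Oh(|e|)$ time per hyperedge, for a total linear in the input size; trimming the oversized classes and cleaning up the hyperedge lists is linear as well.

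I expect the only genuinely delicate point to be the interaction of the rule with the convention that~$\Hyp-v$ discards hyperedges which shrink to a single vertex: this is precisely why the threshold must satisfy~$\alpha\ge 3$ (which holds since~$\suppsize\ge 2$), ensuring both that the ``if'' direction can be routed through a surviving twin of~$v$ and that a single deletion does not disturb the remaining twin-class structure, so that the one-sweep implementation is valid. Everything else is bookkeeping on top of \cref{lem:supp-size,lem:rule,lem:representative-hypergraph}. Finally, a linear-time kernelization in particular witnesses that \PS{} is fixed-parameter tractable with respect to~$\nEd+\layers$, since a parameterized problem is \FPT{} precisely when it is kernelizable.
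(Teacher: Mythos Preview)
Your proposal is correct and follows essentially the same route as the paper: apply the rule ``delete a vertex from any twin class of size exceeding $\suppsize$'', prove safety in both directions via \cref{lem:supp-size,lem:rule,lem:representative-hypergraph}, and bound the kernel size by $2^{\nEd}\cdot\suppsize$ using the count of twin classes. You are, if anything, more explicit than the paper about why $\alpha\ge 3$ keeps the hyperedge and twin-class structure intact and about the partition-refinement implementation; the paper simply cites a reference for the linear-time twin computation.
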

\iflong
\begin{proof}
  Consider an instance~$\Hyp=(\Ver,\Ed)$ of \PS{} and let~$v\in \Ver$ be contained in a twin class of size more than~$\suppsize$. By~\cref{lem:supp-size}, if~$\Hyp$ has a solution, then it has a representative solution with at most $\suppsize$ vertices. By~\cref{lem:rule}, this implies that~$\Hyp-v$ has a solution. Moreover, if~$\Hyp-v$ has a solution, then this solution is a representative solution for~$\Hyp$. By~\cref{lem:representative-hypergraph}, this implies that~$\Hyp$ has a solution. Therefore, $\Hyp$ and~$\Hyp-v$ are equivalent instances, and~$v$ can be safely removed from~$\Hyp$.

  Performing this removal can be done exhaustively in linear time~\cite{HPV99}. The removal yields an
  instance in which each twin class contains at most~$\suppsize$
  vertices; the claimed overall size bound follows since the number of
  twin classes is at most~$2^{\nEd{}}$.
\end{proof}

\begin{corollary}
\label{cor:fptwrthyperedges}
For any fixed $r \in \mathbb{N}$, the problem of deciding whether a given hypergraph $\Hyp$ has an $r$-outerplanar support is fixed-parameter tractable when 
parameterized by the number of hyperedges in $\Hyp$.
\end{corollary}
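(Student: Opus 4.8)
The plan is to obtain this as an immediate consequence of the problem kernel established in the preceding theorem. For a fixed constant~$r$, the problem of deciding whether a hypergraph~$\Hyp$ has an $r$-outerplanar support is exactly \PS{} restricted to those instances~$(\Hyp,r')$ with $r'=r$ fixed. Hence I would first run the linear-time kernelization of the preceding theorem on the instance~$(\Hyp,r)$; this yields an equivalent instance with at most~$2^{\nEd{}}\cdot\suppsize$ vertices. With~$r$ now a constant, this bound is a computable function of~$\nEd{}$ alone, so we already have a problem kernel with respect to the single parameter~$\nEd{}$.

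Next I would invoke the standard equivalence between kernelizability and fixed-parameter tractability recalled in \cref{sec:prelim}: the kernelized instance can be decided in time depending only on~$\nEd{}$, for instance by brute force over all graphs on its bounded vertex set, testing $r$-outerplanarity and the support property in polynomial time for each candidate. Together with the linear-time data reduction, this gives an algorithm running in $g(\nEd{})\cdot\poly(n)$~time for some computable function~$g$, which is precisely fixed-parameter tractability with respect to~$\nEd{}$.

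I do not expect any genuine obstacle here; the only point requiring (minor) care is definitional, namely that fixed-parameter tractability permits an arbitrary computable dependence on the parameter, so fixing~$r$ and thereby collapsing the two-parameter running-time bound $f(\nEd{},r)$ of the preceding theorem to $g(\nEd{}):=f(\nEd{},r)$ is legitimate. I would also note, as already remarked in the introduction, that this corollary is the weaker ``for each fixed~$r$'' consequence of the genuinely two-parameter kernel, rather than an independent result.
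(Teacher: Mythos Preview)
Your proposal is correct and matches the paper's intent: the corollary is stated without proof precisely because it is an immediate consequence of the preceding kernelization theorem, obtained by fixing~$r$ so that the kernel size depends on~$\nEd{}$ alone. Your handling of the definitional point (collapsing $f(\nEd{},r)$ to a function of~$\nEd{}$) and the appeal to the kernel--FPT equivalence from \cref{sec:prelim} are exactly the justifications the paper implicitly relies on.
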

\fi

\section{Constructing \wfsf{}s}\label{sec:algo}
Throughout this section, we assume that $G$~is an~$r$-outerplanar triangulated disk on \nVer{}~vertices%
. In this section, we prove \cref{sepseq}, that is, that $G$ has a \wfsf{} of length at least \sepseqlength and width at most $2r$. The proof is by induction on the outerplanarity~$r$ and distinguishes two cases. The first case is when $G-L_1$ contains a ``large'' block~$C$. In this case, we assume by induction that~$C$ has a \wfsf{} of a certain length, which we constructively ``extend'' to a \wfsf{} of length $t$ and width at most $2r$ for~$G$ (\cref{cons:3})\iflong; we treat this case in \cref{subsec:largebicomponent}\fi.  The second case is when there is no large block in~$G-L_1$. Then either~$L_1$ is ``large'', or the number of blocks in $G - L_1$ is ``large''. We give a direct recursive construction that yields in this case a \wfsf{} of length at least \sepseqlength and of width two or three (\cref{cons-smallseps})\iflong; we treat this case in \cref{subsec:recursivecons}\fi. Observe that the second case includes the base case of outerplanar graphs. \cref{sec:proof-main} puts all together and proves \cref{sepseq}.

\subsection{\boldmath$G-L_1$ contains a large block}
\label{subsec:largebicomponent}
\looseness=-1 Let~$C$ denote a block in~$G-L_1$. We will show how a \wfsf{} for $C$ of length $t$ can be extended into a \wfsf{} of length~$t/6$ for~$G$. The resulting sequence for $G$ will be either a sequence of induced paths or a sequence of induced cycles. The following terminology will be useful when distinguishing these two possibilities. Let $P$~be an induced path in~$C$ such that the two endpoints of~$P$ lie on the outermost layer of~$C$. For a vertex~$\vcyca$ in $L_1$, we say that $\vcyca$ is a \emph{\cv{}}, or more precisely a \emph{\cv{} with respect to $P$}, if $G[V(P) \cup \{v\}]$ is an induced cycle, and for a pair of vertices~$\{\vcyca, \vcycb\}$ in~$L_1$, we say that $\{\vcyca, \vcycb\}$ is a \emph{\cp{}} (with respect to $P$) if $\vcyca$ and $\vcycb$ are adjacent, and $G[V(P) \cup \{\vcyca, \vcycb\}]$ is an induced cycle.  \iflong
\begin{lemma}
\label{lem:3cycles}
Let $C$~be a triangulated disk in~$G -L_1$.  Suppose that $C$~has a well-formed separator sequence $\Sepseq'=(\A'_1,S'_1,\B'_1),\allowbreak\dots,(\A'_\numseps,S'_\numseps,\B'_\numseps)$, where each $S'_i$, $i=1, \ldots, t$, is an induced path. Then, there are at most two distinct vertices in~$L_1$ that are cycle-vertices with respect to any path in~$\Sepseq'$.
\end{lemma}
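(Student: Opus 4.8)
The plan is to argue about the cyclic structure imposed by $L_1$ and the way induced paths from $\Sepseq'$ interact with it. First I would fix the embedding of $G$ and recall that $L_1$ is a simple cycle bounding the outer face, while $C$ is a block of $G-L_1$, hence $C$ lies entirely inside this cycle and its outermost layer is some face-bounding cycle of $C$. Every $S'_i$ is an induced path whose two endpoints lie on the outermost layer of $C$; a vertex $\vcyca \in L_1$ is a \cv{} with respect to $S'_i$ precisely when $\vcyca$ is adjacent (in $G$) to both endpoints of $S'_i$ and to no internal vertex of $S'_i$, so that closing the path through $\vcyca$ yields an induced cycle. The key structural observation I would extract is that since $G$ is a triangulated disk, each vertex $v$ on the outermost layer of $C$ sees $L_1$ through a contiguous arc: the neighbors of $v$ in $L_1$ form a sub-path of the cycle $L_1$ (this is the standard "consecutive neighbors on the previous layer" property of triangulated disks, cf.\ \citet{Biedl15}).

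Next I would combine these arcs. For a fixed induced path $S'_i$ with endpoints $x_i, y_i$ on the outermost layer of $C$, a \cv{} for $S'_i$ must lie in the intersection of the $L_1$-neighborhood arc of $x_i$ and that of $y_i$; moreover the triangulation forces the "gap" in $C$'s outer boundary between $x_i$ and $y_i$ (on the side not containing the rest of $S'_i$, appropriately chosen) to be filled by triangles whose apexes are exactly the $L_1$-vertices adjacent to that boundary arc. Because $G$ is triangulated and these triangles tile a disk region, at most two of these apex vertices can simultaneously be adjacent to both $x_i$ and $y_i$ without creating a chord inside $S'_i$ — in fact generically exactly one, with the exceptional count of two arising from the two "ends" of the tiled region. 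The crucial point for the global claim is that the \wfsf{} property \cref{wfsf7} forces a one-to-one layer correspondence between the vertices of the different $S'_i$, and \cref{wfsf2} forces the separators to be nested ($\A'_i \subsetneq \A'_{i+1}$); so the endpoints of the various $S'_i$ on the outermost layer of $C$ also appear in a nested/monotone fashion around that cycle, and the windows of $L_1$-vertices that could serve as \cv{}s for them all shrink monotonically, their overall intersection being at most two vertices. Hence the same at most two vertices of $L_1$ serve as the only candidates for \cv{}s across all of $\Sepseq'$.

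The main obstacle I expect is making the "at most two" bound rigorous and uniform, i.e.\ passing from "for each individual $S'_i$ there are at most two \cv{}s" to "there are at most two vertices of $L_1$ that are \cv{}s for \emph{some} path in $\Sepseq'$". For a single path one argues directly from the triangulation: the region of $G$ between $S'_i$ and the part of $L_1$ it faces is a triangulated disk whose boundary alternates between $S'_i$ and an arc of $L_1$, and an $L_1$-vertex adjacent to both endpoints of $S'_i$ but no interior vertex can only be one of the two extreme vertices of that arc. To get the uniform statement I would use the nesting from \cref{wfsf2,wfsf1'}: since $\A'_1 \subsetneq \cdots \subsetneq \A'_t$ and there are no edges between $\A'_i \setminus \B'_i$ and $\B'_i \setminus \A'_i$, the paths $S'_i$ sweep monotonically across $C$, so any $L_1$-vertex $\vcyca$ that is a \cv{} for two different paths $S'_i$ and $S'_j$ must be adjacent to endpoints appearing in both $\A'_i$ and the complement of $\A'_j$ (or vice versa), which pins $\vcyca$ down to the extreme positions of the sweep; these extreme positions are at most two. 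I would present this via a clean lemma: define, for each $i$, the two "outermost" \cv{} candidates of $S'_i$ (the ones at the ends of the faced $L_1$-arc), show by the triangulated-disk structure that these are the only candidates, and then show by the monotonicity of the sweep that the candidate sets for consecutive $i$ are nested, so the union over all $i$ has size at most two. If a subtlety arises from $S'_i$ having an endpoint that is itself adjacent to many $L_1$-vertices, the inducedness of $S'_i$ (no chords) together with planarity still confines the simultaneous common neighbor to the boundary of the faced region, which keeps the count at two.
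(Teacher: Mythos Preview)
Your proposal has a genuine gap at the step where you pass from the per-path bound to the global bound. The per-path observation is fine: for a fixed $S'_i$ with endpoints $x_i,y_i$ on the outer layer of~$C$, a \cv{} is precisely a common neighbour of $x_i$ and $y_i$ in~$L_1$ (internal vertices of $S'_i$ lie in layers $L_3$ or deeper, so adjacency to them is automatically excluded), and a short planarity argument shows $x_i,y_i$ have at most two common neighbours on the outer cycle~$L_1$. But the lemma asks for a bound on the \emph{union} over all~$i$ of these cycle-vertex sets, and neither of your two attempts gets there. In the first paragraph you argue that the windows ``shrink monotonically'' and conclude that ``their overall intersection'' is at most two---yet a small intersection says nothing about the union. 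In the second paragraph you instead claim the candidate sets are \emph{nested}; if true this would indeed bound the union by the largest set, but the claim itself is false in general: it is perfectly consistent with the lemma that $S'_1$ has unique \cv{}~$a$ while $S'_2$ has unique \cv{}~$b\neq a$, so the two candidate sets $\{a\}$ and $\{b\}$ are incomparable. Your justification (``monotonicity of the sweep'' pins things to ``extreme positions'') is too vague to rule this out, and the sentence about a vertex being a \cv{} for two different paths is addressing the wrong quantity---we are counting distinct vertices, not multiplicities.

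The paper's proof avoids this difficulty by a single global reduction rather than a per-path analysis. It observes that all the first endpoints $v_{i,1}$ lie along one arc of the outer cycle of~$C$ and all the last endpoints $v_{i,p'}$ along the complementary arc; contracting these two arcs to single vertices $u$ and~$w$ makes every \cv{} (for any $i$) a common neighbour of $u$ and $w$ in a planar graph that still has $L_1$ on its outer face. A third such common neighbour then yields a planarity contradiction via the two cycles through $u,v_1,v_2$ and $w,v_1,v_2$. This contraction is exactly the missing global ingredient: it replaces the family of endpoint pairs $(x_i,y_i)$ by a single pair $(u,w)$, after which the ``at most two common neighbours on~$L_1$'' argument applies once and for all. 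If you try to repair your nesting claim rigorously, you will find yourself re-deriving this contraction.
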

\begin{proof}
  Proceed by contradiction. Suppose that there exist three distinct
  vertices~$v_1, v_2, v_3\in L_1$, where $v_1$~is a cycle-vertex with
  respect to~$S_{i_1}'$, $v_2$~is a cycle-vertex with respect
  to~$S_{i_2}'$, and $v_3$~is a cycle-vertex with respect
  to~$S_{i_3}'$. Let $\{u_1,w_1\}$, $\{u_2,w_2\}$, and $\{u_3,w_3\}$
  denote the vertices of~$S_{i_1}'\cap L_2$, $S_{i_2}'\cap L_2$,
  and~$S_{i_3}'\cap L_2$, respectively.  Note that each of these sets
  indeed consists of two vertices, since, by \cref{wfsf4a} of
  \wfsf{}s, each of the three induced paths starts and ends
  on the outer layer of~$C$, which is~$L_2$.  For the same reason,
  and because the outer layer of~$C$ is a cycle, there is a path~$P$
  that contains exactly one vertex from each of~$S_{i_1}'$,
  $S_{i_2}'$, and~$S_{i_3}'$, possibly some other vertices of~$L_2$,
  and only edges that are incident with the outer face of~$G-L_1$.
  Without loss of generality, assume that~$P$ contains~$u_1$, $u_2$,
  and~$u_3$. Moreover, there is also such a path~$P'$ that contains
  exactly the vertices~$w_1$, $w_2$, and~$w_3$. Consider the graph
  that is obtained from~$G$ by contracting~$P$ and~$P'$. This
  (multi)graph is a (not necessarily triangulated) $r$-outerplanar
  disk with an embedding in which~$L_1$ is the cycle incident with the
  outer face. Now, let~$u$ and~$w$ denote the vertices resulting from
  the path contractions and observe that~$u\neq w$.  Since~$u$ and~$w$
  are adjacent to~$v_1$ and~$v_2$, there is cycle~$C_u$
  containing~$u$, $v_1$, and~$v_2$ and only edges with both endpoints
  in~$L_1\cup \{u\}$. Similarly, there is a cycle~$C_w$
  containing~$w$, $v_1$, and~$v_2$ and only edges with both endpoints
  in~$L_1\cup \{w\}$. Moreover, these cycles can be chosen so that the
  regions enclosed by them intersect in~$v_1$
  and~$v_2$. The vertex~$v_3$ is contained in one of these two
  cycles. If $v_3$~is contained in~$C_u$, then it cannot be adjacent
  to~$w$ since all its edges are contained in the region enclosed
  by~$C_u$. Similarly, if $v_3$~is contained in the cycle~$C_w$, then
  it cannot be adjacent to~$u$; a contradiction.
\end{proof}
\fi
\iflong
\begin{lemma}
\label{lem:3cycles1}
Let $C$ be a triangulated disk in $G -L_1$.  Suppose that $C$ has a
well-formed separator sequence
$\Sepseq'=(\A'_1,S'_1,\B'_1),\allowbreak\dots,(\A'_\numseps,S'_\numseps,\B'_\numseps)$,
such that each~$S'_i$, $i=1, \ldots, t$, is an induced path. There can
be at most two distinct pairs~$\{v_{1}^*, v_1^{\dagger}\}$,
$\{v_{2}^*, v_2^{\dagger}\}$ of vertices such that
\begin{compactitem}
\item $\{v_{1}^*, v_1^{\dagger}$,
  $v_{2}^*, v_2^{\dagger}\}\subseteq L_1$,
\item no vertex of $\{v_{1}^*, v_1^{\dagger}$, $v_{2}^*,
  v_2^{\dagger}\}$ is a cycle-vertex with respect to any path
  in~$\Sepseq'$, and
\item $\{v_{1}^*, v_1^{\dagger}\}$,
$\{v_{2}^*, v_2^{\dagger}\}$ are \cp{}s with respect to any path
  in~$\Sepseq'$.\todo{CK: I propose to merge Lemmas 5.1 and 5.2 \quad Iyad: Please do if you have time.}{}
\end{compactitem}
\end{lemma}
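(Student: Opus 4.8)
The plan is to mirror the proof of Lemma~\ref{lem:3cycles}, reducing the \cp{} situation to the already-settled \cv{} situation wherever possible. Suppose, for contradiction, that there are three distinct \cp{}s $\{x_1,y_1\}$, $\{x_2,y_2\}$, $\{x_3,y_3\}$, where $\{x_\ell,y_\ell\}$ is a \cp{} with respect to a path $S'_{j_\ell}$ of $\Sepseq'$ and none of $x_1,y_1,x_2,y_2,x_3,y_3$ is a \cv{} with respect to any path of $\Sepseq'$. Let $u_\ell,w_\ell$ be the endpoints of $S'_{j_\ell}$, which lie on the outer layer of~$C$ by \cref{wfsf4a}. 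Since $G[V(S'_{j_\ell})\cup\{x_\ell,y_\ell\}]$ is an \emph{induced} cycle in which $x_\ell$ is adjacent to $y_\ell$, we may relabel so that $x_\ell\sim u_\ell$ and $y_\ell\sim w_\ell$; induced-ness gives $x_\ell\not\sim w_\ell$, $y_\ell\not\sim u_\ell$, and that $x_\ell$ and $y_\ell$ have no common neighbour on $S'_{j_\ell}$.

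Assume first that $x_1,y_1,x_2,y_2,x_3,y_3$ are pairwise distinct. I would contract the three vertex-disjoint edges $\{x_\ell,y_\ell\}$ to new vertices $z_1,z_2,z_3$. If some $\{x_\ell,y_\ell\}$ is a chord of $L_1$ rather than an edge of the cycle $L_1$, I would first restrict $G$ to the side of this chord containing~$C$; that side contains the entire induced cycle $S'_{j_\ell}\cup\{x_\ell,y_\ell\}$ and hence $C$, the sequence $\Sepseq'$ and the other two \cp{}s, and the chord becomes a boundary edge. Contracting a boundary edge of a triangulated disk keeps the outer boundary a simple cycle and creates a single bigon, which I subdivide to re-triangulate; the resulting graph $G^{**}$ is again an $r$-outerplanar triangulated disk, since edge contraction and subdivision do not increase distances and in a triangulated disk the layer of a vertex is governed by its distance to the outer cycle~\citep{Biedl15}. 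None of these operations changes $G-L_1$, so $C$ is still a triangulated disk in $G^{**}-L_1^{**}$ carrying $\Sepseq'$, and $z_1,z_2,z_3$ lie on $L_1^{**}$. Finally, because each original cycle $G[V(S'_{j_\ell})\cup\{x_\ell,y_\ell\}]$ was induced and $x_\ell,y_\ell$ had no common neighbour on $S'_{j_\ell}$, the merged vertex $z_\ell$ is adjacent in $G^{**}$ to exactly the endpoints $u_\ell,w_\ell$ of $S'_{j_\ell}$ and to nothing else on $S'_{j_\ell}$ (the subdividing vertices are adjacent only to $z_\ell$ and to a common neighbour $q$ of $x_\ell$ and $y_\ell$, and $q\notin V(S'_{j_\ell})$: $q$ lies on the outer layer of~$C$ but $q\neq u_\ell,w_\ell$ by $x_\ell\not\sim w_\ell$ and $y_\ell\not\sim u_\ell$). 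Hence $z_1,z_2,z_3$ are three distinct \cv{}s of $G^{**}$ with respect to paths of $\Sepseq'$, contradicting Lemma~\ref{lem:3cycles}.

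When the six vertices are not all distinct---one vertex $a$ is shared by exactly two, or by all three, of the pairs---this reduction can fail: identifying $x_\ell$ with $y_\ell$ may give the merged vertex an edge into the interior of a second separator, destroying the ``induced'' status of the corresponding \cp{}. In these cases I would instead rerun the proof of Lemma~\ref{lem:3cycles} directly: contract a path of the outer layer of~$C$ through one suitably chosen endpoint of each $S'_{j_\ell}$ to a vertex~$u$, and another such path through the remaining endpoints to a vertex~$w$, so that each pair becomes a two-edge path $u\,x_\ell\,y_\ell\,w$ with $x_\ell,y_\ell$ on the outer cycle $L_1$; then use the cycles $u\,x_i\,[\text{arc of }L_1]\,x_j\,u$ and $w\,y_i\,[\text{arc of }L_1]\,y_j\,w$ in place of $C_u$ and $C_w$ there, together with the fact that all of $x_1,y_1,\dots,y_3$ lie on the single arc of $L_1$ facing~$C$ and that the separators $S'_i$ cross this arc in the linear order induced by $A'_1\subsetneq A'_2\subsetneq\cdots$ (\cref{wfsf2}).

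The step I expect to be the main obstacle is this last, purely embedding-theoretic point in the degenerate cases: one has to read off from the cyclic order of the \cp{}-vertices on $L_1$---equivalently, from the order in which the separators $S'_i$ meet the arc of $L_1$ incident to~$C$---that a third pair cannot be routed without crossing the cycles playing the role of $C_u$ or $C_w$. For pairwise-distinct pair-vertices this is subsumed by the crude reduction to Lemma~\ref{lem:3cycles}, but once the pairs share a vertex the reduction breaks and this ordering argument, rather than a forbidden-minor count, must carry the proof.
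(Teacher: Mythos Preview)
Your first case (all six pair-vertices distinct) is a genuinely different reduction from the paper's, and the idea---contract each \cp{}-edge to turn it into a \cv{} and then invoke Lemma~\ref{lem:3cycles}---is sound. A few of the technicalities you add are unnecessary: the proof of Lemma~\ref{lem:3cycles} never uses that the inner faces of~$G$ are triangles, only that $L_1$ is a cycle and that $C$ is a triangulated disk in $G-L_1$. Since your contractions live entirely on~$L_1$ and leave $G-L_1$ untouched, you do not need to re-triangulate and can drop the subdivision step and the discussion of where the subdividing vertex lands. With that simplification, the distinct case goes through cleanly.

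The genuine gap is the non-distinct case. You do not prove it; you describe what you would do and then explicitly flag the final embedding step as ``the main obstacle.'' What you sketch there---contract one $L_2$-path through one endpoint of each $S'_{j_\ell}$ to a vertex~$u$, the complementary $L_2$-path to a vertex~$w$, and then argue with cycles $C_u$, $C_w$ in $L_1\cup\{u\}$ and $L_1\cup\{w\}$---is exactly the paper's proof, and the paper runs this argument \emph{uniformly}, with no case split on whether the pair-vertices coincide. The only additional ingredient is the hypothesis that no $v_\ell^*, v_\ell^\dagger$ is a \cv{}: this guarantees that $v_1^*, v_2^*$ (the $u$-sides of the first two pairs) are not adjacent to~$w$, so the regions enclosed by $C_u$ and $C_w$ can be taken disjoint even when $v_1^*=v_2^*$ (in which case $C_u$ degenerates to a bigon). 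The edge $\{v_3^*, v_3^\dagger\}$ then lies in one of the two regions and cannot reach the other contracted vertex---contradiction. So the obstacle you anticipate dissolves once you use the not-a-\cv{} hypothesis, and your contraction-to-Lemma~\ref{lem:3cycles} detour for the distinct case, while correct, is not needed.
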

\begin{proof}
  Proceed by contradiction, and assume that there exist three distinct
  pairs $\{v_{1}^*, v_1^{\dagger}\}$, $\{v_{2}^*, v_2^{\dagger}\}$,
  $\{v_{3}^*, v_3^{\dagger}\}$ in layer $L_1$ that are \cp{}s with
  respect to $S'_{i_1}$, $S'_{i_2}$, and $S'_{i_3}$, respectively.  By
  the linear separation properties of \wfsf{}s and since~$C$ is a
  triangulated disk in~$G-L_1$, there is a path~$P$ that contains
  exactly one vertex from each of~$S_{i_1}$, $S_{i_2}$, and~$S_{i_3}$,
  possibly some other vertices of~$L_2$, and only edges that are
  incident with the outer face of~$G-L_1$. Without loss of generality,
  assume that~$P$ contains~$u_1$, $u_2$, and~$u_3$. Moreover, there is
  also such a path~$P'$ that contains exactly the
  vertices~$w_1$,~$w_2$, and~$w_3$. Consider the graph~$G'$ that is
  obtained from~$G$ by contracting~$P$ and~$P'$. This (multi)graph is
  an~$r$-outerplanar disk with an embedding in which~$L_1$ is the
  cycle incident with the outer face. Now, let~$u$ and~$w$ denote the
  vertices resulting from the path contractions and observe
  that~$u\neq w$. Assume without loss of generality that~$v_{1}^*$
  and~$v_{2}^*$ are adjacent to~$u$ and~$v_{1}^\dagger$
  and~$v_{2}^\dagger$ are adjacent to~$w$. Since~$u$ is adjacent to
  $v_{1}^*$ and~$v_{2}^*$, there is cycle~$C_u$ containing~$u$,
  $v_{1}^*$ and~$v_{2}^*$ and only edges with both endpoints
  in~$L_1\cup \{u\}$ (possibly~$v_{1}^*$ and~$v_{2}^*$; in this case
  the cycle is a set of two edges). Similarly, there is a cycle~$C_w$
  containing~$w$, $v_{1}^\dagger$ and~$v_{2}^\dagger$ and only edges
  with both endpoints in~$L_1\cup \{w\}$. Moreover, these cycles can
  be chosen so that the regions enclosed by them are disjoint. The
  edge~$\{v_3^*,v_3^\dagger\}$ is contained in one of these two cycles
  since this pair is distinct from the other two \cp{}s. If
  $\{v_3^*,v_3^\dagger\}$~is contained in~$C_u$, then neither of its
  vertices can be adjacent to~$w$. Since $v_{1}^*$~and~$v_{2}^*$ are
  not cycle-vertices with respect to~$\Sepseq'$ no vertex in~$C_u$
  they are not adjacent to~$w$. Moreover, any other vertex from~$L_1$
  in~$C_u$ cannot be adjacent to~$w$ since all its incident edges are
  contained in the region enclosed
  by~$C_u$. Hence,~$\{v_3^*,v_3^\dagger\}$ can only be contained in
  the cycle~$C_w$ but then neither endpoint can be adjacent to~$u$; a
  contradiction.
\end{proof}

\noindent Using these observations, we can now describe the construction of the desired \wfsf{} of length at least~$t/6$, where~$t$ is the length of the~\wfsf{} of~$C$.  The correctness of the construction is subsequently proven in \cref{lem:cons3}.
\fi
\begin{construction} \label{cons:3} Let $G$~be an $\layers$-outerplanar triangulated disk, where $r > 1$. Suppose that $G-L_1$ has a block~$C$ such that $C$~has a \wfsf{}~$\Sepseq'=(\A_1',\Sep_1',\B_1'),\dots,\allowbreak (\A'_\numseps,\Sep'_\numseps,\B'_\numseps)$. We construct a sequence $\Sepseq=(\A_1,\Sep_1,\B_1),\dots,(\A_q,\Sep_q,\B_q)$ for~$G$ as follows: %
  \begin{case}[{$\Sepseq'$ satisfies \cref{wfsf4b}}]\label{case1}
    That is, each~$S'_i$ is a cycle.  Then, for $i=1, \ldots, \numseps$, let $\Sep_i:=\Sep_i'$, $\B_i:=\B_i'$ $\A_i:=\A_i'\cup L_1$, and define $\Sepseq:=(\A_1,\Sep_1,\B_1),\dots,(\A_\numseps,\Sep_\numseps,\B_\numseps)$.
  \end{case}

  \begin{case}[$\Sepseq'$ satisfies \cref{wfsf4a}]\label{case2}
    That is, each~$S'_i$ is a path.  Then we start by partitioning~$\Sepseq'$ into three subsequences $\Sepseq'_1$, $\Sepseq'_2$, and $\Sepseq'_3$ as follows.  For each path $S'_i \in \Sepseq'$, $i=1, \dots, \numseps$, if there exists a \cv{} with respect to $S'_i$ in $L_1$, then add $S'_i$ to $\Sepseq'_1$; otherwise, if there exists a \cp{} with respect to $S'_i$ in $L_1$, then add $S'_i$ to $\Sepseq'_2$. Finally, let $\Sepseq'_3 = \Sepseq' \setminus (\Sepseq'_1 \cup \Sepseq'_2)$. To define~$\Sepseq$, we distinguish the following cases:

    \begin{subcase}[$|\Sepseq'_1| \geq \max\{|\Sepseq'_2|, |\Sepseq'_3| \}$]\label{case2a}
      Pick a vertex $\vcyca \in L_1$ that is a \cv{} with respect to at least $|\Sepseq'_1|/2$ many paths in $\Sepseq'_1$.  Let $\langle \Sep_{i_1}, \ldots, \Sep_{i_q} \rangle$ be the sequence of cycles formed by adding~$\vcyca$ to each path in $\Sepseq'_1$ with respect to which $\vcyca$ is a \cv{}, that is, $\Sep_{i_j} = G[V(\Sep'_{i_j}) \cup \{\vcyca\}]$, where $\vcyca$ is a \cv{} with respect to $\Sep'_{i_j}$, for $j=1, \ldots, q$; the order of the cycles in the sequence is the order induced by that of the paths in $\Sepseq'_1$. Each cycle $\Sep_{i_j}$, $j=1, \ldots, q$, divides the plane into two closed regions $R_{i_j}^{1}$, $R_{i_j}^{2}$ that share $\Sep_{i_j}$ as boundary. Let~$R_{i_j}^1$ denote the region that contains $L_1$ and let~$R^{1}_{i_1}$ be a minimal region in the set of~$R_{i_j}^{1}$'s. Let $\A_{i_j}$ be the set of vertices in $R_{i_j}^{1}$, $\B_{i_j}$ be that in $R_{i_j}^{2}$, and define~$\Sepseq:= (\A_{i_1}, \Sep_{i_1}, \B_{i_1}), \ldots, (\A_{i_q}, \Sep_{i_q}, \B_{i_q})$.
    \end{subcase}

    \begin{subcase}[$|\Sepseq'_2| \geq \max\{|\Sepseq'_1|, |\Sepseq'_3| \}$]\label{case2b}
      Pick a pair of vertices $\{\vcyca,\vcycb\}$ in $L_1$ that is a \cp{} with respect to at least $|\Sepseq'_2|/2$ many paths in $\Sepseq'_2$. Let $\langle \Sep_{i_1}, \ldots, \Sep_{i_q} \rangle$ be the sequence of cycles formed by adding both $\vcyca, \vcycb$ to each path in $\Sepseq'_2$ with respect to which $\{\vcyca, \vcycb\}$ is a \cp{}, that is, $\Sep_{i_j} = G[V(\Sep'_{i_j}) \cup \{\vcyca, \vcycb\}]$, where $\{\vcyca, \vcycb\}$ is a \cp{} with respect to $\Sep'_{i_j}$, for $j=1, \ldots, q$. We define $\Sepseq$ exactly as we did in \cref{case2a} above.
    \end{subcase}

    \begin{subcase}[$|\Sepseq'_3| \geq \max\{|\Sepseq'_1|, |\Sepseq'_2| \}$]\label{case2c} Since $G$~is a triangulated disk, we can, for each vertex~$v\in L_2$, fix an arbitrary vertex~$v'\in N(v)\cap L_1$~\citep{Biedl15}.  Now, for each $S'_i=(v_{i,1},\dots,v_{i,\sepsize}) \in \Sepseq'_3$, considered with respect to its order in $\Sepseq'_3$, define $\Sep_i=(v_{i,1}',v_{i,1},\dots,v_{i,\sepsize},v_{i,\sepsize}')$.  Let $\langle \Sep_{i_1}, \ldots, \Sep_{i_q} \rangle$ be the sequence obtained in this way.  Each $\Sep_{i_j}$, $j=1, \ldots, q$, cuts $L_1$ into two cycles that enclose two closed regions~$R_{i_j}^{1}$ and~$R_{i_j}^{2}$, where $R_{i_j}^{1}$ and~$R_{i_j}^{2}$ overlap on~$\Sep_{i_j}$. For~$j=1$, let $R_{i_j}^2$~be the region that contains a vertex~$u\notin S_{i_1}$. For each~$j>1$, let $R_{i_j}^1$~be the region that contains a vertex~$u\in S_{i_1}\setminus S_{i_2}$. Now, let~$\A_{i_j}$ be the set of vertices in $R_{i_j}^{1}$, $\B_{i_j}$ be that in~$R_{i_j}^{2}$, and define $\Sepseq := (\A_{i_1}, \Sep_{i_1}, \B_{i_1}), \ldots, (\A_{i_q}, \Sep_{i_q}, \B_{i_q})$.
\qedhere
    \end{subcase}
  \end{case}
\end{construction}

\begin{lemma}
\label{lem:cons3}
The sequence $\Sepseq$ constructed in \cref{cons:3} is a well-formed separator sequence of length at least~$t/6$.\todo{rvb: The $/6$ is kind of strange, should be $/5$: there are only 5 types of separators that we can obtain: induced paths, left small ears, right small ears, left large ears, right large ears.}{}
\end{lemma}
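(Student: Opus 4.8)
The lemma asserts two things about the sequence~$\Sepseq$ built in \cref{cons:3}: (a) it is a genuine \wfsf{} in the sense of \cref{def-wfsf}, and (b) its length is at least~$t/6$. I would prove the length bound first because it is almost immediate from the construction, and then verify the seven properties of \cref{def-wfsf} case by case.

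\textbf{The length bound.} In \cref{case1} nothing is lost: we simply keep all~$t$ separators, augmenting~$\A_i$ with~$L_1$, so~$q=t\ge t/6$. In \cref{case2} we partition~$\Sepseq'$ into~$\Sepseq'_1,\Sepseq'_2,\Sepseq'_3$ and pass to the largest, which has size at least~$t/3$; call it~$\Sepseq'_\star$. In \cref{case2c} we keep all of~$\Sepseq'_3$, so~$q\ge t/3\ge t/6$. In \cref{case2a,case2b} we additionally restrict to those paths for which a single fixed \cv{} (resp.\ \cp{}) works; by \cref{lem:3cycles} (resp.\ \cref{lem:3cycles1}) there are at most two candidate vertices (resp.\ pairs) in~$L_1$ over all of~$\Sepseq'$, so the most popular one is shared by at least half of~$\Sepseq'_\star$, giving~$q\ge |\Sepseq'_\star|/2\ge t/6$. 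Thus in every case~$q\ge t/6$, and the subsequence order is inherited from~$\Sepseq'$.

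\textbf{Verifying the \wfsf{} properties.} I would treat \cref{case1} and the three subcases of \cref{case2} separately, but the arguments are parallel. \emph{Linear Separation} (\crefrange{wfsf1}{wfsf2}): in \cref{case1} each~$S_i'=S_i$ already separates~$C$; since every vertex of~$L_1$ is added to~$\A_i$ and (by $r$-outerplanarity and the fact that~$C\subseteq G-L_1$) no vertex of~$L_1$ is separated from~$C$ by~$S_i$, the set~$S_i$ still separates~$G$, so \cref{wfsf1,wfsf1',wfsf3} hold; \cref{wfsf2} is inherited because~$\A_i'\subsetneq \A_{i+1}'$ and~$\B_i'\supsetneq\B_{i+1}'$ and we added the same~$L_1$ to every~$\A_i$. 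In \cref{case2a,case2b} the new~$S_{i_j}$ is an induced cycle enclosing~$L_1$ in the region~$R^1_{i_j}$; since the cycles arise from nested paths of~$\Sepseq'$ by adding the \emph{same} vertex/pair, the regions~$R^1_{i_j}$ are nested, which gives \cref{wfsf2}, and \cref{wfsf1,wfsf1'} hold because a cycle in a plane graph separates its two sides and there is no edge across (for \cref{wfsf1'} one uses that~$S_{i_j}$ is an \emph{induced} cycle, exactly as in the proof of \cref{lem:glue-outerp}). In \cref{case2c} one argues similarly that $S_i$ is an induced path running from~$L_1$ through~$S_i'$ back to~$L_1$ via the fixed neighbours~$v_{i,1}',v_{i,\sepsize}'$; here one must check that~$v_{i,1}'$ and~$v_{i,\sepsize}'$ are distinct from each other and from the internal vertices so that $S_i$ really is an induced path, and that the nesting of the~$S_i'$ forces nesting of the new paths and hence \cref{wfsf2} with the stated choice of which region is~$R^1$. \emph{Simple Shape} (\crefrange{wfsf4}{wfsf4c}): \cref{case1} preserves \cref{wfsf4b} verbatim; \cref{case2a,case2b} produce cycles, so \cref{wfsf4b} holds with~$v^*=\vcyca$ (and $v^\dag=\vcyca$ or $\vcycb$), noting $L_1\subseteq\A_1$ by construction and that~$\vcyca,\vcycb$ are on~$L_1$, the minimum-index layer meeting~$S_i$; \cref{case2c} produces paths with both endpoints in~$L_1$, giving \cref{wfsf4a}. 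Property \cref{wfsf4c} (consistent indexing) is inherited from~$\Sepseq'$ for the interior vertices and holds for the newly added vertices because the same~$\vcyca$ (resp.\ $\vcycb$, resp.\ the~$v_{i,1}',v_{i,\sepsize}'$ placed in the fixed positions) is used in every separator of~$\Sepseq$. \emph{Layering} (\cref{wfsf7',wfsf7}): since $C$ is a block of~$G-L_1$, layer~$L_{k}$ of~$G$ restricted to~$C$ is layer~$L_{k-1}$ of~$C$; the interior vertices of each~$S_i$ keep their layer and their index from~$\Sepseq'$, and the added vertices all lie on~$L_1$ and occupy the same index across all separators, so both \cref{wfsf7'} (at most two vertices of~$L_1$, namely $\vcyca$ and possibly~$\vcycb$, plus the at-most-two-per-layer property inherited from~$\Sepseq'$) and \cref{wfsf7} follow; the width is~$p'\le 2(r-1)$ for~$C$ plus at most two new vertices, i.e.\ at most~$2r$, matching \cref{sepseq}.

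\textbf{Main obstacle.} The routine part is the bookkeeping of indices and layers; the genuinely delicate point is the planar-topology argument that the regions~$R^1_{i_j}$ (or the new induced paths in \cref{case2c}) are \emph{nested} so that \cref{wfsf2} holds, and that each new cycle/path is actually \emph{induced} and actually \emph{separates}~$G$. In \cref{case2a,case2b} this rests on the fact that the paths of~$\Sepseq'$ are themselves nested and we attach the same vertex or edge to all of them, so the cycles cannot cross; the subtlety is ruling out chords among the vertices on~$L_1$ enclosed inside~$R^2_{i_j}$ — this is where the non-\cv{}/non-\cp{} defining condition of~$\Sepseq'_3$ (used implicitly via \cref{lem:3cycles,lem:3cycles1}) and the minimality of the chosen innermost region~$R^1_{i_1}$ come in. In \cref{case2c} the analogous difficulty is that two paths~$S_i,S_{i+1}$ share no interior vertex but might share an endpoint~$v'$ on~$L_1$; one handles this exactly as the construction prescribes, by declaring~$R^1_{i_j}$ for~$j>1$ to be the side containing a witness vertex of~$S_{i_1}\setminus S_{i_2}$, which pins down the nesting direction consistently. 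I expect this topological nesting verification to be the bulk of the proof, with everything else being direct inheritance from the properties of~$\Sepseq'$.
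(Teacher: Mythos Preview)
Your plan is correct and mirrors the paper's own proof: the length bound via the three-way split of~$\Sepseq'$ and the halving from \cref{lem:3cycles,lem:3cycles1}, followed by case-by-case verification of \cref{def-wfsf} using the Jordan-curve/nesting argument for \cref{wfsf1'} and \cref{wfsf2}, is exactly what the paper does. One small point to tighten when you write it out: in \cref{case2c} the new $L_1$-endpoints~$v_{i,1}',v_{i,p}'$ are \emph{not} the same vertex across all~$i$ (only the map $v\mapsto v'$ is fixed), so your justification of \cref{wfsf4c} should say that shared interior endpoints force shared primed endpoints at the same index, rather than that ``the same vertex is used in every separator.''
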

\iflong

\begin{proof}
  \cref{cons:3} defines the sequence $\Sepseq$ based on the well-formed separator sequence~$\Sepseq':=(\A_1',\Sep_1',\B_1'),\dots,(\A'_\numseps,\Sep'_\numseps,\B'_\numseps)$ of the block $C$ in $G-L_1$ by distinguishing several cases. We show that each of these cases defines a well-formed separator sequence $\Sepseq$ for $G$.

  \begin{case}[$\Sepseq$ is constructed according to \cref{case1} of \cref{cons:3}]
    Clearly, \cref{wfsf1} is satisfied in this case because $V-L_1 = \A'_i \cup B'_i$, $i=1, \ldots, \numseps$, and we add $L_1$ to each $\A_i$. \cref{wfsf1'} can be seen as follows: By construction, $\B_i=\B'_i$ and $\Sep_i =\Sep'_i$ encloses~$B_i$. Further, since $\Sepseq'$~is a \wfsf{}, there is no edge between $\A'_i \setminus \B'_i$ and $\B'_i \setminus \A'_i$.  Since $\Sep_i$~is itself enclosed within $L_1$, there is no edge between $\A_i = \A'_i \cup L_1$ and~$B_i\setminus A_i$. \Cref{wfsf3} follows for the same reason as \cref{wfsf1'}.  \Cref{wfsf2} is trivially satisfied. \Cref{wfsf4,wfsf4c,wfsf7',wfsf7} follow because $\Sepseq'$~satisfies them and because the induced cycles in~$\Sepseq$ are the same as those in~$\Sepseq'$, which also implies that $\Sepseq$~has length~$t>t/6$.
  \end{case}

  \setcounter{case}{2}
  \begin{subcase}[$\Sepseq$ is constructed according to \cref{case2a} of \cref{cons:3}]\looseness=-1 We first prove the correctness of the construction in this case (\ie, that all the claims made in the construction are correct). The existence of a \cv{}~$\vcyca$ with respect to at least $|\Sepseq'_1|/2$~paths in~$\Sepseq'_1$ follows from:
    \begin{compactenum}[(1)]
    \item \cref{lem:3cycles}, stating that there can be at most two cycle-vertices in $L_1$ with respect to distinct paths in~$\Sepseq'$, and
    \item the definition of~$\Sepseq'_1$, which ensures that, for each path in~$\Sepseq'_1$, there exists a \cv{} in $L_1$ with respect to that path.
    \end{compactenum}
    The statement that all the cycles~$\Sep_{i_j}$, $j=1, \ldots, q$, are nested is true because $C$~is a triangulated disk and $\Sepseq'$~is a well-formed separator sequence.  Now, by the Jordan curve theorem, each cycle~$\Sep_{i_j}$, $j=1, \ldots, q$, divides the plane into two closed regions~$R_{i_j}^{1}$, $R_{i_j}^{2}$, one of which is the interior region bounded by the cycle, and the other is the exterior region.  Both regions share~$\Sep_{i_j}$ as boundary. Since each induced cycle~$\Sep_{i_j}$ consists of a path in~$C$ plus exactly one vertex in~$L_1$, which is (\ie, $L_1$) exterior to~$C$, one of the two closed regions~$R_{i_j}^{1}$ and~$R_{i_j}^{2}$ must contain~$L_1$. The nestedness of the regions follows from the nestedness of their cycle-boundaries. We now show that $\Sepseq$~satisfies all the properties of a \wfsf{}.

    \cref{wfsf1} follows trivially from the fact that each vertex is in one of the two regions. Similarly, \cref{wfsf3} follows from the fact that only the  vertices of~$S'_{i_j}$ are in~$R_{i_j}^{1}$ and $R_{i_j}^{2}$. \cref{wfsf1'} follows from the Jordan curve theorem. Now observe that the cycles $\Sep_{i_j}$, $j=1, \ldots, q$, are nested. This implies that the regions defined by these cycles form two nested sequences as well. Thus,~$R_{i_1}^{1} \subseteq \cdots \subseteq R_{i_q}^{1}$ and~$R_{i_1}^{2} \supseteq \cdots \supseteq R_{i_q}^{2}$. \cref{wfsf2} now follows from the fact that no two pairs of regions are identical. To show \cref{wfsf4b}, since $\Sepseq$ is constructed according to \cref{case2a}, each $\Sep_i \in \Sepseq$ is a cycle of the form $(\vcyca,v_{i,1},\dots,v_{i,p},\vcyca)$, where~$(v_{i,1},\dots,v_{i,p})$ is an induced path because $\Sep'_i \in \Sepseq'$. The vertex $\vcyca$ on $L_1$ is adjacent only to vertices in~$L_2$, and hence only to the two vertices $v_{i,1}, v_{i,p}$ of~$\Sep'_i$ that lie on~$L_2$. Thus, $\Sep_i$ is an induced cycle. Moreover,~$L_1 \subseteq \A_1$ follows from the definition of~$A_1$. Thus, \cref{wfsf4b} is satisfied. \Cref{wfsf4c} is satisfied since~$\Sepseq'$ satisfies it.  \Cref{wfsf7',wfsf7} follow because $\Sepseq'$ satisfies them, and each separator in $\Sepseq$ was obtained from a separator in $\Sepseq'$ by adding the same vertex~$\vcyca \in L_1$.
  \end{subcase}

  \begin{subcase}[$\Sepseq$ is constructed according to \cref{case2b} of \cref{cons:3}] The existence of a \cp{}~$\{\vcyca, \vcycb\}$ %
    with respect to at least $|\Sepseq'_2|/2$~paths in~$\Sepseq'_2$ follows from:
    \begin{compactenum}[(1)]
    \item \cref{lem:3cycles1}, stating that there can be at most two \cp{}s in $L_1$ with respect to distinct paths in~$\Sepseq'$, and
    \item the definition of $\Sepseq'_2$, which ensures that, for each path in $\Sepseq'_2$, there exists a \cp{} in $L_1$ with respect to that path.
    \end{compactenum}
    \looseness=-1 The correctness of the construction follows by similar arguments to those made in \cref{case2a} above. The proof that $\Sepseq$~satisfies the properties of a \wfsf{} is exactly the same as that for \cref{case2a} above, except when arguing that \cref{wfsf4b}~holds, that is, that each cycle~$\Sep_i$ in $\Sepseq$~is induced. Recall that~$\Sep_i$ is obtained by adding two distinct vertices~$\vcyca$ and~$\vcycb$ to an induced path~$\Sep'_i$ in~$\Sepseq'$. Since $\Sep'_i$~is an induced path, we only need to show that~$\vcyca$ and~$\vcycb$ have exactly one neighbor in~$\Sep'_i$. This is true because all the neighbors of~$\vcyca$ and~$\vcycb$ in~$C$ are in~$L_2$ and because~$\Sep'_i\in \Sepseq'_2$, which implies that the vertices in~$\Sep'_i$ are not adjacent to cycle-vertices in~$L_1$.
  \end{subcase}

  \begin{subcase}[$\Sepseq$ is constructed according to \cref{case2c} of \cref{cons:3}] To prove the correctness of the construction, first note that, for each~$S'_{i_j}=(v_{i,1},\dots,v_{i,\sepsize}) \in \Sepseq'_3$, the two vertices~$v_{i,1}'\in L_1\cap N(v_{i,1})$ and~$v_{i,\sepsize}'\in L_1\cap N(v_{i,\sepsize})$ in the extended path~$\Sep_i=(v_{i,1}',v_{i,1},\dots,v_{i,\sepsize},v_{i,\sepsize}')$ are distinct and nonadjacent: otherwise, there would be a \cv{} or a \cp{} on~$L_1$ with respect to~$S'_{i_j}$, and hence, $S'_{i_j}$~would belong to~$\Sepseq'_1$ or~$\Sepseq'_2$, not to~$\Sepseq'_3$. Thus, each~$S_{i_j}$ is a path that lies completely in the closed region of the plane delimited by~$L_1$ that contains~$C$. Therefore, each $S_i$~determines two cycles on~$L_1$ that partition the region of the plane delimited by~$L_1$ and containing~$C$ into two closed regions~$R_{i_j}^{1}$ and~$R_{i_j}^{2}$ whose boundaries overlap on~$\Sep_{i_j}$. Let~$R_{i_j}^{1}$ and~$R_{i_j}^{2}$ be the regions as specified in the construction. Then, because $\Sepseq'$~is a \wfsf{}, and by planarity, they form two nested sequences~$R_{i_1}^{1} \subseteq \cdots \subseteq R_{i_q}^{1}$ and~$R_{i_1}^{2} \subseteq \cdots \subseteq R_{i_q}^{2}$, where $R_{i_j}^{1}$~contains the prefix~$\Sep_{i_1}, \ldots, \Sep_{i_j}$ of the sequence, and $R_{i_j}^{2}$~contains the suffix~$\Sep_{i_j}, \ldots, \Sep_{i_q}$ of the sequence, for $j=1, \ldots, q$.

    Now we use these observations to prove that $\Sepseq$~satisfies the properties of a \wfsf{}. \cref{wfsf1} follows because every vertex is contained in one of the two regions. \cref{wfsf3} follows because $S_{i_j}$~is the only part shared by the two regions. \cref{wfsf1'} follows from the Jordan curve theorem and planarity. Finally, \cref{wfsf2}~follows from the nestedness of the two sequences of regions~$R_{i_j}^{1}$, $j=1, \ldots, q$, and $R_{i_j}^{2}$, $j=1, \ldots, q$ mentioned above. To show \cref{wfsf4a}, we argue that each path $\Sep_i$, obtained by adding two distinct vertices~$v_{i,1}'$ and~$v_{i,\sepsize}'$ on~$L_1$ to~$\Sep'_i=(v_{i,1},\dots,v_{i,\sepsize})$ is induced. First, observe that $\Sep'_i$~is induced. Second, as observed above, $v_{i,1}'$ and~$v_{i,\sepsize}'$ have only one neighbor in~$\Sep'_i$ (otherwise, there is a \cv{} or a cycle pair on~$L_1$ with respect to~$\Sep'_i$, contradicting the placement in~$\Sepseq'_3$). \Cref{wfsf4c} and the layering \cref{wfsf7',wfsf7} follow because $\Sepseq'$~satisfies them,  and because each~$\Sep_i \in \Sepseq$ was obtained from~$\Sep'_i \in \Sepseq'$ by adding two vertices on~$L_1$.
  \end{subcase}

\noindent\looseness=-1 It remains to give a lower bound on the length of the \wfsf{} generated by \cref{case2} of \cref{cons:3}.  Observe that one of the sequences $\Sepseq'_1$, $\Sepseq'_2$, and~$\Sepseq'_3$ generated in \cref{case2} has length at least~$t/3$ since each separator fulfills one of the three \cref{case2a,case2b,case2c}.  If this sequence is~$\Sepseq'_3$, then the constructed sequence has length at least~$t/3$. In the other two cases, the choice of the \cv{} or \cp{} guarantees that the constructed sequence has length at least~$t/6$.
\end{proof}
\fi

\subsection{\boldmath Many vertices in $L_1$ or many blocks in $G - L_1$}
\label{subsec:recursivecons}

\noindent In this case, we first generate a not necessarily well-formed sequence of \emph{separations of order two or three}, from which we later extract a sufficiently long \wfsf{}.

\begin{definition}[Separation]
  A \emph{separation} of a graph~$G$ is a pair $(A, B)$ such that%
  \begin{inparaenum}[(i)]
  \item $A \cup B = V(G)$ and
  \item there are no edges between~$A \setminus B$ and $B \setminus A$ in~$G$.
  \end{inparaenum}
  Informally, we sometimes call $A$ and $B$ the \emph{sides} of~$(A, B)$. The integer~$|A \cap B|$ is called the \emph{order} of the separation. We say that a separation~$(A, B)$ is \emph{nontrivial} if $A \setminus B \neq \emptyset \neq B \setminus A$.
\end{definition}

\noindent\looseness=-1 The construction of the sequence of separations is inductive: We start with an arbitrary trivial separation~$(A,B)$ of order two, where~$A$ is an edge incident with the outer face and $B$~contains all vertices.  With each separation~$(A,B)$, we associate a potential function~$q(B)$ that counts the number of vertices of~$L_1$ and the number of blocks remaining on the $B$-side of the separation:

\begin{definition}[Potential function]
  For a vertex set~$B$ of~$G$, let  \iflong{}\[\else{}$\fi{} q(B)=|B \cap L_1| + b,\iflong{}\]\else{}$\fi{} where $b$ is the number of blocks in $G[B \setminus L_1]$.
\end{definition}

\noindent Obviously, for our initial separation~$(A,B)$, the value~$q(B)$ is ``large''.  In the following, from a given separation~$(A,B)$ of order two or three such that $q(B)$~is ``large'', we construct a new separation~$(A',B')$ of order two or three such that~$q(B')\geq (q(B)-1)/\ell$ for some small value~$\ell$.  Thus, if the input graph does not have ``large'' blocks, we obtain a sequence~$\S$ of separations of order two or three whose length is roughly logarithmic in the input graph size.

\looseness=-1 The challenging part is extracting a sufficiently long \wfsf{} from~$\S$.  We will consider only separations~$(A,B)$ of order two or three such that $G[A\cap B]$ contains a path between two vertices in~$L_1$.  If $\S$~contains sufficiently many separations~$(A,B)$ such that $G[A\cap B]$ is an edge or an induced path, then we can easily extract a long \wfsf{} satisfying \cref{wfsf4a} of \cref{def-wfsf}.  However, $G[A\cap B]$ might be a triangle.  We will show that if $\S$~contains many separations that form triangles with a common edge, then these form nested cycle separators according to \cref{wfsf4b} of \cref{def-wfsf}.  Moreover, if $\S$~contains many separators forming triangles without common edges, then we will show that the ``bases'' of these triangles yield separators of a \wfsf{} of width two.

\iflong{}We now formally describe this approach and prove its correctness.  First, we\else{} We now\fi{} formalize the type of separations we are going to generate.\iflong{} These will be candidates for separators in the \wfsf{} we are going to create:\fi{}

\begin{definition}[Nice separation]\label{def:nicesep}
  A separation~$(A, B)$ of order two in a triangulated disk is called \emph{nice} if $A \cap B$~is an edge in~$G$ whose endpoints are both in~$L_1$. A separation~$(A,B)$ of order three is called \emph{nice} if there are exactly two vertices in~$A \cap B \cap L_1$, the vertex in $(A \cap B) \setminus L_1$ is a common neighbor of the vertices in~$A \cap B \cap L_1$, and furthermore, the three-vertex path~$P$ in~$G[A \cap B]$ with endpoints in~$L_1$ divides the region enclosed by~$L_1$ into a region~$R_A$ containing precisely the vertices in~$A$ and a region~$R_B$ containing precisely the vertices in~$B$.
\end{definition}

\iflong{}\noindent\looseness=-1 Note that the regions~$R_A$ and~$R_B$ are well-defined: $A \cap B \cap L_1$
separates the closed curve~$C$ induced by $L_1$ into two segments~$C_1, C_2$. Hereby, $P$ does not cross $C_1$ or~$C_2$ because $V(P) \setminus L_1$ lies in the region enclosed by $C$. Thus, each segment yields another closed curve when adding the path~$P$.

The following lemma shows that, in triangulated disks, all separators of size two induce exactly two separations~$(A,B)$ and $(B, A)$. This fact will be useful throughout the remainder of this section. Moreover, the lemma shows that if $A \cap B$ is an edge, it divides the region enclosed by~$L_1$ in the same way the path $P$~does for separations of order three in \cref{def:nicesep}.

\begin{lemma}\label{lem:2seps-are-edges}
  If $G$~has at least four vertices and a nontrivial separation $(A,B)$~of order two, then $G[A \setminus B]$ and $G[B \setminus A]$ are connected components in~$G-(A\cap B)$.
\end{lemma}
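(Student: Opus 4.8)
The plan is to localise the separator $A\cap B=\{x,y\}$ onto the outer cycle $L_1$ and then read off the two sides from the cyclic structure of a vertex's link. First I would record the link structure of a triangulated disk $G$. For an interior vertex $v$ (which has $\deg v\ge 3$), the neighbours of $v$ admit a cyclic ordering $a_0,a_1,\dots,a_{d-1}$ with $a_ja_{j+1}\in E(G)$ for all $j$ modulo $d$, because each pair of consecutive edges at $v$ bounds an inner triangular face; thus $G[N(v)]$ contains a spanning cycle. For $v\in L_1$ there is instead a spanning \emph{path} $a_0-a_1-\dots-a_{d-1}$ of $G[N(v)]$ whose ends are the two neighbours of $v$ on $L_1$ --- the only corner at $v$ that is not an inner triangle is the outer one, between the two boundary edges. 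In either case $G[N(v)]$ is connected, and since $G$ is connected (every vertex reaches $L_1$ via the layer decomposition), every component of $G-v$ is adjacent to $v$ and hence meets $N(v)$; as $N(v)$ lies in one component of $G-v$, the graph $G-v$ is connected. So $G-v$ is connected for every vertex $v$.

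Next I would show $x,y\in L_1$. Suppose instead that $x$ is interior, and take the cyclic order $a_0,\dots,a_{d-1}$ of $N(x)$. Then $N(x)\setminus\{y\}$ is non-empty and connected in $G-\{x,y\}$ (a cycle with at most one vertex removed), so it lies in a single component of $G-\{x,y\}$. Since $G-y$ is connected, every component of $G-\{x,y\}$ is adjacent to $x$ and therefore contains a neighbour of $x$ distinct from $y$, i.e.\ a vertex of $N(x)\setminus\{y\}$; hence $G-\{x,y\}$ is connected. But $(A,B)$ is nontrivial, so each component of $G-\{x,y\}$ lies in $A\setminus B$ or in $B\setminus A$, both of which are non-empty --- a contradiction. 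Thus $x\in L_1$, and by symmetry $y\in L_1$.

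Finally, with $x,y\in L_1$, take the spanning path $a_0-a_1-\dots-a_{d-1}$ of $G[N(x)]$. If $y\notin N(x)$, then $N(x)$ is connected in $G-\{x,y\}$ and the previous paragraph's reasoning again forces $G-\{x,y\}$ connected, contradicting nontriviality; hence $y=a_i$ for some $i$. Removing $a_i$ from the path leaves the two (possibly empty) subpaths $a_0-\dots-a_{i-1}$ and $a_{i+1}-\dots-a_{d-1}$, each connected in $G-\{x,y\}$, so $N(x)\setminus\{y\}$ meets at most two components of $G-\{x,y\}$. Since every component of $G-\{x,y\}$ meets $N(x)\setminus\{y\}$, there are at most two components, and at least two by nontriviality, hence exactly two. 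Because $V(G)\setminus\{x,y\}=(A\setminus B)\cup(B\setminus A)$ is a disjoint union with no edges across and both sides non-empty, these two sides are exactly the two components, that is, $G[A\setminus B]$ and $G[B\setminus A]$ are the connected components of $G-(A\cap B)$.

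I expect the localisation step --- proving $x,y\in L_1$ --- to be the crux; it is the one place where triangulation is used essentially, and it captures the idea that an interior vertex of a triangulated disk cannot lie on a $2$-cut because its link re-glues the graph after the other cut vertex is deleted. The remainder is bookkeeping with the link structure, and the only mild care needed concerns degenerate cases (small $\deg x$, or $y$ being an end of the link path), which the ``at most two components'' phrasing absorbs uniformly.
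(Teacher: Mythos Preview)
Your proof is correct. Both your argument and the paper's hinge on the same fact---the link $G[N(v)]$ of a vertex~$v$ in a triangulated disk is a spanning cycle (for interior~$v$) or a spanning path (for boundary~$v$)---but you organize the argument differently and extract more structure along the way.

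The paper argues directly by contradiction: assuming three components in $G-\{u,v\}$, it uses pigeonhole to find two of them on the same side (say $A\setminus B$), observes via biconnectedness that each contains a neighbour of~$v$, and then uses the link path/cycle of~$v$ to show that \emph{all} neighbours of~$v$ other than~$u$ lie on that side, contradicting that~$v$ also has a neighbour in $B\setminus A$. It never determines where $u,v$ sit or whether they are adjacent.

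You instead first \emph{localise} the cut: you prove $x,y\in L_1$ (an interior vertex cannot lie on a 2-cut because its cyclic link survives removal of one vertex) and then that $xy\in E(G)$ (otherwise the link path of~$x$ avoids~$y$ and stays connected). Only then do you read off ``at most two components'' from the two subpaths of the link of~$x$ minus~$y$. What this buys you is the extra structural fact that every nontrivial order-two separator of a triangulated disk is a boundary edge---precisely the content of the paper's later notion of a \emph{nice} separation of order two. The paper's route is shorter for the lemma as stated; yours is slightly longer but yields this niceness property for free.
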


\begin{proof}

  Let $(A, B)$ be a nontrivial separation of order two in~$G$, let $A \cap B = \{u, v\}$, and let $A^- = A \setminus B$ and $B^-= B \setminus A$.  Note that $A^- \neq \emptyset$ and $B^- \neq \emptyset$ because $(A, B)$ is nontrivial. To prove that each of $G[A^-]$ and $G[B^-]$ is connected, it suffices to show that the number of connected components in $G - \{u, v\}$ is at most two.

  First we show that $v$ has a neighbor in $A^-$ and one in $B^-$, both different from $u$. Since $G$ is a triangulated disk, it is biconnected, whence $G-u$ is connected. Thus, there exists a simple path~$P$ in~$G-u$ from a vertex in~$A^-$ to a vertex in~$B^-$. Since $\{u, v\}$ separates $A^-$ from~$B^-$ but $u$ does not, $P$ must contain vertex $v$. Thus, indeed $v$ has the desired property.

\looseness=-1  Now suppose towards a contradiction that there are three connected components $C_1, C_2, C_3$ in $G - \{u, v\}$. At least two of these components, say $C_1$ and $C_2$, must be both in~$G[A^-]$ or both in~$G[B^-]$; assume, without loss of generality, that $C_1$ and $C_2$ are in $G[A^-]$. Again, since $G$ is a triangulated disk, it is biconnected, whence $G-u$ is connected. Thus there is a vertex~$z_1 \in C_1$ and a vertex~$z_2 \in C_2$ that are neighbors of~$v$. Also, since $G$ is a triangulated disk, there exists a path $R$ containing all the neighbors of $v$, including $u$. The graph~$R-u$ is composed of two paths $P_1$ and $P_2$. (Note that none of $z_1, z_2$ is equal to $u$.) Because $z_1$ and $z_2$ are in different connected components of $G - u$, one of them must be in $P_1$ and the other in $P_2$. Since there are no edges between $A^-$ and $B^-$, all the vertices on $P_1$ must belong to the same part as $z_1$, \ie, to~$A^-$, and all the vertices in $P_2$ must belong to the same part as $z_2$, and, hence, to~$A^-$ as well; this contradicts the fact that $v$ has a neighbor in $A^-$ and a neighbor in $B^-$, both different from~$u$.
\end{proof}
\fi
\iflong{}
\noindent By definition, nice separations correspond to paths that split the region delimited by~$L_1$ into two closed subregions.  It will often be helpful to only argue about these paths, since they, in turn, almost uniquely determine a separation:

\noindent\looseness=-1  %
\begin{definition}[Separations induced by paths]
  Let $u, w \in L_1$ such that $\{u, w\} \in E(G)$.  The edge~$\{u, w\}$ splits the closed region~$R$ of the plane delimited by~$L_1$ into two closed regions~$R_1, R_2$, whose boundaries overlap on $\{u, w\}$.  We say that $\{u, w\}$ \emph{induces} a nice separation~$(A,B)$ of order two, where one of its sides (\ie, $A$ or $B$) consists of the vertices in~$R_1$ and the other side of those in~$R_2$. Similarly, a (not necessarily induced) path~$P:=(u, v, w)$ such that $u, w \in L_1$ and $v \notin L_1$, splits~$R$ into two regions~$R_1, R_2$, whose boundaries overlap on $P$.  We say that $P$~\emph{induces} a nice separation~$(A,B)$ of order three, one of its sides consists of the vertices in~$R_1$ and the other of those in~$R_2$.
\end{definition}
\fi

\noindent As we already indicated in the beginning of this section, for nice separations~$(A,B)$ of order three, $G[A\cap B]$~might not necessarily be an induced path.  Since sequences of such separations obviously do not satisfy \cref{wfsf4a} and do not obviously satisfy \cref{wfsf4b}, it is challenging to construct \wfsf{}s from long sequences of such \emph{triangular separations}:
\begin{definition}[Triangular separation]
  A nice separation~$(A,B)$ of order three such that $G[A\cap B]$ is a triangle (\ie, a $K_3$) is called \emph{triangular} and said to \emph{induce a triangle}.  A separation $(A, B)$ is \emph{$L_1$-nontrivial} if $(A \setminus B) \cap L_1 \neq \emptyset \neq (B \setminus A) \cap L_1$.
\end{definition}

\noindent With the next lemma we show that an $L_1$-nontrivial triangular separation can be converted into a separation of order two in a unique way.  This separation of order two forms an edge; we will call it a ``base'' of the triangle.\iflong{} This is illustrated in \cref{fig:bases}.\fi{} The idea is that if we construct a sufficiently long sequence of triangular separations with mutually distinct bases, then we can construct a \wfsf{} out of the sequence of bases.

\iflong{}
\tikzstyle {lvert} = [vert, minimum size=8mm]
\tikzstyle {apart} = [pattern=north west lines,  pattern color=black, opacity=0.2]
\tikzstyle {cpart} = [pattern=north east lines, pattern color=black, opacity=0.2]
\begin{figure}
  \centering
  \begin{tikzpicture}[x=2cm,y=2cm]
    \node (l1a) at (-1,1) {};
    \node[lvert] (l1b) at (1,1) {$u$};
    \node (l1c) at (2,1) {};
    \node (l1d) at (2,0) {};
    \node[lvert] (l1e) at (1,0) {$w$};
    \node (l1f) at (-1,0) {};
    \node[lvert] (v) at (0,0.5) {$v$};

    \node (AnC) at (-0.65,0.5) {$A\cap C$};
    \node (AnC) at (0.55,0.5) {$C\setminus A$};

    \begin{pgfonlayer}{background}
      \draw[layer] (l1a.center)--(l1b.center)--(l1c.center) to[bend left=90] (l1d.center) -- (l1e.center) -- (l1f.center) to[bend left=90] (l1a.center);

      \fill[apart] (l1a.center)--(l1b.center)--(v.center)--(l1e.center) -- (l1f.center) to[bend left=90] (l1a.center);
      \fill[cpart] (l1a.center)--(l1b.center)--(l1e.center) -- (l1f.center) to[bend left=90] (l1a.center);

      \draw (l1b.center)--(l1e.center)--(v)--(l1b.center)--cycle;
      \draw[ultra thick] (l1b.center)--(l1e.center);
    \end{pgfonlayer}
  \end{tikzpicture}
  \hfill
  \begin{tikzpicture}[x=2cm,y=2cm]
    \node (l1a) at (0,1) {};
    \node[lvert] (l1b) at (1,1) {$u$};
    \node (l1c) at (3,1) {};
    \node (l1d) at (3,0) {};
    \node[lvert] (l1e) at (1,0) {$w$};
    \node (l1f) at (0,0) {};
    \node[lvert] (v) at (2,0.5) {$v$};

    \node (AnC) at (1.35,0.5) {$A\setminus C$};
    \node (AnC) at (0.5,0.5) {$A\cap C$};

    \begin{pgfonlayer}{background}
      \draw[layer] (l1a.center)--(l1b.center)--(l1c.center) to[bend left=90] (l1d.center) -- (l1e.center) -- (l1f.center) to[bend left=90] (l1a.center);

      \fill[apart] (l1a.center)--(l1b.center)--(v.center)--(l1e.center) -- (l1f.center) to[bend left=90] (l1a.center);
      \fill[cpart] (l1a.center)--(l1b.center)--(l1e.center) -- (l1f.center) to[bend left=90] (l1a.center);

      \draw (l1b.center)--(l1e.center)--(v)--(l1b.center)--cycle;
      \draw[ultra thick] (l1b.center)--(l1e.center);
    \end{pgfonlayer}
  \end{tikzpicture}

  \caption{Two triangular separations~$(A,B)$.  In both pictures, the dashed line is layer~$L_1$, part~$A$ of the triangular separations~$(A,B)$ is hatched in a north west pattern.  For each separation, the separation~$(C,D)$ of order two as in \cref{child-seps} is shown, where part~$C$ is hatched in a north east pattern.  The edge~$C\cap D$ is drawn in bold.}
  \label{fig:bases}
\end{figure}
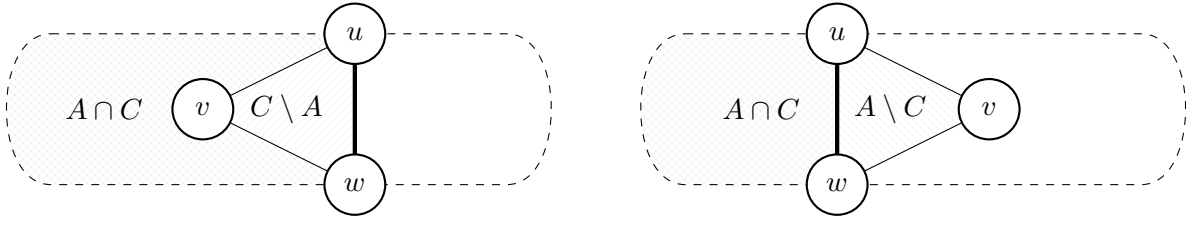
\else
\fi
\todo{Iyad: The difference between north-east and north-west pattern is not clear at all in the figure.}{}
\iflong{}Note that each nontrivial separation of order two is also $L_1$-nontrivial. But this may not be the case for separations of order three if they are triangular and one edge of the triangle is incident with the outer face. %
\fi
\begin{lemma}\label{child-seps}
  Let $(A, B)$ be a nice triangular separation in~$G$. There is a separation $(C, D)$ of order two such that $C \cap D = A \cap B \cap L_1$, and either $A \subseteq C$ and $B \cap L_1 \subseteq D$ or $B \subseteq D$ and $A \cap L_1 \subseteq C$. Moreover, if $(A, B)$ is $L_1$-nontrivial, then $(C, D)$ is unique.
\end{lemma}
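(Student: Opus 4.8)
The plan is to read the order-two separation off directly from the edge that closes the triangle, and then pin down its two sides using the embedding together with \cref{lem:2seps-are-edges}. First I would fix notation: write $A\cap B\cap L_1=\{u,w\}$ and let $v$ be the third vertex of the triangle $G[A\cap B]$, so $v\notin L_1$, $v$ is the common neighbour of $u$ and $w$, and $P=(u,v,w)$ is the three-vertex path from \cref{def:nicesep} that splits the closed region $R$ bounded by $L_1$ into $R_A$ (with vertex set $A$) and $R_B$ (with vertex set $B$). Since $(A,B)$ is triangular, $e:=\{u,w\}\in E(G)$ and both endpoints of $e$ lie on $L_1$; hence $e$ induces a nice separation $(C,D)$ of order two with $C\cap D=\{u,w\}=A\cap B\cap L_1$, which already gives the stated intersection. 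It remains to say which vertices lie on which side.

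The geometric crux is that the triangle $uvw$ bounds a closed region $\Delta\subseteq R$ whose boundary is $P\cup e$, and that $\Delta$ is a connected subset of $R$ whose interior avoids the interior of $P$; consequently $\Delta\subseteq R_A$ or $\Delta\subseteq R_B$. Suppose $\Delta\subseteq R_B$. Then the two closed regions into which $e$ cuts $R$ are $R_A\cup\Delta$ (which contains $v$) and $R_B\setminus\Delta$ (which does not). Gluing $R_B$ and $\Delta$ along $P$ and using that $G$ is a triangulated disk, the set of vertices in $\overline{R_A\cup\Delta}$ equals $A\cup V(\Delta)\supseteq A$; moreover $R_B\setminus\Delta$ meets $L_1$ in the same arc as $R_B$, so the set of vertices in $\overline{R_B\setminus\Delta}$ contains $B\cap L_1$. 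Taking $C:=V(\overline{R_A\cup\Delta})$ and $D:=V(\overline{R_B\setminus\Delta})$ then gives $A\subseteq C$ and $B\cap L_1\subseteq D$. The case $\Delta\subseteq R_A$ is symmetric and yields $B\subseteq D$ and $A\cap L_1\subseteq C$. This establishes existence; \cref{fig:bases} illustrates the two cases.

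For uniqueness under the hypothesis that $(A,B)$ is $L_1$-nontrivial, I would first note that $G$ has at least four vertices, namely $u$, $v$, $w$ and a vertex of $(A\setminus B)\cap L_1$, so \cref{lem:2seps-are-edges} applies to every nontrivial order-two separation with separator $\{u,w\}$ and forces its two sides to be the two connected components of $G-\{u,w\}$ (each augmented by $\{u,w\}$); in particular, up to swapping the sides, there is exactly one such separation. I would then check that $(C,D)$ is itself nontrivial: a vertex of $(A\setminus B)\cap L_1$ lies in the interior of the $L_1$-arc bounding $R_A$, and a vertex of $(B\setminus A)\cap L_1$ in the interior of the arc bounding $R_B$, and from the description of $C$ and $D$ above each of these lies strictly on one side, so $C\setminus D\neq\emptyset\neq D\setminus C$. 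Finally, the swapped labelling satisfies neither alternative of the statement: it would place $v$, which lies in both $A$ and $B$, outside the side required to contain $A$ (respectively $B$), since $v$ is neither $u$, nor $w$, nor on $L_1$. Hence the labelling is forced as well, and $(C,D)$ is unique.

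The step I expect to be the main obstacle is the honest bookkeeping of how $R_A$, $R_B$, and $\Delta$ sit relative to the cut made by $e$ — there are essentially two nestings to treat — and, within that, justifying the vertex-set identities such as $V(\overline{R_A\cup\Delta})=A\cup V(\Delta)$ from the fact that $G$ is a triangulated disk (so that the vertices inside $\Delta$ form a single component of $G-\{u,v,w\}$ and thus genuinely land on one side of $(A,B)$) rather than from the picture alone.
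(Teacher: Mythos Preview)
Your existence argument is essentially the paper's: both read $(C,D)$ off the two regions into which the edge $\{u,w\}$ cuts the disk bounded by $L_1$ and match them against $R_A$ and $R_B$. The paper avoids your $\Delta$ detour by simply letting $R_1$ be the region (cut out by the edge) that contains $v$ and observing that, since $P$ cannot cross the edge $\{u,w\}$, at least one of $R_A$, $R_B$ lies entirely inside $R_1$; this gives $A\subseteq C$ or $B\subseteq D$ directly, without naming $\Delta$ or worrying about vertices that might live inside it---the bookkeeping you flag as ``the main obstacle'' simply does not arise.

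There is a slip in your uniqueness argument. You claim the swapped pair $(D,C)$ fails \emph{both} alternatives because $v$ ends up on the wrong side, but $v$ only kills one of them. Say the unswapped $(C,D)$ satisfies $A\subseteq C$, so $v\in C\setminus D$. For the swap, ``$A\subseteq D$'' indeed fails via $v$, but ``$B\subseteq C$'' does not: $v\in B$ and $v\in C$. What actually rules out ``$B\subseteq C$'' is $L_1$-nontriviality: a vertex of $(B\setminus A)\cap L_1$ lies in $D\setminus C$, hence outside $C$. (Symmetrically in the other case.) The paper sidesteps this case analysis entirely: it observes that its construction is ambiguous only when both $R_A$ and $R_B$ lie in $R_1$, and $L_1$-nontriviality forbids that because the two guaranteed $L_1$-vertices in $A\setminus B$ and $B\setminus A$ must fall into different regions $R_1$, $R_2$. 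You should also explicitly dispose of the two trivial separations $(\{u,w\},V)$ and $(V,\{u,w\})$, which your appeal to \cref{lem:2seps-are-edges} does not cover; they are excluded by the same mix of $v\notin\{u,w\}$ and $L_1$-nontriviality.
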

\iflong
\begin{proof}
  \looseness=-1 Let $S = A \cap B \cap L_1$. Since $(A, B)$ is a triangular separation, $S \subseteq L_1$ is an edge in $G$, splitting the closed region $R$ delimited by $L_1$ into two regions $R_1$, $R_2$. Fix $R_1$ to be those region that contains the middle vertex of the path~$P$ that induces $(A, B)$ (note that not both regions can contain the middle vertex). There are two separations induced by~$S$: $(V(R_1), V(R_2))$ and $(V(R_2), V(R_1))$, where $V(R)$ denotes the set of vertices contained in region~$R$. We claim that one of these separations fulfills the conditions of the lemma.

  The three-vertex path~$P$ with endpoints in $S$ separates $R$ into a region $R_A$ containing $A$ and a region $R_B$ containing~$B$. Since $P$ cannot cross~$S$ and since the middle vertex of $P$ is in $R_1$, at least one of $R_A$ or $R_B$ is contained in $R_1$. If $R_A$ is contained in $R_1$, then we take $(C, D) := (V(R_1), V(R_2))$. Analogously, if $R_B$ is in~$R_1$, then we take $(C, D) := (V(R_2), V(R_1))$. Clearly, $(C, D)$ fulfills the condition that $A \subseteq C$ or $B \subseteq D$. To see that in the first case also $B \cap L_1 \subseteq D$, observe that the boundary of~$R_B$ differs from the boundary of~$R_2$ only in~$P$. Since $P$~intersects~$L_1$ in the same points as~$S$, the region~$R_B$ cannot enclose more vertices of~$L_1$ than $R_2$. The proof for showing that if $B \subseteq D$ then $A \cap L_1 \subseteq C$ is analogous. Hence, $(C, D)$ exists as claimed.

  It remains to show uniqueness in the case when $(A, B)$ is $L_1$-nontrivial. %
  To see this, note that ambiguity in the definition of $(C, D)$ can only occur if both~$R_A$ and~$R_B$ are contained in~$R_1$. This is impossible, however: Because $(A, B)$ is $L_1$-nontrivial, each of $A \setminus B$ and $B \setminus A$~contains a vertex of~$L_1$. One of these vertices is in~$R_1$, while the other is in~$R_2$.
\end{proof}
\fi
\begin{definition}[Base of a triangular separation]\label[definition]{bases}
  For a nice, triangular separation $(A, B)$, we call a separation~$(C, D)$ as in \cref{child-seps} a \emph{\chld{}} of $(A, B)$. %
  If, in addition, $(A, B)$ is $L_1$-nontrivial, we say that $(A, B)$~\emph{points left} if $A \subseteq C$ and that it \emph{points right} otherwise.
\end{definition}
\noindent Note that $L_1$-trivial triangular separations~$(A, B)$ have both $(V(G), A \cap B \cap L_1)$ and $(A \cap B \cap L_1, V(G))$ as \chlds{}.\iflong{} Moreover, note that the separation~$(A,B)$ shown in the left picture of \cref{fig:bases} points left, whereas the separation in the right picture points right. \fi{}
\paragraph{Inductive construction of a large sequence of nice separations.} We now show how to construct a large family of nice separations, from which a long \wfsf{} will be extracted.  That is, as described in the outline of the approach, given a separation~$(A,B)$, we want to construct a new separation~$(A',B')$ such that the potential function fulfills $q(B')\geq (q(B)-1)/\ell$ for some small number~$\ell$. The blocks play a crucial role when defining the new separation; we consider them first.\iflong{} The proof of the following lemma is illustrated in \cref{fig:gear}.\fi %

\iflong{}
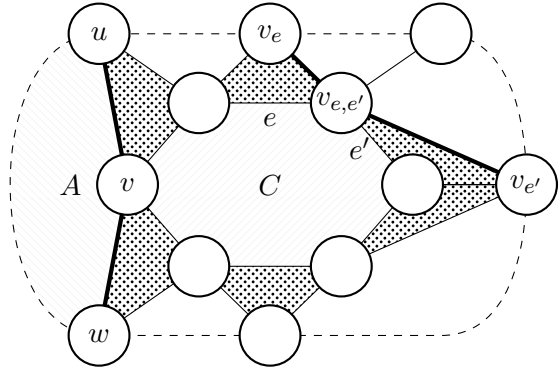
\begin{figure}
  \tikzstyle {devoid} = [pattern=crosshatch dots, opacity=1]
  \tikzstyle {block} = [cpart]
  \centering
  \begin{tikzpicture}[x=1.5cm,y=1cm]
    \node[lvert] (c1) at (180:1.25) {$v$};
    \node[lvert] (c2) at (120:1.25) {};
    \node[lvert] (c3) at (60:1.25) {$v_{e,e'}$};
    \node[lvert] (c4) at (0:1.25) {};
    \node[lvert] (c5) at (-60:1.25) {};
    \node[lvert] (c6) at (-120:1.25) {};

    \node[lvert] (l1a) at (-1.5,2) {$u$};
    \node[lvert] (l1b) at (0,2) {$v_e$};
    \node[lvert] (l1c) at (1.5,2) {};
    \node[lvert] (l1d) at (2.25,0) {$v_{e'}$};
    \node (l1e) at (1.5,-2) {};
    \node[lvert] (l1f) at (0,-2) {};
    \node[lvert] (l1g) at (-1.5,-2) {$w$};

    \node at (0,0) {$C$};
    \node at (-1.75,0) {$A$};

    \begin{pgfonlayer}{background}
      \draw (c1.center) -- (c2.center) -- (l1a.center) -- cycle;
      \fill[devoid] (c1.center) -- (c2.center) -- (l1a.center) -- cycle;

      \draw (c2.center) -- node[below] {$e$} (c3.center)  -- (l1b.center) -- cycle;
      \fill[devoid] (c2.center) -- (c3.center) -- (l1b.center) -- cycle;

      \draw (c3.center) -- (l1c.center);

      \draw (c3.center) -- node[left,pos=0.55] {$e'$} (c4.center) -- (l1d.center) -- cycle;
      \fill[devoid] (c3.center) -- (c4.center) -- (l1d.center) -- cycle;

      \draw (c4.center) -- (c5.center) -- (l1d.center) -- cycle;
      \fill[devoid] (c4.center) -- (c5.center) -- (l1d.center) -- cycle;

      \draw (c5.center) -- (c6.center) -- (l1f.center) -- cycle;
      \fill[devoid] (c5.center) -- (c6.center) -- (l1f.center) -- cycle;

      \draw (c6.center) -- (c1.center) -- (l1g.center) -- cycle;
      \fill[devoid] (c6.center) -- (c1.center) -- (l1g.center) -- cycle;

      \draw[layer] (l1a.center) -- (l1b.center) -- (l1c.center) to[out=0,in=90] (l1d.center) to[out=-90, in=0] (l1e.center) -- (l1f.center) -- (l1g.center) to[bend left=90] (l1a.center) -- cycle;;

      \fill[apart] (l1a.center) -- (c1.center) -- (l1g.center) to[bend left=90]  (l1a.center) -- cycle;

      \fill[block] (c1.center) -- (c2.center) -- (c3.center) -- (c4.center) -- (c5.center) -- (c6.center) -- cycle;

      \draw[ultra thick] (l1a.center) -- (c1.center) -- (l1g.center);

      \draw[ultra thick] (l1b.center) -- (c3.center) -- (l1d.center);
    \end{pgfonlayer}
  \end{tikzpicture}
  \caption{Illustration for the proof of \cref{lem:gearproperties1}.  The dashed line is layer~$L_1$.  A separation~$(A,B)$ is shown: the $A$-part is hatched in north west lines and the path~$A\cap B=\{u,v,w\}$ is shown in bold.  A block~$C$ of~$G-L_1$ is shown hatched in north east lines.  Each outer edge of~$C$ is incident with a dotted triangle: these triangles are devoid of vertices of~$G$.  For two  outer edges~$e,e'$ of~$C$, the path~$(v_e,v_{e,e'},v_{e'})$, which induces another separation, is shown in bold.}
  \label{fig:gear}
\end{figure}
\else
\fi

\begin{lemma}
\label{lem:gearproperties1}
Let $(A, B)$ be a nice separation of order three for $G$, where $A \cap B = \{u, v, w\}$ with $v \notin L_1$. Suppose that there is a block~$C$ in $G[B \setminus L_1]$ containing a triangle~$\{v, x_1, x_2\}$.  Then, there is a nice separation~$(A', B')$ of order three for~$G$, where $A' \cap B' =\{u', v', w'\}$, $u', w' \in L_1$, and $v' \in C$, satisfying $A \subsetneq A'$, $B \supsetneq B'$, and $q(B') \geq (q(B)-1)/|C|$.%
\end{lemma}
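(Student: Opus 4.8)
\emph{Overview.} The plan is to surround the block~$C$ by a ``gear'' of triangular faces reaching out to~$L_1$, use this gear to produce a family of nice separations of order three that each refine~$(A,B)$, and then select one of them by averaging the potential~$q$ over the gear.

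\emph{Step 1: the gear and its spokes on $L_1$.} I would first look at~$C$ with the plane embedding inherited from~$G$. Because a cycle of a triangulated disk encloses no vertex of~$L_1$, and because $v$ has neighbours (namely $u$ and~$w$) outside~$C$, the face of~$C$ that contains the rest of~$G[B]$ is the outer face of~$C$; let $Z=(c_1,\dots,c_k)$ be its bounding cycle, with $v=c_1$ and $k=|V(Z)|\le |C|$. For each edge $e=c_ic_{i+1}$ of~$Z$ (indices modulo~$k$), let $z_i\notin C$ be the third vertex of the triangular face of~$G$ incident with~$e$ on the outer side of~$C$. The first key claim is that $z_i\in L_1$: otherwise $z_i\in B\setminus L_1$ and $\{c_i,c_{i+1},z_i\}$ is a triangle of~$G[B\setminus L_1]$ sharing the edge~$e$ with the block~$C$, so $C$ together with~$z_i$ would be biconnected, contradicting maximality of~$C$. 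The same maximality argument, applied to a triangulation of an inner face of~$C$ (which has at least three boundary vertices), shows that the disk~$D_C$ bounded by~$Z$ contains no vertex of~$G[B\setminus L_1]$ besides those of~$C$; I will use this in the counting.

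\emph{Step 2: lunes and candidate separations.} Together with the flap triangles, $Z$ bounds a disk~$\Delta$ whose boundary walk is $c_1,z_1,c_2,z_2,\dots,c_k,z_k$. Inside the region~$R$ enclosed by~$L_1$, the complement of~$\Delta$ decomposes into ``lunes'' $\Lambda_1,\dots,\Lambda_k$, where $\Lambda_j$ is bounded by the walk $Q_j=(z_{j-1},c_j,z_j)$ and the arc of~$L_1$ from~$z_{j-1}$ to~$z_j$. Whenever $z_{j-1}\ne z_j$, the walk $Q_j$ is a three-vertex path whose endpoints lie in~$L_1$ and whose middle vertex $c_j\in C$ does not; it therefore induces a nice separation $(A_j,B_j)$ of order three with $A_j\cap B_j=\{z_{j-1},c_j,z_j\}$, where I take $B_j$ to be the vertices of the closed lune~$\overline{\Lambda_j}$ and $A_j$ the vertices of the complementary region (which contains~$C-c_j$). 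Since $u$ and~$w$ are neighbours of~$c_1$ lying outside~$C$, the path $P=(u,v,w)$ lies inside~$\overline{\Lambda_1}$, hence its $A$-side~$R_A$ lies inside~$\Lambda_1$ and $A\subseteq\overline{\Lambda_1}$. Consequently, for every $j\ne 1$ with $z_{j-1}\ne z_j$ one has $A\subsetneq A_j$ (witnessed by $c_j\in A_j\setminus A$) and $B\supsetneq B_j$ (witnessed by $v\in B\setminus B_j$); these are precisely the conditions the lemma imposes on~$(A',B')$.

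\emph{Step 3: averaging.} By Step~1, every vertex of~$B\cap L_1$ lies on one of the arcs used above, hence in some~$\overline{\Lambda_j}$, and $C$ is the only block of~$G[B\setminus L_1]$ that is not contained in a single closed lune --- each other block being a maximal biconnected subgraph of some~$G[\overline{\Lambda_j}\setminus L_1]$. Therefore $\sum_{j=1}^{k}q(B_j)\ge q(B)-1$, so some lune satisfies $q(B_j)\ge (q(B)-1)/k\ge (q(B)-1)/|C|$. Finally I must secure such a lune with $j\ne 1$ and $z_{j-1}\ne z_j$: a degenerate lune ($z_{j-1}=z_j$) contributes only a bounded amount to the above sum, and the potential counted by~$q(B_1)$ is shared with the $A$-side of~$P$ (so it cannot be realised by an admissible refinement of~$(A,B)$); handling these two points --- by comparing~$q(B)$ with~$2k$ when the maximum sits at a degenerate lune, and re-averaging over the admissible lunes otherwise --- yields an admissible index~$j$ with $q(B_j)\ge (q(B)-1)/|C|$. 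Setting $(A',B'):=(A_j,B_j)$, with $u'=z_{j-1}\in L_1$, $v'=c_j\in C$, and $w'=z_j\in L_1$, completes the argument.

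\emph{Main obstacle.} I expect the delicate parts to be the plane-topology bookkeeping of Steps~1--2 (extracting the outer cycle~$Z$, the flap triangles, and the lune decomposition cleanly, and verifying that each~$Q_j$ really induces the claimed nice separation) and, above all, the last point of Step~3: guaranteeing that a heavy lune can be chosen strictly ``to the right'' of~$(A,B)$ --- that is, different from the lune~$\Lambda_1$ in which~$P$ itself sits, and of order three --- while still retaining a $1/|C|$-fraction of the potential.
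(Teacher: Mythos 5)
Your Steps~1 and~2 are essentially the paper's own construction: the outer cycle $Z$ of the block~$C$, the observation (from maximality of the block) that the apex of every outward flap triangle must lie on~$L_1$, one candidate nice separation of order three for each vertex $c_j\neq v$ of~$Z$ whose two flanking apexes differ, and the verification that each such candidate properly refines $(A,B)$. Up to notation (your lunes are the paper's regions~$B_{e,e'}$), this part coincides with the paper.

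The gap is exactly where you flagged it, and the patch you sketch does not close it. The potential of~$B$ sitting in the closed lune~$\Lambda_1$ on the $B$-side of~$P$ --- the two ``pockets'' bounded by the edge $\{v,w\}$ (respectively $\{u,v\}$), the stretch of~$L_1$ from~$w$ (respectively~$u$) to the nearest spoke, and that spoke edge back to~$v$ --- can be arbitrarily large. Concretely, take $A=\{u,v,w\}$, let $C$ be a single triangle $\{v,x_1,x_2\}$, and let $v$ additionally be adjacent to a long run of $L_1$-vertices between~$w$ and the spoke of the $Z$-edge $\{v,x_1\}$: then every admissible lune ($j\neq 1$) has potential bounded by a constant while $(q(B)-1)/|C|$ grows with~$n$, so ``re-averaging over the admissible lunes'' cannot give the bound, and comparing with~$2k$ does not help (degenerate lunes are in fact harmless, since they enclose nothing beyond boundary vertices shared with their neighbours). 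Moreover, your parenthetical claim that the potential in~$\Lambda_1$ ``cannot be realised by an admissible refinement'' is mistaken, and that is the way out: the lemma only requires the middle vertex of the new separator to lie in~$C$, and $v\in C$. So enlarge the candidate family by the (at most two) separations induced by the paths $(w,v,z)$ and $(z',v,u)$, where $z,z'$ are the spokes of the two $Z$-edges incident with~$v$ (whenever the corresponding pocket is nonempty); each pocket becomes the $B'$-side of one of these, both properly refine $(A,B)$ (any vertex of $C\setminus\{v\}$ witnesses $A\subsetneq A'$ and $B'\subsetneq B$), and now every unit of $q(B)$ except the block~$C$ itself is charged to one of at most $|C|+1$ candidates, which is all the pigeonhole needs (the $+1$ is immaterial for how the lemma is used later). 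For what it is worth, the paper's write-up elides this point --- it asserts that every vertex of $B\cap L_1$ and every block other than~$C$ lies in some~$B_{e,e'}$, which ignores the pockets at~$v$ --- so your instinct that this is the delicate step was correct; the repair, however, is to add the two separators through~$v$, not to re-average over the lunes away from~$v$.
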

\iflong
\begin{proof}
  The path~$(u, v, w)$ splits the region of the plane delimited by the outermost layer~$L_1$ into two closed regions, one containing~$A$ and the other containing~$B$, whose boundaries overlap on~$u, v, w$; let $R$~be the region of the two that contains~$B$. Since $G$~is a triangulated disk, so is~$R$. Let $\gamma$~be the boundary cycle of~$R$ formed by~$u, v, w$ and one of the two paths between~$u$ and~$w$ on~$L_1$, and note that every vertex in $B \cap L_1$ is on~$\gamma$.  Since $C$~is a block in~$G[B \setminus L_1]$ and $G$~is a triangulated disk, $C$~is a triangulated disk as well.  Therefore, the outermost layer~$\gamma_C$ of~$C$ is a cycle containing~$v$. Since $C$~is a block and $R$~is triangulated, it follows from the maximality of~$C$ that, for each edge~$e$ of~$\gamma_C$, there is a vertex~$v_e \in \gamma$ such that $v_e$~forms a triangle with~$e$ (\ie, $v_e$ is adjacent to both endpoints of $e$) whose interior is devoid of vertices of~$G$. Any two consecutive edges $e$ and~$e'$ on~$\gamma_C$ such that $v_e \neq v_e'$ define a nice separation~$(A_{e,e'}, B_{e,e'})$ for~$G$ of order three.  It is induced by the (not necessarily induced) path~$(v_e, v_{e,e'}, v_e')$, where $v_{e,e'} \in \gamma_C$~is the common endpoint of the two consecutive edges~$e$ and~$e'$. (This is true because $(v_e, v_{e,e'}, v_{e'})$ is a path between two vertices on~$L_1$ that contains a vertex not in~$L_1$.) In the separation~$(A_{e,e'}, B_{e,e'})$, we designate~$A_{e,e'}$ to be the side of the separation that is delimited by the path~$(v_e, v_{e,e'}, v_e')$ and containing~$A$, and $B_{e,e'}$ to be the other side, which is contained in~$B$. Clearly, for $v_{e,e'} \neq v$, we have $A \subsetneq A_{e,e'}$ and $B_{e,e'} \subsetneq B$.  Now we go around $\gamma_C$ defining the separations~$(A_{e,e'}, B_{e,e'})$ for each two consecutive edges~$e, e'$ on~$\gamma_C$ such that $v_{e, e'} \neq v$.  The vertices~$v_e$, where $e \in \gamma_C$, belong to~$\gamma$, and every vertex in~$\gamma\setminus\{v\}$ is either equal to one of the~$v_e$'s or is situated between two of them on~$\gamma$. Therefore, every vertex in~$B \cap L_1$ belongs to~$B_{e,e'}$ for some separation~$(A_{e,e'}, B_{e,e'})$. Moreover, because $\gamma_C$~is a cycle inside the cycle~$\gamma$, it is easy to verify that each block in~$G[B\setminus L_1]$ other than~$C$ must belong to~$B_{e,e'}$ for some separation~$(A_{e,e'}, B_{e,e'})$ defined in the above process. Let $(A', B')$~be the nice separation among all the~$(A_{e,e'}, B_{e,e'})$ that maximizes the value~$q(B_{e,e'})$. From the above discussion, it follows that $A \subsetneq A'$, $B \supsetneq B'$, and $q(B') \geq (q(B)-1)/|C|$ (the minus 1 is to account for $C$).
\end{proof}
\fi

\noindent We now use \cref{lem:gearproperties1} in the inductive construction of nice separations.

\begin{lemma}\label{next-small-sep}
  \looseness=-1 Let %
  $(A, B)$~be a nice separation in~$G$ and $\ell$~be the maximum of the number~2 and the size of a largest block in $G - L_1$. If $q(B) \geq \ell$, then there is a nice separation~$(A', B')$ such that:\nopagebreak[4]
  \begin{compactenum}
  \item $A \subseteq A'$, $B \supseteq B'$;\label[ssprop]{ss:nested}
  \item $q(B') \geq (q(B)-1)/\ell$; and\label[ssprop]{ss:smallinc}
  \item if $A = A'$ or $B = B'$, then $(A, B)$ and $(A', B')$ are of different order.\label[ssprop]{ss:wfsub}
  \end{compactenum}
\end{lemma}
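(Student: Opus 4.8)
The plan is to push the separator one ``ring of triangles'' deeper into the $B$-side, arranging that the potential~$q$ drops by at most a multiplicative factor~$\ell$ together with the additive slack of~$1$. I would split into cases on the order of~$(A,B)$ and on whether the separator meets a block of~$G[B \setminus L_1]$.

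First I would reduce to the case that $(A,B)$ has order three. If $(A,B)$ has order two, then $A \cap B = \{u,w\}$ with $u,w \in L_1$; if $G$ is a triangle the claim is checked by inspection, and otherwise the edge~$\{u,w\}$ is incident, on its $B$-side, to a triangular face~$\{u,v,w\}$ with $v \notin L_1$ (the case $v \in L_1$ again forces $G$ to be a triangle). That face has empty interior, so the path~$(u,v,w)$ induces (\cref{def:nicesep}) a nice separation~$(\tilde A,\tilde B)$ of order three with $\tilde A = A \cup \{v\} \supsetneq A$ and $\tilde B = B$. Since $q(\tilde B)=q(B)$ and the orders of $(A,B)$ and $(\tilde A,\tilde B)$ differ, $(\tilde A,\tilde B)$ already witnesses the lemma \emph{unless} a further advance is possible; in the latter case any further advance is strict on both sides, rendering \cref{ss:wfsub} vacuous, so it suffices to prove the statement for~$(\tilde A,\tilde B)$. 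Hence from now on I would assume $A \cap B = \{u,v,w\}$ with $v \notin L_1$, $u,w \in L_1$, and aim for a strict advance.

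Since $v \notin L_1$, its neighbours in~$G$ form a cycle whose part on the $B$-side of the path~$(u,v,w)$ is a sub-path $u=n_0,n_1,\dots,n_d=w$, with the faces of~$G$ around~$v$ on that side being the triangles~$\{v,n_k,n_{k+1}\}$. If $v$ lies in a block~$C$ of~$G[B\setminus L_1]$ containing a triangle through~$v$ --- in particular whenever two consecutive neighbours~$n_k,n_{k+1}$ both avoid~$L_1$ --- then I would apply \cref{lem:gearproperties1} to~$(A,B)$, obtaining a nice separation~$(A',B')$ of order three with $A \subsetneq A'$, $B \supsetneq B'$ and $q(B') \ge (q(B)-1)/|C|$; as $C$~is a block of~$G-L_1$ we have $|C|\le \ell$, so \cref{ss:smallinc} holds and \cref{ss:wfsub} is vacuous. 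In the remaining (``fan'') case no triangle through~$v$ lies in~$G[B\setminus L_1]$, so every~$n_k$ off~$L_1$ is isolated in the cyclic order. Here, for $n_{m+1}\in L_1$, the path~$(u,v,n_{m+1})$ induces a nice separation of order three whose $B$-side is~$\{v\}$ together with the union of the regions cut off by~$\{n_0,n_1\},\dots,\{n_m,n_{m+1}\}$, while each chord~$\{n_k,n_{k+1}\}$ with both ends on~$L_1$ induces (with the region it cuts off) a nice separation of order two. I would show, by tracking the vertices of~$B\cap L_1$ and the blocks of~$G[B\setminus L_1]$ across this decomposition, that~$q$ is additive over these regions up to the slack~$1$ (which absorbs the isolated vertex~$v$ and the shared fan-vertices), so that either a prefix-union (as~$B'$, with a path through~$v$ as separator) has potential $\ge (q(B)-1)/2$, or one region carries the majority of the potential and its order-two bounding edge (or, if it lies at the end of the fan, the order-three path avoiding~$v$) serves as separator; either way $q(B')\ge(q(B)-1)/2\ge(q(B)-1)/\ell$, $A\subsetneq A'$, $B\supsetneq B'$, and \cref{ss:wfsub} is vacuous.

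I expect the fan case to be the main obstacle: making the ``$q$ additive up to~$1$'' bookkeeping precise despite the shared fan-vertices, the isolated vertex~$v$, and the off-$L_1$ neighbours of~$v$ (each contributing a cut-edge block, which forces the splitting point to be chosen on~$L_1$), and clearing away the handful of degenerate configurations where $q(B)$ is a small constant.
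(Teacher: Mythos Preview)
Your reduction from order two to order three has a gap: the claim that the third vertex~$v$ of the face on the $B$-side lying in~$L_1$ ``forces~$G$ to be a triangle'' is false. The edge~$\{u,w\}$ may well be a chord of~$L_1$ with the adjacent face on the $B$-side having its third vertex on~$L_1$. For instance, take~$L_1$ a pentagon~$(a,b,c,d,e)$ triangulated by the chords~$\{a,c\}$ and~$\{a,d\}$, and let~$(A,B)=(\{a,b,c\},\{a,c,d,e\})$; the face on the $B$-side of~$\{a,c\}$ is~$\{a,c,d\}$ with~$d \in L_1$, and~$G$ is certainly not a triangle. Here your passage to a path~$(u,v,w)$ with~$v \notin L_1$ is simply unavailable. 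The paper handles this subcase directly: when the third face-vertex lies in~$L_1$, each of the two other edges of that face has both endpoints on~$L_1$ and hence induces a nice separation of order two whose $A$-side contains~$\{u,w\}$; the two resulting $B$-sides together cover~$B$, so the one with larger potential gives~$q(B') \geq q(B)/2 \geq (q(B)-1)/\ell$, with strict containments on both sides.

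Your fan case also works harder than necessary. Rather than tracking the potential across the whole fan and bisecting, the paper simply checks whether~$v$ has \emph{any} neighbour~$x \in (B\setminus A)\cap L_1$. If so, the two paths~$(u,v,x)$ and~$(x,v,w)$ induce nice order-three separations whose $B$-sides together cover~$B$; take the one with larger potential and get~$q(B')\geq q(B)/2$ with strict containments. If not, then every neighbour of~$v$ in~$B \setminus A$ avoids~$L_1$; since consecutive fan-neighbours are adjacent, having two or more such neighbours already puts~$v$ in a block of~$G[B \setminus L_1]$ containing a triangle (your gear case), leaving only the possibilities of zero such neighbours (pass to the base of the now-empty triangle, $q(B')\geq q(B)-1$) or exactly one such neighbour~$x$ (then~$x$ is adjacent to both~$u$ and~$w$ via the two empty face-triangles, and~$(u,x,w)$ is a nice order-three separation with~$q(B')\geq q(B)-1$). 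No additivity bookkeeping is needed.
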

\iflong
\begin{proof} We distinguish between the cases of $(A,B)$~having order two or three.
  \begin{case}[$(A,B)$~is a separation of order two]
    Let $A \cap B = \{u, v\}$. Since $(A, B)$ is a nice separation, $\{u, v\}$~is an edge in~$G$. Since $q(B) \geq \ell > 2$, there is at least one vertex in~$B \setminus \{u, v\}$, and hence, there is an inner face~$F$ with $V(F) \subseteq B$ that is incident with~$\{u, v\}$ and that contains a vertex~$w \notin \{u, v\}$.
    \begin{subcase}[$w\in L_1$]
      Each of the two edges~$\{u, w\}$ and~$\{v, w\}$ is between two vertices on~$L_1$, and hence a separator for~$G$.  Thus, by \cref{lem:2seps-are-edges} $\{u, w\}$ induces a unique nice separation~$(A_1, B_1)$ of order two such that both~$u$ and~$v$ are in~$A_1$, and $\{v, w\}$ induces a unique nice separation~$(A_2, B_2)$ of order two such that both~$u$ and~$v$ are in~$A_2$. Let $(A', B')$ be the separation out of~$(A_1, B_1)$ and $(A_2, B_2)$ that maximizes~$q(B')$. Since $(A, B)$~is a nice separation of~$G$ such that $A \cap B = \{u, v\}$, and since $A'$~is the side of~$G$ that contains~$u$ and~$v$, it follows from the definition of~$A'$ that $A \subseteq A'$. Since $A' \cap B'$~is a separator contained in~$B$, it also follows that $B \supseteq B'$.  Now, the separation~$(A', B')$ was chosen to maximize~$q(B')$.  Thus, $q(B')\geq q(B)/2 \geq (q(B)-1)/\ell$ (because $q(B)$ is basically split between $q(B_1)$ and $q(B_2)$). Since $A'\ne A$ and $B'\ne B$, \cref{ss:wfsub} of the lemma is clearly satisfied by our choice of~$A'$ and~$B'$.
    \end{subcase}
    \begin{subcase}[$w\notin L_1$]
      Let $A' := A \cup \{w\}$ and $B':=B$. Since $(A, B)$~is a separation, clearly so is $(A', B')$.  Moreover, since $w \notin L_1$, $(A', B')$ is a nice separation.  Clearly, \cref{ss:nested} is fulfilled by~$(A', B')$.  \cref{ss:smallinc} is fulfilled because $B' = B$, and hence $q(B) = q(B') \geq (q(B)-1)/\ell$. Finally, $(A, B)$ and $(A', B')$ are clearly of different order, implying that \cref{ss:wfsub} holds.
    \end{subcase}
  \end{case}
  \begin{case}[$(A,B)$~is a separation of order three]
    Let $A \cap B = \{u, v, w\}$. By the definition of nice separation, there exists a vertex~$v\in (A \cap B)\setminus L_1$.  We distinguish whether $v$~has none, one, or multiple neighbors in~$B\setminus A$.
    \begin{subcase}[$v$ has no neighbors in $B \setminus A$] Since $v$~has no neighbors in~$B \setminus A$, $(A, B)$ induces an empty triangle~$\{u, v, w\}$.  Let $(A',B')$ be the unique \chld{} (see \cref{child-seps}) of~$(A, B)$  such that $A' \supseteq A$. \cref{ss:nested} holds. %
    \cref{ss:wfsub} holds because $(A, B)$~has order three and $(A', B')$~has order two. Finally, since $v \notin L_1$, we have $|B \cap L_1| = |B' \cap L_1|$. Moreover, the number of blocks in $G[B' \setminus L_1]$ is at least that in $G[B \setminus L_1]$ minus one. Therefore, $q(B') \geq q(B) - 1 \geq (q(B)-1)/\ell$, where the last inequality is true because $\ell \geq 2$.
    \end{subcase}

    \begin{subcase}[$v$ has a neighbor~$x\in(B\setminus A)\cap L_1$]\looseness=-1 It is easy to see that, in this case, each of the two paths~$(u, v, x)$ and~$(x, v, w)$ induces a nice separation of order three. Let $(A_1, B_1)$ be the separation induced by $(u, v, x)$, where $A_1$ is the side containing $w$, and $(A_2, B_2)$ that induced by $(x, v, w)$, where $A_2$ is the side containing $u$.  Note that $B_1 \subsetneq B$ (proper containment because~$w \notin B_1$) and $B_2 \subsetneq B$ (proper containment because~$u \notin B_2$). Moreover, we have $A \subsetneq A_1$ and $A \subsetneq A_2$. Let $(A', B')$~be the separation out of~$(A_1, B_1)$ and~$(A_2, B_2)$ maximizing~$q(B')$.  Since $V(B) = V(B_1) \cup V(B_2)$, it is easy to see that $q(B') \geq q(B)/2 \geq (q(B)-1)/\ell$. The above shows that \cref{ss:nested,ss:smallinc} hold.  Moreover, since the inclusions are proper, \cref{ss:wfsub} is satisfied.
    \end{subcase}

    \begin{subcase}[$v$ has exactly one neighbor~$x\in B \setminus A$, which is not in~$L_1$] Since $G$~is a triangulated disk, $x$~is a common neighbor of~$u$ and~$w$.  Moreover, the interior of the triangles~$(u, x, v)$ and~$(v, x, w)$ must be devoid of vertices of~$G$. Since $u, w \in L_1$ and $x \notin L_1$, $(u, x, w)$~induces a nice separation~$(A', B')$ of order three, where $A'=A \cup \{x\}$ and $B' = B \setminus \{v\}$.  Clearly \cref{ss:nested} is met. Moreover, since $B' = B \setminus \{v\}$, and $v \notin L_1$, $q(B') \geq q(B) -1 \geq (q(B)-1)/\ell$ (the first inequality is true because the number of blocks in~$G[B'\setminus L_1]$ is at least that in~$G[B\setminus L_1]$ minus one), and \cref{ss:smallinc} is met.  Because %
      neither $A=A'$ nor $B=B'$, \cref{ss:wfsub} holds.
    \end{subcase}

    \begin{subcase}[$v$ has at least two neighbors in $B \setminus A$ that are not in~$L_1$] Since $G$~is a triangulated disk, two neighbors~$x_1,x_2\in B\setminus A$ of~$v$ are adjacent, and hence, $v, x_1, x_2$ are part of a block~$C$ in~$G[B \setminus L_1]$. Therefore, the preconditions of \cref{lem:gearproperties1} are met, and there is a separation~$(A', B')$ satisfying \cref{ss:nested,ss:smallinc}.
    Since we have $A' \subsetneq A'$ and $B\supsetneq B'$, \cref{ss:wfsub} holds.\qedhere
  \end{subcase}
\end{case}
\end{proof}
\fi

\paragraph{Extracting a \wfsf{}.} \looseness=-1 By successively applying \cref{next-small-sep}, we can generate a long sequence of nice separations, given that our input graph is sufficiently large.  It remains to extract a long \wfsf{} from the long sequence of nice separations.
As mentioned before, we have to be careful when using nice separations~$(A,B)$ for which $G[A\cap B]$~is a triangle, since long sequences of triangles do not immediately fit into \cref{def-wfsf} of \wfsf{}s.  In \cref{child-seps}, we have already seen that $L_1$-nontrivial triangular separations can uniquely be mapped to nice separations of order two---their \chlds{}.  If the sequence of \chlds{} of triangular separations contains many mutually distinct \chld{}s, we will construct a \wfsf{} from the \chld{}s.   If not, then a long sequence of triangular separations will contain many triangles with a common base.  This is captured in the following definition\iflong{} and lemma and illustrated in \cref{fig:hinged-homogeneous}.\else .\fi{} 
\iflong{}
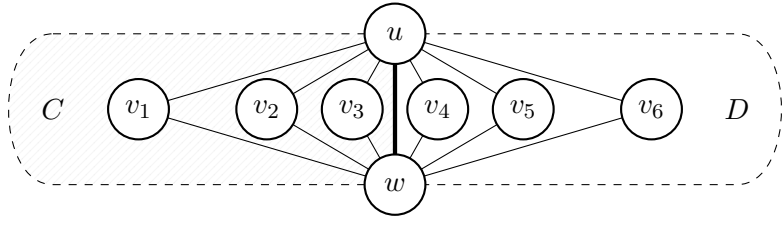
\begin{figure}
  \centering
  \begin{tikzpicture}[y=2cm, x=2.25cm]
    \node (l1a) at (-1,1) {};
    \node[lvert] (l1b) at (1,1) {$u$};
    \node (l1c) at (3,1) {};
    \node (l1d) at (3,0) {};
    \node[lvert] (l1e) at (1,0) {$w$};
    \node (l1f) at (-1,0) {};
    \node[lvert] (v1) at (-0.5,0.5) {$v_1$};
    \node[lvert] (v2) at (0.25,0.5) {$v_2$};
    \node[lvert] (v3) at (0.75,0.5) {$v_3$};

    \node[lvert] (v4) at (1.25,0.5) {$v_4$};
    \node[lvert] (v5) at (1.75,0.5) {$v_5$};
    \node[lvert] (v6) at (2.5,0.5) {$v_6$};

    \node (C) at (-1,0.5) {$C$};
    \node (D) at (3,0.5) {$D$};

    \begin{pgfonlayer}{background}
      \draw[layer] (l1a.center)--(l1b.center)--(l1c.center) to[bend left=90] (l1d.center) -- (l1e.center) -- (l1f.center) to[bend left=90] (l1a.center);

      \fill[cpart] (l1a.center)--(l1b.center)--(l1e.center) -- (l1f.center) to[bend left=90] (l1a.center);

      \draw (l1b.center)--(v1.center)--(l1e.center);
      \draw (l1b.center)--(v2.center)--(l1e.center);
      \draw (l1b.center)--(v3.center)--(l1e.center);
      \draw (l1b.center)--(v4.center)--(l1e.center);
      \draw (l1b.center)--(v5.center)--(l1e.center);
      \draw (l1b.center)--(v6.center)--(l1e.center);

      \draw[ultra thick] (l1b.center)--(l1e.center);
    \end{pgfonlayer}
  \end{tikzpicture}

  \caption{A hinged sequence of triangular separations induced by the triangles~$((u,v_i,w))_{1\leq i\leq 6}$ with their common edge drawn boldly.  All separations have a common \chld{}~$(C,D)$, whose $C$-part is hatched.  The sequence decomposes into a maximal homogeneous subsequence $((u,v_i,w))_{1\leq i\leq 3}$ pointing left and a maximal homogeneous subsequence~$((u,v_i,w))_{4\leq i\leq 6}$ pointing right.  Note that each homogeneous subsequence satisfies \cref{wfsf4b} of \wfsf{}s if we set~$\vcyca=u$ and~$\vcycb=w$.}
  \label{fig:hinged-homogeneous}
\end{figure}
\else{}
\fi{}

\begin{definition}[Linear, hinged, and homogenous sequences]
  We extend the definitions of triangular, $L_1$-nontrivial, and pointing left or right, to sequences of separations in a natural way: For some property $\Pi \in \{\text{linear}, \text{hinged}, \text{homogeneous}\}$, a sequence~\S is~$\Pi$ (\ie, satisfies $\Pi$) if each separation in~\S is~$\Pi$. Moreover, a sequence~\S of separations is
  \begin{compactitem}[\emph{homogeneous}]
  \item[\emph{linear}] if, for each pair~$(A, B), (A', B')$ of consecutive separations in \S, we have $A \subsetneq A'$ and $B \supsetneq B'$;
  \item[\emph{hinged}] if it is triangular and if, for each pair $(A, B), (A', B')$ of separations in~\S, we have $A \cap B \cap L_1 = A' \cap B' \cap L_1$; and
  \item[\emph{homogeneous}] if it is hinged and either points left or points right (in particular, \S is $L_1$-nontrivial and triangular).
  \end{compactitem}
\end{definition}

\iflong\noindent When two triangular separations have the same edge between two vertices in~$L_1$, then they have a common base:

\begin{lemma}\label{linear-sep-children}
  Let $(A, B)$ and $(A', B')$ be two triangular separations such that $A \cap B \cap L_1 = A' \cap B' \cap L_1$ and $A \subsetneq A'$ and $B \supsetneq B'$. Each \chld{} of $(A, B)$ is also a \chld{} of $(A', B')$.
\end{lemma}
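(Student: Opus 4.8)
The plan is to push the claim down onto the common separator, which on $L_1$ leaves only a two-vertex trace.  Write $S:=A\cap B\cap L_1=A'\cap B'\cap L_1$.  Since $(A,B)$ is triangular, $G[A\cap B]$ is a $K_3$, so its two vertices on $L_1$ are adjacent; hence $S=\{u,w\}$ with $\{u,w\}\in E(G)$ and $u,w\in L_1$, and $\{u,w\}$ is a separator of $G$ of order two.  Exactly as in the proof of \cref{child-seps}, the edge $\{u,w\}$ cuts the closed disk $R$ bounded by $L_1$ into two closed regions $R_1,R_2$ with $V(R_1)\cup V(R_2)=V(G)$ and $V(R_1)\cap V(R_2)=\{u,w\}$ (one of them degenerating to the edge $\{u,w\}$ itself when $\{u,w\}\in E(L_1)$).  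By \cref{child-seps} every \chld{} of $(A,B)$ has separator $S$, and every order-two separation with separator $\{u,w\}$ equals $(V(R_1),V(R_2))$ or $(V(R_2),V(R_1))$ --- this is \cref{lem:2seps-are-edges} in the nondegenerate case, and in the degenerate case these are the two trivial separations $(V(G),S)$ and $(S,V(G))$.  The same applies to $(A',B')$ since it has the same separator.  Thus both \chlds{} lie in the two-element set $\{(V(R_1),V(R_2)),(V(R_2),V(R_1))\}$, and it suffices to show that the same element is a \chld{} of both separations.

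The key step is to rule out a "flip".  Let $(C,D)$ be a \chld{} of $(A,B)$; relabel so $C=V(R_1)$, $D=V(R_2)$.  By \cref{child-seps}, $A\subseteq V(R_1)$ or $B\subseteq V(R_2)$, and by the symmetry of the two sides of a separation I may assume $A\subseteq V(R_1)$.  Let $P=(u,v,w)$ and $P'=(u,v',w)$ be the inducing paths of $(A,B)$ and $(A',B')$, so $v,v'\notin L_1$, hence $v,v'\notin\{u,w\}$; since the edges $uv,vw$ differ from $uw$, planarity puts $P$ minus its endpoints inside exactly one of $\mathrm{int}(R_1),\mathrm{int}(R_2)$, and likewise for $P'$.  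I claim $(V(R_1),V(R_2))$ is also a \chld{} of $(A',B')$.  If not, then by \cref{child-seps} the other element $(V(R_2),V(R_1))$ is a \chld{} of $(A',B')$, so $A'\subseteq V(R_2)$ or $B'\subseteq V(R_1)$.  If $A'\subseteq V(R_2)$, then $A\subseteq A'\subseteq V(R_2)$, whence $A\subseteq V(R_1)\cap V(R_2)=\{u,w\}$, contradicting $v\in A$.  If $B'\subseteq V(R_1)$: then $v\in A\subseteq V(R_1)$ and $v'\in B'\subseteq V(R_1)$, and since $v,v'\notin L_1$ they lie in $\mathrm{int}(R_1)$, so by planarity $P$ and $P'$ both run inside $R_1$, where they are noncrossing $u$--$w$ chords.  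Of the two caps each chord cuts off $R_1$ --- namely the cap bounded by the chord together with the $L_1$-arc $\gamma$ of $R_1$ --- one contains the other (setting aside the case where $R_2$ is just the edge $\{u,w\}$, postponed below); unwinding the definitions, the $\gamma$-cap of $P$ has vertex set $A$ and the $\gamma$-cap of $P'$ has vertex set $B'$, so either $A\subseteq B'$, whence $A\subseteq A'\cap B'=\{u,v',w\}$, contradicting $v\in A$ with $v\ne v'$, or $B'\subseteq A$, whence $B'\subseteq A\cap B=\{u,v,w\}$ (using $B'\subseteq B$), contradicting $v'\in B'$ with $v'\ne v$.  Hence $(V(R_1),V(R_2))$ is a \chld{} of $(A',B')$, as required.

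The degenerate case $\{u,w\}\in E(L_1)$ (so one of $R_1,R_2$ is the edge itself, and the corresponding triangular separation is $L_1$-trivial) I would dispatch separately and more cheaply: by the remark following \cref{bases} such a triangular separation has exactly $(V(G),S)$ and $(S,V(G))$ as its \chlds{}, and one checks from $A\subsetneq A'$ and $B\supsetneq B'$ that if $(A,B)$ is of this type then so is $(A',B')$ --- if the inducing path of $(A',B')$ escaped the triangular "flap" bounded by $P$ and $\{u,w\}$, then $A'\cap L_1$ would be all of $L_1$, which is incompatible with $B\supsetneq B'$ by the same "two flaps on a common boundary edge have overlapping interiors" observation used implicitly above.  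I expect the only real work to be this planarity bookkeeping: precisely identifying which cap carries $A$, $B$, $A'$, $B'$ when $P$ and $P'$ both lie in $R_1$, and keeping the degenerate case apart.  The conceptual crux is that one must use \emph{both} nestedness hypotheses --- $A\subseteq A'$ to kill the configuration where the $A$-side jumps to the opposite region, and $B\supsetneq B'$ to kill the "flipped flap" configuration in the degenerate and the nested-paths cases.
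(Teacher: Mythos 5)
Your overall route is close to the paper's: reduce, via \cref{child-seps}, to the two ``region'' separations with separator $S=A\cap B\cap L_1$ and then rule out the flip (the paper instead splits on the base being trivial or nontrivial and kills the flip using \cref{lem:2seps-are-edges} together with the pointing directions of \cref{bases}). But as written there are genuine gaps. The main one is the step ``by the symmetry of the two sides of a separation I may assume $A\subseteq V(R_1)$''. The hypotheses $A\subsetneq A'$ and $B\supsetneq B'$ are \emph{not} invariant under exchanging the two sides of the separations: swapping $A\leftrightarrow B$, $A'\leftrightarrow B'$ turns them into $B\subsetneq B'$, $A\supsetneq A'$, i.e.\ into the \emph{converse} situation (a base of the outer separation being a base of the inner one), which is not what you are proving. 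So the second disjunct of the base condition, $B\subseteq V(R_2)$ (equivalently, $v$ lies on the $R_2$ side; ``points right'' in the paper's terminology), is a genuinely separate configuration that your argument never treats. A mirror argument does work (run the cap argument inside $R_2$ with the other $L_1$-arc, using $B'\subseteq B$ and $A\subsetneq A'$ in the opposite branches), but it has to be written out; as it stands half the cases are unproven. Two smaller omissions in the same part: your claim that every order-two separation with separator $\{u,w\}$ is one of the two region separations ignores the trivial separations $(V(G),S)$ and $(S,V(G))$, so you must check (and can, since in the chord case both $L_1$-arcs are nonempty, making both $(A,B)$ and $(A',B')$ $L_1$-nontrivial) that these are not bases; the same nonempty-arc fact is what justifies $v\neq v'$, which you use without proof.

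The degenerate case is also not sound as stated. The asserted implication ``if the inducing path of $(A',B')$ escaped the flap, then $A'\cap L_1=L_1$, which is incompatible with $B\supsetneq B'$'' is false: if the inner face at the outer edge $\{u,w\}$ is the triangle $uvw$ itself, then $(A,B)=(\{u,v,w\},V)$ and $(A',B')=(V,\{u,v,w\})$ satisfy all hypotheses of the lemma and have $A'\cap L_1=L_1$. What rescues you is much simpler: whenever $S$ is an outer edge, \emph{every} triangular separation with trace $S$ is automatically $L_1$-trivial (one side of its inducing path meets $L_1$ only in $u,w$), so by the remark after \cref{bases} both trivial separations are bases of $(A,B)$ and of $(A',B')$ alike. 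Even then, your ``exactly $(V(G),S)$ and $(S,V(G))$'' tacitly assumes that the endpoints of an outer edge never form a nontrivial order-two separator of a triangulated disk; that is true but nowhere argued, and the paper's Case distinction (trivial versus nontrivial base, rather than outer edge versus chord) is designed precisely to avoid needing it.
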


\begin{proof}
  Let $(C, D)$~be a \chld{} of~$(A, B)$. We prove that $(C, D)$~is a \chld{} of~$(A', B')$.  We distinguish whether $(C,D)$~is a trivial separation or not.

  \begin{case}[$(C,D)$~is a trivial separation]
    Since $(A, B)$ is triangular, $C \cap D$ is an edge in $G$. Since $C \subseteq D$ or $D \subseteq C$, the edge~$C \cap D$ is incident with the outer face. Hence, each path through~$G$ with endpoints in~$C \cap D$, containing the edge~$C \cap D$, and using otherwise only vertices not in~$L_1$, encloses a region that contains~$C \cap D$ as the only vertices in~$L_1$. Hence, by the definition of nice separation, $A' \cap L_1 \subseteq C \cap D$ or $B' \cap L_1 \subseteq C \cap D$. Since clearly both $B', A' \subseteq C \cup D$, the separation~$(C, D)$ is a \chld{} of~$(A', B')$ by definition.
  \end{case}
  \begin{case}[$(C,D)$~is a nontrivial separation]
    Clearly, $(A', B')$ has a \chld{} $(C', D')$. Moreover, $C \cap D = C' \cap D'$ since $A\cap B\cap L_1=A'\cap B'\cap L_1$.  Thus, since $(C, D)$~is nontrivial, so is $(C',D')$, which implies that $(A', B')$ is $L_1$-nontrivial.  Therefore, $(C', D')$~is unique by \cref{child-seps}.  We prove that $(C, D) = (C', D')$. Assume for the sake of contradiction that $(C, D) \neq (C', D')$. Since $C \setminus D$ and $D \setminus C$ are connected components in $G - (C \cap D)$ by \cref{lem:2seps-are-edges}, we have $C \setminus D = D' \setminus C'$ and $D \setminus C = C' \setminus D'$, meaning that $C = D'$ and $C' = D$. If both $(A, B)$ and $(A', B')$ point left, then $A \subseteq C = D'$ and $A \subseteq C'$; a contradiction since $(A,B)$~is nontrivial and, therefore, $|A| > 2$. Similarly, $(A, B)$ and $(A', B')$ cannot both point right. Assume now that $(A, B)$ points left and $(A', B')$ points right. Since $(A, B)$ points left, $A \subseteq C$ and, since it is $L_1$-nontrivial, this implies $A \cap L_1 \cap (C \setminus D) \neq \emptyset$. Thus, since $A \subsetneq A'$, one has $A' \cap L_1 \cap (D' \setminus C') \neq \emptyset$. However, since $(A', B')$~points right, one gets $A' \cap L_1 \subseteq C'$, a contradiction. The case that $(A, B)$ points right and $(A', B')'$ points left is analogous. Thus, $(C, D) = (C', D')$.\endproof
  \end{case}
\end{proof}

\noindent In particular, if $(A, B)$ and $(A', B')$ are $L_1$-nontrivial, then they share a unique \chld{}. Moreover, \cref{linear-sep-children} extends to hinged sequences.

\begin{corollary}\label{linear-hom-ss-children}
  Let \S be a linear hinged sequence of triangular separations. A \chld{} of one separation in \S is a \chld{} of each separation in \S.
\end{corollary}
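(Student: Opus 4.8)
The plan is to bootstrap \cref{linear-sep-children} along the whole sequence. List the members of \S as $(A_1, B_1), \dots, (A_n, B_n)$. Since \S is linear, consecutive sides are strictly nested, so by transitivity $A_1 \subsetneq \dots \subsetneq A_n$ and $B_1 \supsetneq \dots \supsetneq B_n$; since \S is hinged it is triangular (so each $(A_i,B_i)$ is a nice triangular separation) and $S := A_i \cap B_i \cap L_1$ is the same for all $i$ (an edge of $G$ with both endpoints on $L_1$). Fix an index $i$, a \chld{} $(C, D)$ of $(A_i, B_i)$, and another index $j$; the goal is to show that $(C, D)$ is also a \chld{} of $(A_j, B_j)$.

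The case $j = i$ is trivial. If $j > i$, then $A_i \subsetneq A_j$, $B_i \supsetneq B_j$, and $A_i \cap B_i \cap L_1 = A_j \cap B_j \cap L_1$, so \cref{linear-sep-children} applied to $(A_i, B_i)$ and $(A_j, B_j)$ immediately yields that $(C, D)$ is a \chld{} of $(A_j, B_j)$. The case $j < i$ needs a small twist, since \cref{linear-sep-children} only transports \chlds{} from a ``smaller'' separation to a ``larger'' one. To handle it I will use two observations. First, for every $k$ the flipped separation $(B_k, A_k)$ is again a nice triangular separation: it has order three with $G[B_k \cap A_k]$ a triangle, and niceness in \cref{def:nicesep} only refers to the two vertices of $S$, their common non-$L_1$ neighbour, and the fact that the inducing three-vertex path cuts the disk bounded by $L_1$ into the two regions carrying the two sides---none of which depends on which side is labelled $A$ and which $B$. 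Second, the condition defining a \chld{} in \cref{child-seps} is symmetric under swapping the two sides: comparing its two alternatives shows that $(C, D)$ is a \chld{} of $(\alpha, \beta)$ if and only if $(D, C)$ is a \chld{} of $(\beta, \alpha)$. Now, from our \chld{} $(C, D)$ of $(A_i, B_i)$ the second observation gives that $(D, C)$ is a \chld{} of $(B_i, A_i)$; since $j < i$ we have $B_i \subsetneq B_j$, $A_i \supsetneq A_j$, and $B_i \cap A_i \cap L_1 = B_j \cap A_j \cap L_1$, so \cref{linear-sep-children} applied to the flipped pair $(B_i, A_i)$ and $(B_j, A_j)$ shows that $(D, C)$ is a \chld{} of $(B_j, A_j)$; applying the second observation once more turns this into ``$(C, D)$ is a \chld{} of $(A_j, B_j)$'', as desired. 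This covers all cases, and running the argument with $i$ and $j$ interchanged shows in fact that every separation in \S has the same set of \chlds{}.

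I do not expect a real obstacle here: the proof is just a two-directional use of \cref{linear-sep-children} stitched together by the side-swapping symmetry of the \chld{} definition. The only point deserving an explicit (if short) check is that the flipped separations $(B_k, A_k)$ are still nice triangular separations, so that \cref{linear-sep-children} is applicable to the pair $(B_i, A_i)$, $(B_j, A_j)$; this is immediate from \cref{def:nicesep}, whose conditions are phrased symmetrically in the two sides once one notes that the inducing path and the two regions it delimits do not depend on the labelling of the sides.
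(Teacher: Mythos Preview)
Your proof is correct and more explicit than the paper's own treatment: the paper states the corollary immediately after \cref{linear-sep-children} with no proof, leaving the reader to infer it as a direct consequence. You correctly note that \cref{linear-sep-children} only transports \chlds{} from the smaller-$A$ separation to the larger one, and you close the gap for $j<i$ via the side-swapping symmetry of \cref{def:nicesep} and of the \chld{} condition in \cref{child-seps}. That symmetry check is sound.

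An alternative that avoids the flip is to observe that the proof of \cref{linear-sep-children} in fact establishes equality of \chld{} sets, not merely one-way inclusion: in the nontrivial case it pins down the unique \chld{} of $(A',B')$ and shows it equals $(C,D)$, and in the trivial case it shows $C\cap D$ is incident with the outer face, which forces $(A',B')$ to be $L_1$-trivial as well and hence to have both trivial \chlds{}. With that reading, a single application of \cref{linear-sep-children} to the ordered pair already gives both directions. Your symmetry route is equally valid and arguably cleaner, since it does not require re-inspecting the earlier proof.
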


\noindent\looseness=-1 Thus, we may speak of the \chld{} of a linear hinged sequence of triangular separations.

We will construct a \wfsf{} from a long sequence of nice separations as follows: if the sequence contains many triangular separations, then either we use their \chlds{} as separators if there are enough mutually distinct bases, or use a linear, hinged, homogeneous sequence as a \wfsf{} of the cycle type (satisfying \cref{wfsf4b} of \cref{def-wfsf}).  If the sequence does not contain many triangular separations, we simply throw them away.  Formally, the construction of the \wfsf{} is as follows.  \fi
\begin{construction}\label{cons-smallseps}
\looseness=-1  Let $G$ be an $\layers$-outerplanar triangulated disk and $t \in \mathbb{N}$. We construct a \wfsf{}~\T of width two or three and length~$t$ for~$G$. Let $A_1$~be any edge incident with the outer face of~$G$ and let $B_1 = V(G)$. Clearly, $(A_1, B_1)$ is a nice separation of order two. Set $i:=1$; while \cref{next-small-sep} is applicable to separation $(A_i, B_i)$, let $(A_{i + 1}, B_{i + 1})$ be the resulting (nice) separation from the application of the lemma, and set $i:=i+1$. Let $\S$~be the sequence of all the separations~$(A_i, B_i)$ defined by the above iterative process.  We distinguish the following cases:

  \begin{case}[There is a homogeneous subsequence $\S'$ of length at least~$t$ in \S]\label{sscase1} Pick a \chld{}~$(C, D)$ of a separation in~$\S'$ and define a sequence~\T as follows: If $\S'$ points left, then $\T := ((A \cup (D \setminus C), A \cap B, B \setminus (D \setminus C)))_{(A, B) \in \S'}$, inheriting the order from~$\S'$.  Otherwise, $\T := ((B \cup (C \setminus D), A \cap B, A \setminus (C \setminus D)))_{(A, B) \in \Srev'}$, where $\Srev'$~is the sequence~$\S'$ in reverse order.
  \end{case}

  \begin{case}[There is an $L_1$-trivial, hinged subsequence~$\S'$ of length at least~$t$ in~$\S$]\label{sscase4} Let $(A', B')$~be the first separation in~$\S'$ and define a sequence~\T as follows: If $L_1 \subseteq A'$ then $\T := ((A, A \cap B, B))_{(A, B) \in \S'}$, inheriting the order from~$\S'$. Otherwise, $\T := ((B, A \cap B, A))_{(A, B) \in \Srev'}$, where $\Srev'$~is the sequence~$\S'$ in reverse order.
  \end{case}

  \begin{case}[There are at least $2t$ maximal homogeneous subsequences of \S]\label{sscase2} %
    \iflong Take the sequence of their \chlds{}, inheriting the order from $\mathcal{S}$, and remove duplicates. \else Pick a \chld{} of a separation out of each of the subsequences. \fi Based on the resulting sequence~$\S'$ of \chlds{}, define the sequence $\T := ((C, C \cap D, D))_{(C, D) \in S'}$ inheriting its order from~$\S'$.
  \end{case}

  \begin{case}[None of the above]\label{sscase3} Remove each triangular separation from \S. Let $\S'$ be the subsequence of \S containing only separations of order two, or only of order three, whichever is largest. Define the sequence $\T := ((A, A \cap B, B))_{(A, B) \in \S'}$ inheriting its order from $\S'$.
  \end{case}
\end{construction}
\iflong
\iflong
\noindent We next prove that the sequence \T~constructed above has length at least~$t$, regardless of the case according to which it was constructed. To this end, we have to prove that \cref{sscase2} does not discard too many duplicate \chlds{}.  We will rely on the following lemma.
\fi
\begin{lemma}\label{ss-hom-share-children}
  Let \Ssubst and $\Ssubst'$ be two maximal homogeneous subsequences of a linear, triangular sequence \Tsubst of nice separations such that each separation of~\Ssubst comes before each separation of~$\Ssubst'$ in~\Tsubst. If the \chld{} of \Ssubst is also the \chld{} of~$\Ssubst'$, then \Ssubst~points left and $\Ssubst'$~points right. Moreover, in that case, there is no separation in~\Tsubst between any pair of separations in~\Ssubst and~$\Ssubst'$.
\end{lemma}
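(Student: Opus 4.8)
The plan is to first establish the "points left / points right" dichotomy, and then to rule out any intervening separation. Let $(C,D)$ denote the common \chld{}; by \cref{child-seps} and \cref{linear-hom-ss-children} this is well defined for both sequences, and since \Ssubst and $\Ssubst'$ are homogeneous they are in particular $L_1$-nontrivial, so $(C,D)$ is the \emph{unique} \chld{} of every separation in either sequence, with $C\cap D = A\cap B\cap L_1$ for each $(A,B)$ in \Ssubst or $\Ssubst'$. Suppose for contradiction that \Ssubst and $\Ssubst'$ point in the same direction, say both point left (the "both right" case is symmetric). Take the last separation $(A,B)$ of \Ssubst and the first separation $(A',B')$ of $\Ssubst'$; since \Tsubst is linear and $(A,B)$ comes before $(A',B')$, linearity gives $A\subseteq A'$ and $B\supseteq B'$ (even if other separations lie between them, transitivity of the inclusions along \Tsubst still yields this). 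Both point left, so $A\subseteq C$ and $A'\subseteq C$. But then $A\cap B\cap L_1 = C\cap D$ means the edge $C\cap D$ is contained in $A$, and since $(A,B)$ is $L_1$-nontrivial, $A\setminus B$ contains a vertex of $L_1$, which must lie in $C\setminus D$ (it cannot lie in $D$, as $A\subseteq C$ and a vertex of $A\setminus B$ distinct from the two separator vertices is in $C\setminus D$). The same holds for $A'$. Now I would argue that this forces $\Ssubst\cup\Ssubst'$ to be a single homogeneous sequence after all — contradicting maximality of \Ssubst (or $\Ssubst'$): the triangles of \Ssubst and $\Ssubst'$ all share the edge $C\cap D$, all are $L_1$-nontrivial, and all point left, so their concatenation (respecting the order in \Tsubst) is hinged and homogeneous, and by linearity of \Tsubst its separations still form a linear chain; hence \Ssubst was not maximal. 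This contradiction establishes that \Ssubst points left and $\Ssubst'$ points right.

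For the second statement, suppose toward a contradiction that there is a separation $(A'',B'')$ of \Tsubst strictly between some separation of \Ssubst and some separation of~$\Ssubst'$. Since \Tsubst is triangular, $(A'',B'')$ is triangular; let $(u,v,w)$ be the path inducing it, with $\{u,w\} = A''\cap B''\cap L_1$. By linearity, the separation $(A,B)$ taken as the last element of \Ssubst satisfies $A\subseteq A''$ and $B\supseteq B''$, and the first element $(A',B')$ of $\Ssubst'$ satisfies $A''\subseteq A'$ and $B''\supseteq B'$. Using that \Ssubst points left and $\Ssubst'$ points right with common base $(C,D)$: we have $A\subseteq C$ and $B'\subseteq D$, so $A''\subseteq A'\subseteq$ (the side of $\Ssubst'$'s base containing $B'$) — wait, more carefully, $\Ssubst'$ pointing right means $B'\subseteq D$, hence $B''\supseteq B'$ and $A''\supseteq A\supseteq$ (the part of $C$ below the base). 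I would then show that the three vertices $\{u,v,w\}$ of $(A'',B'')$ must in fact be exactly $\{u',v',w'\}=C\cap D\cup\{v''\}$ with $\{u',w'\}=C\cap D$: since $A''$ contains all of $A$ (which contains the edge $C\cap D$ plus a vertex of $L_1$ on the $C\setminus D$ side) and $B''$ contains all of $B'$ (which contains a vertex of $L_1$ on the $D\setminus C$ side), the separator $A''\cap B''\cap L_1$ must separate these two $L_1$-vertices along the cycle $L_1$, forcing $A''\cap B''\cap L_1 = C\cap D$ by \cref{lem:2seps-are-edges} applied to the \chld{} of $(A'',B'')$ (which is then exactly $(C,D)$). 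Hence $(A'',B'')$ is hinged with \Ssubst and $\Ssubst'$, is $L_1$-nontrivial, and either points left or right; in the first case it extends $\Ssubst'$ backward (contradicting maximality of $\Ssubst'$, since it comes before the first element of $\Ssubst'$ but after \Ssubst, yet all of \Ssubst points left — actually it would have to join \Ssubst, contradicting that \Ssubst is maximal and $(A'',B'')$ comes after it); in the second case it symmetrically contradicts maximality. Either way we reach a contradiction, so no such intervening separation exists.

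The main obstacle I expect is the bookkeeping in the second part: carefully tracking which side of each base/separation contains which $L_1$-vertices, and invoking \cref{lem:2seps-are-edges} and the uniqueness clause of \cref{child-seps} at the right moment to pin down that an intervening triangular separation must share the same base $C\cap D$. Once that is established, the maximality contradiction is routine — but getting there requires being precise about the orientation conventions ("points left" vs. "points right") and about the fact that linearity of \Tsubst propagates the strict inclusions $A\subsetneq A'$, $B\supsetneq B'$ transitively across any number of intermediate separations. A minor subtlety to handle explicitly is the degenerate case where \Ssubst or $\Ssubst'$ (or the putative intervening separation) has only one element, so that "maximality" must be argued via the impossibility of prepending/appending rather than via an internal inconsistency.
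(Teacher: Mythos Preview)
Your argument for Part~1 has a genuine gap. You only rule out the case that \Ssubst and~$\Ssubst'$ point in the \emph{same} direction (via the maximality contradiction), and then conclude ``This contradiction establishes that \Ssubst~points left and $\Ssubst'$~points right.'' But that does not follow: you have not excluded the case that \Ssubst points \emph{right} and $\Ssubst'$ points \emph{left}. The maximality argument does nothing here, since sequences pointing in opposite directions cannot be merged into a single homogeneous sequence. The paper handles this missing case with a short geometric argument: if $(A,B)\in\Ssubst$ points right then $B\subseteq D$, so the apex vertex $v\in (A\cap B)\setminus L_1$ lies in $D\setminus C$; since $A\subsetneq A'$ for $(A',B')\in\Ssubst'$, this vertex is also in~$A'$, contradicting $A'\subseteq C$ (which holds because $(A',B')$ points left). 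You need this step or an equivalent.

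Your Part~2 is on the right track and is in fact a bit more direct than the paper's route. The key observation---that both vertices of $C\cap D$ lie in $A\subseteq A''$ and in $B'\subseteq B''$, hence $C\cap D\subseteq A''\cap B''\cap L_1$, and that niceness of $(A'',B'')$ forces equality since $|A''\cap B''\cap L_1|=2$---is correct, though your write-up obscures it (the ``wait, more carefully'' passage never quite lands on this). Once $A''\cap B''\cap L_1 = C\cap D$ is established, $L_1$-nontriviality of $(A'',B'')$ follows easily from $(A\setminus B)\cap L_1\subseteq (A''\setminus B'')\cap L_1$ and symmetrically, and then the pointing-left/right dichotomy plus maximality finishes as you say. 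The paper instead proves that every intermediate separation has base~$(C,D)$, then looks at the maximal homogeneous subsequences immediately after~\Ssubst and immediately before~$\Ssubst'$ and applies the already-proven Case~1 to them; your direct appeal to maximality of \Ssubst (resp.~$\Ssubst'$) is cleaner, but only works once Part~1 is actually complete.
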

\iflong
\begin{proof}
  Let $(A, B)$ in \Ssubst and $(A', B')$ in $\Ssubst'$ and let $(C, D)$ be their \chld{}.  We first show the lemma in the case when there is no separation between~$\Ssubst$ and~$\Ssubst'$ in~$\Tsubst$ and then show that there cannot be separations in between.

  \begin{case}[There is no separation in \Tsubst between \Ssubst and $\Ssubst'$]\label{ss-hom-share-children1} Since $\Ssubst$ and~$\Ssubst'$ are \emph{maximal homogeneous} subsequences, and there is no separation in between, they point into different directions. Assume for the sake of contradiction that \Ssubst~points right and $\Ssubst'$~points left. By \cref{bases}, that means~$B \subseteq D$.  Moreover, since $(A \cap B) \setminus L_1$ contains at least one vertex, we have $(D \setminus C) \cap A \neq \emptyset$ by \cref{child-seps}. However, from $A \subsetneq A'$ it then follows that $A' \setminus C \neq \emptyset$, which is a contradiction to $A' \subseteq C$ since $(A', B')$ points left.  Hence, if there is no separation between \Ssubst and $\Ssubst'$ in \Tsubst, then \Ssubst points left and $\Ssubst'$ points right.
  \end{case}
  \begin{case}[There is a separation~$(\hat{A}, \hat{B})$ between \Ssubst and $\Ssubst'$ in \Tsubst] We first show that $(C, D)$~is the \chld{} of each such separation $(\hat{A}, \hat{B})$.  Since concatenating~$\Ssubst$ and~$\Ssubst'$ yields a linear triangular sequence of nice separations with no separations between~$\Ssubst$ and~$\Ssubst'$, \cref{ss-hom-share-children1} shows that \Ssubst~points left and $\Ssubst'$ points right.  Without loss of generality (due to symmetry) assume that $(\hat{A}, \hat{B})$ points left. To prove that $(C, D)$ is the \chld{} of $(\hat{A}, \hat{B})$, by \cref{linear-hom-ss-children}, it suffices to prove that appending $(\hat{A}, \hat{B})$ to \Ssubst yields a homogeneous sequence, that is, $\hat{A} \cap \hat{B} \cap L_1 = A \cap B \cap L_1$. Note that $C \cap D = A \cap B \cap L_1 = A' \cap B' \cap L_1$. Since \Tsubst is linear, $A \subsetneq \hat{A}$ and $\hat{B} \supsetneq B'$, which implies $C \cap D \subseteq \hat{A} \cap \hat{B} \cap L_1$. Even equality holds since $(\hat{A}, \hat{B})$ is nice. Thus, appending $(\hat{A}, \hat{B})$ to~\Ssubst yields a homogeneous sequence, implying that $(C, D)$ is the \chld{} of $(\hat{A}, \hat{B})$ by \cref{linear-hom-ss-children}. We infer that $(C, D)$ is the \chld{} of each separation in \Tsubst between \Ssubst and $\Ssubst'$.

    \looseness=-1 Now, assume, towards a contradiction, that there are separations between~$\Ssubst$ and~$\Ssubst'$ in~$\Tsubst$.  Since \Ssubst and $\Ssubst'$ are maximal, there is a maximal triangular homogeneous subsequence $\hat{\Ssubst}$ succeeding~\Ssubst in~\Tsubst and there is a maximal triangular homogeneous subsequence $\hat{\Ssubst}'$ preceding~$\Ssubst'$ in~\Tsubst. By the choice of~$\Ssubst$ and~$\Ssubst'$, both these sequences are nonempty. By \cref{ss-hom-share-children1}, $\hat{\Ssubst}$~points right and $\hat{\Ssubst}'$~points left. However, concatenating~$\hat{\Ssubst}'$ and~$\hat{\Ssubst}$ yields a sequence that is linear, triangular, has the same \chld{} as~$\Ssubst$ and~$\Ssubst'$, and no separations between~$\hat\Ssubst'$ and~$\hat\Ssubst$.  Thus, \cref{ss-hom-share-children1} is applicable to this sequence, which leads to a contradiction since then, by \cref{ss-hom-share-children1}, $\hat{\Ssubst}$ points left and $\hat{\Ssubst}'$ points right.\qedhere
  \end{case}
\end{proof}
\fi
\noindent Furthermore, we need to prove that, after removing the triangular separations in \cref{sscase3}, there still remain sufficiently many separations. For this, we need the following lemmas.

\begin{lemma}\label{ss-dichot}
  Each linear, hinged sequence of triangular separations consists of homogeneous subsequences or is $L_1$-trivial.
\end{lemma}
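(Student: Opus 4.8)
The plan is to derive the dichotomy from a single structural fact about the edge that all separators of the sequence share. Let \S be a linear, hinged sequence of triangular separations and let $\{u,w\}$ be the edge of $G[A\cap B]$ lying on $L_1$; since \S is hinged, $\{u,w\}$ is the same for every $(A,B)\in\S$. As $L_1$ is a simple cycle, $u$ and $w$ cut it into two arcs with vertex sets $X$ and $Y$ (each possibly empty), so $L_1=\{u,w\}\uplus X\uplus Y$. The first thing I would establish is the key claim: for every $(A,B)\in\S$, one of the sets $(A\setminus B)\cap L_1$ and $(B\setminus A)\cap L_1$ equals $X$ and the other equals $Y$. Granting this, the lemma falls out immediately: if $X\neq\emptyset\neq Y$ then every separation in \S is $L_1$-nontrivial, while if one of $X,Y$ is empty then every separation in \S is $L_1$-trivial, hence \S is $L_1$-trivial.

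To prove the claim I would fix $(A,B)\in\S$ with $A\cap B=\{u,v,w\}$ and $v\notin L_1$, and invoke \cref{def:nicesep}: the three-vertex path $P=(u,v,w)$ divides the closed disk $R$ bounded by $L_1$ into a region $R_A$ with vertex set $A$ and a region $R_B$ with vertex set $B$, overlapping only on $P$. Since $v$ lies in the interior of $R$, the arc $P$ meets the boundary $L_1$ only at $u$ and $w$; hence the boundary of $R_A$ is $P$ together with exactly one of the two $u$--$w$ arcs of $L_1$, and the boundary of $R_B$ is $P$ together with the other arc. After discarding $u$ and $w$ (which lie in $A\cap B$), the $L_1$-vertices exclusive to $A$ are precisely the vertices of one arc and those exclusive to $B$ are precisely the vertices of the other, so $\{(A\setminus B)\cap L_1,\,(B\setminus A)\cap L_1\}=\{X,Y\}$. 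The one delicate point — and the step I expect to be the main obstacle — is the assertion that a connected arc of $L_1\setminus\{u,w\}$ cannot be split by $P$ between the two sides: this holds because the interior of $P$ stays strictly inside $R$, so a connected arc of $\partial R$ avoiding $\{u,w\}$ lies entirely within one of the two regions, and the two arcs cannot both lie in the same region, since each region's boundary is a closed curve and must therefore contain a $u$--$w$ sub-path of $L_1$.

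Finally I would assemble the two cases. If $X\neq\emptyset\neq Y$, the claim shows that every $(A,B)\in\S$ is $L_1$-nontrivial; being also triangular, it then points left or points right in the sense of \cref{bases}. Every contiguous subsequence of \S is again triangular and hinged, so splitting \S into maximal runs of separations that point the same direction exhibits \S as a concatenation of homogeneous subsequences. If instead one arc is empty, say $Y=\emptyset$, then for each $(A,B)\in\S$ one of $(A\setminus B)\cap L_1$, $(B\setminus A)\cap L_1$ equals $Y=\emptyset$, so $(A,B)$ is $L_1$-trivial; as this holds throughout the sequence, \S is $L_1$-trivial. (Linearity of \S is not actually needed in this argument; it is carried along only because the sequences produced in \cref{cons-smallseps} happen to be linear.) This completes the plan.
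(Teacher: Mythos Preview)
Your argument is correct and in fact cleaner than the paper's. Both proofs reach the same dichotomy, but by different mechanisms.

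The paper argues the contrapositive via the machinery of \chlds{}: if the sequence is not made up of homogeneous subsequences, some separation must be $L_1$-trivial; such a separation has two (trivial) \chlds{}; then \cref{linear-hom-ss-children} propagates both \chlds{} to every separation in the sequence, and \cref{child-seps} forces each of them to be $L_1$-trivial in turn. This route leans on the linearity hypothesis through \cref{linear-hom-ss-children}.

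You instead observe directly that the shared edge $\{u,w\}$ on~$L_1$ determines the two arcs $X,Y$ once and for all, and that for every nice triangular separation with that edge the $L_1$-parts of the two sides are exactly $X$ and $Y$ in some order. Hence either all separations in the sequence are $L_1$-nontrivial (and can be split into runs pointing left or right, i.e.\ homogeneous subsequences), or all are $L_1$-trivial. This is more elementary—it bypasses \cref{child-seps} and \cref{linear-hom-ss-children} entirely—and, as you note, it shows the conclusion already for hinged sequences without the linearity assumption. That is a genuine small strengthening; the paper's proof does use linearity essentially, via the corollary it invokes.
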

\begin{proof}
  We prove that a linear, hinged sequence of triangular separations~$\S'$ that does not consist of homogeneous subsequences is $L_1$-trivial. Note that $\S'$~contains a $L_1$-trivial separation~$(A, B)$ as, otherwise, $(A, B)$ either points left or right by \cref{bases} and is thus part of a homogeneous subsequence.  Therefore, $(A, B)$ has \emph{two} trivial \chlds{}. Furthermore, by \cref{linear-hom-ss-children}, \emph{both} \chlds{} of~$(A, B)$ are \chlds{} of~$\S'$. This implies that each separation in~$\S'$ has two \chlds{} and is, by \cref{child-seps}, $L_1$-trivial. %
\end{proof}

\begin{lemma}\label{ss-no-homo}
  There are at most two maximal subsequences of the linear sequence~\S in \cref{cons-smallseps} that are both $L_1$-trivial and hinged.%
\end{lemma}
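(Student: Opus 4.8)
The plan is to show that every maximal subsequence $\S'$ of $\S$ that is $L_1$-trivial and hinged is pinned down by a single datum --- the set $\sigma := A \cap B \cap L_1$ common to all separations of $\S'$ --- and that $\sigma$ can take only one of two values, so that at most two such subsequences exist. (I read ``maximal'' as maximal under inclusion among the, not necessarily contiguous, subsequences that are $L_1$-trivial and hinged.) The first ingredient is elementary: by \cref{def:nicesep}, every nice separation $(A, B)$ occurring in $\S$ --- of order two or three --- satisfies $|A \cap B \cap L_1| = 2$. Hence $(A, B)$ is $L_1$-trivial if and only if $A \cap L_1 = A \cap B \cap L_1$ (call this \emph{$A$-small}) or $B \cap L_1 = A \cap B \cap L_1$ (\emph{$B$-small}), since $(A \setminus B) \cap L_1 = \emptyset$ means $A \cap L_1 \subseteq A \cap B$, which --- as $A \cap B \cap L_1 \subseteq A \cap L_1$ always holds --- is the same as $A \cap L_1 = A \cap B \cap L_1$; and symmetrically for $B$.

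Next I would pin down $A \cap B \cap L_1$ for small separations, using that, since $\S$ is linear, $A_1 \subseteq A_i$ for all $i$ and the sets $A_i \cap L_1$ are nondecreasing along $\S$ while the sets $B_i \cap L_1$ are nonincreasing. (a)~If $(A_i, B_i)$ is $A$-small, then $A_i \cap B_i \cap L_1 = A_1$, because the initial $A_1$ is an edge of the outer cycle $L_1$, hence a two-element subset of $L_1$ with $A_1 \subseteq A_i \cap L_1$, while $A$-smallness forces $A_i \cap L_1 = A_i \cap B_i \cap L_1$ to have size exactly two, so $A_i \cap L_1 = A_1$. (b)~All $B$-small separations of $\S$ share one common two-element value $e := A \cap B \cap L_1$: for two of them, $(A_i, B_i)$ and $(A_j, B_j)$ with $i < j$, we have $B_j \cap L_1 \subseteq B_i \cap L_1$ with both of size two, hence equal, and by $B$-smallness this common set equals $A_i \cap B_i \cap L_1$.

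To conclude, let $\S'$ be a maximal $L_1$-trivial hinged subsequence. Being hinged, $\S'$ is triangular and has a single value $\sigma = A \cap B \cap L_1$ common to all its separations; being $L_1$-trivial, each separation of $\S'$ is $A$-small or $B$-small, so by (a) and (b), $\sigma \in \{A_1, e\}$. By maximality, $\S'$ consists of \emph{all} positions of $\S$ holding a triangular, $L_1$-trivial separation with $A \cap B \cap L_1 = \sigma$ --- adjoining any further such separation preserves both properties, and no other separation can be adjoined --- so $\S'$ is determined by $\sigma$, and there are at most two of them. The step I expect to need the most care is (b) together with the monotonicity bookkeeping: \cref{next-small-sep} only yields $A_i \subseteq A_{i+1}$ and $B_i \supseteq B_{i+1}$, not strict inclusions, so one must work only with weak monotonicity of $A_i \cap L_1$ and $B_i \cap L_1$ (which is all that is used). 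It is also worth checking that ``maximal subsequence'' here cannot be read as ``maximal contiguous run'': under that reading the statement would be false, since a chain of ``empty triangle'' applications of \cref{next-small-sep} can insert order-two (non-triangular) separations between triangular ones that all share the same $A \cap B \cap L_1$, producing arbitrarily many contiguous $L_1$-trivial hinged runs with identical $\sigma$.
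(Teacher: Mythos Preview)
Your argument is correct and genuinely different from the paper's. The paper argues by contradiction: it picks representatives $(A_1,B_1),(A_2,B_2),(A_3,B_3)$ from three putative maximal $L_1$-trivial hinged subsequences, observes that each $A_i\cap B_i\cap L_1$ is an edge incident with the outer face, and then uses a topological argument about the regions $R^A_2,R^B_2,R^A_3,R^B_3$ cut out by the inducing paths to force $A_2=V$, contradicting $A_2\subsetneq A_3$. Your proof instead isolates the invariant $\sigma=A\cap B\cap L_1$ and shows purely combinatorially, via weak monotonicity of $A_i\cap L_1$ and $B_i\cap L_1$ along~$\S$ together with $|A\cap B\cap L_1|=2$, that $\sigma$ can take only the two values $A_1$ and $e$. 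This is more elementary: it avoids any appeal to the planar regions and uses only the order-theoretic structure of~$\S$. It also makes explicit why the result is tight (the initial edge~$A_1$ pins down the $A$-small side, and a single terminal edge~$e$ pins down the $B$-small side).

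Your remark on the reading of ``maximal subsequence'' is well taken. The paper's proof asserts that a maximal hinged subsequence is consecutive in~$\S$, which is not obvious---your scenario of alternating empty-triangle and order-two steps with a fixed $A\cap B\cap L_1$ shows that non-consecutive hinged subsequences can arise. Your maximal-under-inclusion reading sidesteps this entirely and, importantly, is exactly what the application in the length bound needs: there the goal is to cover all $L_1$-trivial triangular separations by at most two subsequences of length below~$t$, which is precisely what your version delivers.
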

\begin{proof}
  Assume that there are three subsequences of~\S as above. Pick a separation $(A_1, B_1)$, $(A_2, B_2)$, $(A_3, B_3)$ out of each of them. A maximal hinged subsequence is consecutive in~\S, whence we may assume $A_1 \subsetneq A_2 \subsetneq A_3$ and $B_1 \supsetneq B_2 \supsetneq B_3$ without loss of generality. Furthermore, $A_1 \cap B_1 \cap L_1 \neq A_2 \cap B_2 \cap L_1 \neq A_3 \cap B_3 \cap L_1$, since \S is linear and by the maximality of the subsequences. By \cref{ss-dichot}, each of the three sequences is $L_1$-trivial. Thus, $A_i \cap B_i \cap L_1$, $i \in \{1, 2, 3\}$, is an edge incident with the outer face. Thus, there are two vertices~$u, v \in A_1 \cap B_1$, not necessarily distinct, such that $u \in A_2 \setminus B_2$ and $v \in A_3 \setminus B_3$. Let $P_2$~be the path inducing~$(A_2, B_2)$ and denote the corresponding regions by~$R^A_2, R^B_2$, which enclose~$A_2$ and~$B_2$, respectively. Analogously, let $P_3$~be the path inducing~$(A_3, B_3)$ and $R^A_3, R^B_3$ be the corresponding regions. Since $P_2$ and $P_3$~have length three, they do not cross each other. Since $R^A_2$ contains~$u$, and $R^A_3$ contains~$v$, this means that $R^A_2$ contains~$R^B_3$. Since $B_3 \supsetneq B_2$, we have $A_2 = V$. This is a contradiction to the fact that $A_2 \subsetneq A_3$.
\end{proof}

\noindent We are ready to prove a lower bound on the  length of the separator sequence~\T generated in \cref{cons-smallseps}.

\begin{lemma}\label{ss-length}
  Let %
  $\ell \geq 2$ be an upper bound on the size of each block in $G - L_1$ and $k > 0$~be a lower bound on $q(V(G))$. We can carry out \cref{cons-smallseps} in such a way that it yields a sequence of length at least \sssl.
\end{lemma}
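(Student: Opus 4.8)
The plan is to run \cref{cons-smallseps} with a carefully chosen target length~$t$ and argue that, in each of its four cases, the output has length at least~$t$, where $t$~can be taken to be at least~$\sssl$.

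The first step produces the sequence~$\S$ of \cref{cons-smallseps} and lower-bounds its length. Start from the trivial nice separation~$(A_1,B_1)$ of order two, where $A_1$~is an edge incident with the outer face and $B_1=V(G)$, so that $q(B_1)=q(V(G))\geq k$; then invoke \cref{next-small-sep} repeatedly, as long as the current $B$-side satisfies $q(B)\geq\ell$. Writing $\S=(A_1,B_1),\dots,(A_N,B_N)$ for the resulting sequence, substitute $b_i:=q(B_i)+1/(\ell-1)$: the recurrence $q(B_{i+1})\geq(q(B_i)-1)/\ell$ becomes $b_{i+1}\geq b_i/\ell$, whence $q(B_i)\geq k\ell^{-(i-1)}-1$. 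Since $\ell+1\leq\ell^2$ for $\ell\geq 2$, this gives $q(B_i)\geq\ell$ for all $i\leq\log_\ell k-1$, so the process runs for at least $\lfloor\log_\ell k\rfloor$ rounds and $N\geq\lfloor\log_\ell k\rfloor$.

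The three ``easy'' cases of \cref{cons-smallseps} come next. If \cref{sscase1} or \cref{sscase4} applies, $\T$~has length~$t$ by construction. If \cref{sscase2} applies, $\S$~has at least $2t$ maximal homogeneous subsequences, and I would argue that any \chld{} (\cref{bases}) is the \chld{} of at most two of them: if three maximal homogeneous subsequences, in the order induced by~$\S$, shared a common \chld{}, then applying \cref{ss-hom-share-children} to the first two and to the last two would force the middle one to point both left and right, a contradiction. Hence the sequence of distinct \chlds{}, and therefore~$\T$, has length at least~$t$.

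The crux is \cref{sscase3}, reached when none of the previous cases applies: then every homogeneous subsequence and every $L_1$-trivial hinged subsequence of~$\S$ has fewer than~$t$ separations, $\S$~has fewer than $2t$ maximal homogeneous subsequences, and, by \cref{ss-no-homo}, at most two maximal hinged subsequences of~$\S$ are $L_1$-trivial. I would partition the triangular separations of~$\S$ by the edge $A\cap B\cap L_1$ they determine; each class is a linear hinged sequence and so, by \cref{ss-dichot}, is either a concatenation of (maximal) homogeneous subsequences of~$\S$ or is $L_1$-trivial. The classes of the first kind together contain at most $(2t-1)(t-1)$ triangular separations and those of the second kind at most $2(t-1)$, so $\S$~has fewer than $2t^2$ triangular separations. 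Consequently $\S$~contains more than $N-2t^2$ non-triangular separations, and the larger of the order-two and the order-three subsequences among them --- which is exactly the $\S'$ that \cref{sscase3} turns into~$\T$ --- has length more than $(N-2t^2)/2$. Thus, for every integer~$t$ with $N\geq 2t^2+2t$, \cref{cons-smallseps} yields a sequence of length at least~$t$. Taking $t$~to be the largest integer with $2t^2+2t\leq N$ and using $N\geq\lfloor\log_\ell k\rfloor$, a routine (if slightly delicate) calculation gives $t\geq\sssl$. I expect the main obstacle to be this crux step: deducing the quadratic bound on the number of triangular separations from the failure of \cref{sscase1,sscase4,sscase2} (this is where \cref{ss-dichot}, \cref{ss-no-homo}, and \cref{ss-hom-share-children} all enter), and then checking that the constants are good enough to land on exactly the claimed bound~$\sssl$.
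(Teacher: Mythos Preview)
Your proposal follows essentially the same route as the paper: lower-bound the length of~$\S$ via the recurrence from \cref{next-small-sep}, dispose of \cref{sscase1,sscase4} trivially, handle \cref{sscase2} via \cref{ss-hom-share-children}, and handle \cref{sscase3} by bounding the number of triangular separations using \cref{ss-dichot}, \cref{ss-no-homo}, and the failure of the other cases. The only cosmetic difference is that the paper fixes $t:=\sssl$ up front and verifies the arithmetic identity $t(2t+2)=\log_\ell k-1-2t$ exactly, whereas you bound the triangular separations by $(2t-1)(t-1)+2(t-1)<2t^2$ and then solve $2t^2+2t\leq N$ for~$t$; both lead to the same conclusion.
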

\begin{proof}
  \looseness=-1 Let us first find a lower bound on the length~$i_m$ of the initial sequence~$\S=((A_i,B_i))_{1\leq i\leq i_m}$ that \cref{cons-smallseps} generates using \cref{next-small-sep}.  For $i<i_m$, we have $q(B_{i+1}) \geq (q(B_{i}) - 1)/\ell$ by \cref{ss:smallinc} of \cref{next-small-sep}. It is not hard to check that $i_m$~is at least the largest integer fulfilling
  \begin{align*}
    \ell &\leq \frac{q(B_1)}{\ell^{i_m - 1}} - \sum_{i = 1}^{i_m -1} \frac{1}{\ell^i},
           \intertext{which is satisfied for all $i_m$ that satisfy}
    \ell&\leq \frac{k}{\ell^{i_m - 1}} - \frac{1 - 1/\ell^{i_m}}{1 - 1/\ell} + 1.
  \end{align*}
  We claim that $i_m \geq \log_\ell k - 1$. Indeed, substituting this term for $i_m$, we obtain
  \begin{align*}
    \ell - 1 &\leq \frac{k}{\ell^{\log_\ell k - 2}} - \frac{1 - 1/\ell^{\log_\ell k - 1}}{1 - 1/\ell},\\
    \ell - 1 &\leq \ell^2 - \frac{1 - \ell/k}{1 - 1/\ell},\\
    \frac{(\ell - 1)^2}{\ell} &\leq \ell(\ell - 1) + \ell/k - 1,
  \end{align*}
  which clearly holds for all $\ell \geq 2$, $k > 0$.  We claim that carrying out \cref{cons-smallseps} with $t := \sssl$ yields a sequence~\T of length at least~$t$. Clearly, this is the case if \T~was constructed according to \cref{sscase1,sscase4}.

  Let us show that \T has length~$t$ also when it was constructed according to \cref{sscase2}. To prove this, it suffices to show that we removed at most $t$~duplicate \chlds{}.  By \cref{ss-hom-share-children}, there is no triangular separation in~\S between two sequences~$\S_1$ and~$\S_2$ with the same \chld{}.  Moreover, $\S_1$~points left and $\S_2$~points right. Thus, again by \cref{ss-hom-share-children}, both $\S_1$ and~$\S_2$ cannot share a \chld{} with any other maximal homogeneous subsequence of \S.  Hence, the duplicate \chlds{} we removed are from pairwise disjoint pairs of maximal homogeneous subsequences. Since there are $2t$ of these sequences, we removed at most $t$~duplicate \chlds{}. Hence, \T has length at least~$t$.

  Finally, consider the case that \T~was constructed according to \cref{sscase3}. To prove that \T~has length at least~$t$, it suffices to show that out of the $\log_\ell k - 1$ separations in \S, there are at most $\log_\ell k - 1 - 2t$ triangular separations. In that case, at least $2t$ separations remain in \S after removing each triangular separation, meaning that there are either at least $t$ separations of order two or at least $t$ separations of order three. Note that each triangular separation is in a hinged subsequence of $\S$. By \cref{ss-dichot}, each such subsequence is homogeneous or $L_1$-trivial. Thus, since \cref{sscase1,sscase4} did not apply when constructing \T, each hinged subsequence has length at most~$t$. Furthermore, by \cref{ss-no-homo}, there are at most two hinged subsequences that do not consist of homogeneous subsequences and, since \cref{sscase2} did not apply, there are at most $2t$~maximal homogeneous subsequences. Thus, overall, there are at most $t(2t + 2)$ triangular separations in~\S. Plugging in $t = \sssl$ we have
  \begin{align*}
    t(2t + 2) = \Bigl(\sqrt{(\log_\ell k + 1)/2} - 1\Bigr)\Bigl(2\sqrt{(\log_\ell k + 1)/2}\Bigr) &= \log_\ell k + 1 - 2\sqrt{(\log_\ell k + 1)/2} \\ & = \log_\ell k - 1 - 2t.\qedhere
  \end{align*}
\end{proof}

\paragraph{Verifying \cref{def-wfsf} of \wfsf{}s.} In the remainder of this subsection, we prove that each case of \cref{cons-smallseps} indeed yields a \wfsf, that is, we verify that the properties in \cref{def-wfsf} are satisfied.  We consider the cases in order.

\begin{lemma}\label{ss-case1}
  If \T was constructed according to \cref{sscase1} in \cref{cons-smallseps}, then \T is a \wfsf{} of width three.
\end{lemma}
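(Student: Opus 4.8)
The plan is to take the sequence~$\T$ produced by \cref{sscase1} and check, one condition at a time, all the requirements of \cref{def-wfsf}, with width $p:=3$. I would carry this out in detail only for the subcase that $\S'$ points left; the subcase that $\S'$ points right is the mirror image, obtained by interchanging the two sides of every separation (and of the base~$(C,D)$) and reversing the order of the sequence — which is exactly why the construction takes $\Srev'$ there — and it leads to the analogous closed forms recorded at the end.

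For the setup: by \cref{linear-hom-ss-children}, the base~$(C,D)$ picked in \cref{sscase1} is a base of \emph{every} separation in~$\S'$. Write the $i$-th separation of~$\S'$ as $(A_i,B_i)$ and set $S_i:=A_i\cap B_i$. Since $\S'$ is hinged, $S_i\cap L_1=A_i\cap B_i\cap L_1=C\cap D=:\{u,w\}$ is a fixed edge of~$G$ with $u,w\in L_1$, the same for all~$i$; since $\S'$ is triangular, $G[S_i]=K_3$, so $S_i=\{u,v_i,w\}$ where $v_i\notin L_1$ is the common neighbour of~$u$ and~$w$, and hence $v_i\in L_2$. As $\S'$ is homogeneous, it is $L_1$-nontrivial, so each $(A_i,B_i)$ is nontrivial; then $(C,D)$ is nontrivial as well (one has $v_i\in C\setminus D$ since $v_i\in A_i\subseteq C$, and any vertex of $(B_i\setminus A_i)\cap L_1$ lies in $D\setminus C$ since $B_i\cap L_1\subseteq D$), whence $|V(G)|\ge 5$ and \cref{lem:2seps-are-edges} applies to~$(C,D)$: $C\setminus D$ and $D\setminus C$ are distinct connected components of~$G-\{u,w\}$. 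Pointing left means $A_i\subseteq C$ and $B_i\cap L_1\subseteq D$ for all~$i$. Because $C\cup D=V(G)$ we have $D\setminus C=V(G)\setminus C$, so the $i$-th triple produced by the construction is $(\hat A_i,S_i,\hat B_i)$ with $\hat A_i=A_i\cup(V(G)\setminus C)=V(G)\setminus(C\setminus A_i)$ (using $A_i\subseteq C$) and $\hat B_i=B_i\setminus(V(G)\setminus C)=B_i\cap C$.

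With these closed forms the verification is set-theoretic bookkeeping. \cref{wfsf1,wfsf3}: $\hat A_i\cap\hat B_i=A_i\cap B_i=S_i$ and $|S_i|=3$, so also $\hat A_i\cup\hat B_i=V(G)$, and the width is three. \cref{wfsf1'}: $\hat A_i\setminus\hat B_i=(V(G)\setminus C)\cup(A_i\setminus B_i)$ and $\hat B_i\setminus\hat A_i=B_i\cap(C\setminus A_i)$; there is no edge between $A_i\setminus B_i$ and $B_i\cap(C\setminus A_i)\subseteq B_i\setminus A_i$ because $(A_i,B_i)$ is a separation, and no edge between $V(G)\setminus C=D\setminus C$ and $C\setminus A_i\subseteq C\setminus\{u,w\}=C\setminus D$ by \cref{lem:2seps-are-edges}. \cref{wfsf2}: from $A_i\subsetneq A_{i+1}\subseteq C$ (linearity of~$\S'$ and pointing left) we get $C\setminus A_i\supsetneq C\setminus A_{i+1}$, hence $\hat A_i\subsetneq\hat A_{i+1}$; and $\emptyset\ne B_i\setminus B_{i+1}\subseteq A_{i+1}\subseteq C$, so $\hat B_i=B_i\cap C\supsetneq B_{i+1}\cap C=\hat B_{i+1}$. \cref{wfsf4b}: $G[S_i]=K_3$ is the induced cycle $(\vcyca,v_{i,1},\vcycb)$ with $\vcyca:=u$, $\vcycb:=w$, $v_{i,1}:=v_i$, where $\vcyca,\vcycb$ are the same throughout~$\T$ and lie on~$L_1$, the layer of minimum index meeting~$S_i$, and $p'=1=p-2$; moreover $L_1\subseteq\hat A_1$ because $\hat A_1\supseteq V(G)\setminus C$ and $L_1\cap C\subseteq A_1$ (any $x\in(L_1\cap C)\setminus A_1$ would lie in $B_1\cap L_1\subseteq D$, forcing $x\in C\cap D=\{u,w\}\subseteq A_1$, a contradiction). \cref{wfsf4c} is vacuous since $p'=1$. \cref{wfsf7'} holds because $S_i$ has exactly $u,w$ on~$L_1$ and only $v_i$ on~$L_2$, and \cref{wfsf7} holds because $v_{i,1}=v_i\in L_2$ for every~$i$.

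The only step that is not purely mechanical is arriving at the closed forms $\hat A_i=V(G)\setminus(C\setminus A_i)$ and $\hat B_i=B_i\cap C$: this combines the identity $D\setminus C=V(G)\setminus C$ with the information \cref{child-seps} gives about which side of the base contains which vertices, together with \cref{lem:2seps-are-edges}; after that, each condition of \cref{def-wfsf} collapses to a one-line inclusion. A secondary point that deserves care is the "points right" subcase, where the base satisfies $B_i\subseteq D$ and $A_i\cap L_1\subseteq C$, the construction reverses the order of the sequence, and the analogous closed forms are $\hat A_i=V(G)\setminus(D\setminus B_i)$ and $\hat B_i=A_i\cap D$; the same computations go through with the roles of $A$ and $B$, and of $C$ and $D$, interchanged.
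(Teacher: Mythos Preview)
Your proof is correct and follows essentially the same approach as the paper: both verify the conditions of \cref{def-wfsf} one by one, using that $(C,D)$ is the common \chld{} of every separation in~$\S'$ together with the inclusions provided by \cref{child-seps} for the ``points left'' case. Your derivation of the closed forms $\hat A_i=V(G)\setminus(C\setminus A_i)$ and $\hat B_i=B_i\cap C$ makes the subsequent bookkeeping a bit more transparent than the paper's direct manipulation of $A\cup(D\setminus C)$ and $B\setminus(D\setminus C)$, but the underlying arguments for each property are the same.
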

\begin{proof}
\looseness=-1  Let $(\hat{A}, S, \hat{B})$ be in~\T, let $\S'$~be a sequence of separations as in \cref{sscase1} of \cref{cons-smallseps}, and let $(A, B)$~be the separation in~$\S'$ defining $(\hat{A}, S, \hat{B})$. By \cref{ss:nested} of \cref{next-small-sep}, $\S'$~is linear and thus, by \cref{linear-hom-ss-children}, $(C, D)$~as in \cref{cons-smallseps} is the \chld{} of each separation in~$\S'$.

  To verify \cref{wfsf1} of \wfsf s, it suffices to observe that, since $(A, B)$~is a separation, one has $A \cup (D \setminus C) \cup (B \setminus (D \setminus C)) = A \cup B = V(G) = B \cup A = B \cup (C \setminus D) \cup (A \setminus (C \setminus D))$. By \cref{cons-smallseps}, the set $\hat{A} \cup \hat{B}$ equals either the first or the last set in these equations.

  To verify \cref{wfsf1'}, for the sake of a contradiction, assume that there is an edge between~$\hat{A}\setminus\hat{B}$ and $\hat B\setminus\hat A$.  Consider the case that $\S'$~points left. Then, by the construction of~$\hat{A}$ and~$\hat{B}$, there is an edge between~$(A \cup (D \setminus C)) \setminus (B \setminus (D \setminus C))$ and $(B \setminus (D \setminus C)) \setminus (A \cup (D \setminus C))$. Note that the first set equals $(A \setminus B) \cup (D \setminus C)$ and the second set equals $(B \setminus A) \setminus (D \setminus C)$. Since $(A, B)$ is a separation, this implies that there is an edge between~$D \setminus C$ and~$(B \setminus A) \setminus (D \setminus C)$. Since $(D \setminus C) \cup (D \cap C) \cup (C \setminus D) = V(G)$ and $(B \setminus A) \cap (D \cap C) = \emptyset$, this implies that there is an edge between $D \setminus C$ and $C \setminus D$. This is a contradiction to the fact that $(C, D)$ is a separation. The case that $\S'$ points right is analogous.

  To verify \cref{wfsf3}, we have to show that $A\cap B=(A \cup (D \setminus C))\cap(B \setminus (D \setminus C))$ if $\S'$~points left and that $A\cap B=(B \cup (C \setminus D))\cap(A \setminus (C \setminus D)))$ if $\S'$~points right.  However, both cases are trivial since in the first case~$A \cap B \cap (D \setminus C) = \emptyset$ and in the second case $A \cap B \cap (C \setminus D) = \emptyset$.  Moreover, $|A\cap B|=3$ since $\S'$ is a triangular sequence. %

  For \cref{wfsf2}, assume that there is an element~$(\hat{A}', S', \hat{B}')$ of \T succeeding $(\hat{A}, S, \hat{B})$ and let $(A', B')$ be the separation corresponding to $(\hat{A}', S', \hat{B}')$. Consider the case that $\S'$ points left. Then $\hat{A} = A \cup (D \setminus C) \subsetneq A' \cup (D \setminus C) = \hat{A}'$ because $A \subsetneq A'$ by \cref{ss:nested,ss:wfsub} of \cref{next-small-sep} and since $A, A' \subseteq C$ because $(A, B)$ and $(A', B')$ point left. Moreover, $A, A' \subseteq C$ we have $B, B' \supseteq D \setminus C$ and hence, $B \cap (D \setminus C) = B' \cap (D \setminus C) D \setminus C$. Thus, since $B \supsetneq B'$ by \cref{ss:nested,ss:wfsub} of \cref{next-small-sep}, $\hat{B} = B \setminus (D \setminus C) \supsetneq B' \setminus (D \setminus C) = \hat{B}'$. The case that $\S'$~points right is analogous.

  We claim that \T fulfills \cref{wfsf4b}: For the second part, clearly, $S$ induces a triangle of the required form. To see the first part, assume that $(\hat{A}, S, \hat{B})$ is the first element of \T. If $(A, B)$ points left, then $B \cap L_1 \subseteq D$ by \cref{child-seps}. Since $L_1 \subseteq A \cup B \cap L_1$ we thus have $L_1 \subseteq A \cup (D \setminus C) = \hat{A}$, as required. The case that $(A, B)$ points right is analogous.

  Finally, \cref{wfsf4c,wfsf7',wfsf7} directly follow from the fact that $(A, B)$ is nice and of order three \todo{Iyad: Does it? For (viii) (and (vi)), don't you need that all the separations in ${\cal T}$ are of order 3?\\ms: I don't see why it doesn't. Both properties should be fulfilled for nice separations.}.
\end{proof}

\begin{lemma}\label{ss-case4}
  If \T is constructed according to \cref{sscase4} in \cref{cons-smallseps}, then \T is a \wfsf{} of width three.
\end{lemma}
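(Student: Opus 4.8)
The plan is to follow the same route as the proof of \cref{ss-case1}, but the situation is genuinely easier here: since $\S'$ is $L_1$-trivial, its separations can be turned into the separators of a \wfsf{} essentially without modification, so there is no need to pass to \chlds{}. First I would fix notation. As a subsequence of $\S$ all of whose members have the same (third) order, $\S'$ is linear in the strong sense that $A\subsetneq A'$ and $B\supsetneq B'$ for consecutive $(A,B),(A',B')\in\S'$, by \cref{ss:nested,ss:wfsub} of \cref{next-small-sep}; write $\S'=((A_1,B_1),\dots,(A_n,B_n))$. Since $\S'$ is hinged, there is an edge $\{u,w\}$ of $G$ with $u,w\in L_1$ such that $A_i\cap B_i\cap L_1=\{u,w\}$ for all $i$, and since $\S'$ is triangular, $S_i:=A_i\cap B_i=\{u,v_i,w\}$ where $v_i\notin L_1$ is a common neighbor of $u$ and $w$; hence $G[S_i]$ is a triangle, and therefore a cycle, and $v_i$ lies on $L_2$ (a neighbor of an $L_1$-vertex lies on $L_1$ or $L_2$). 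Moreover $\{u,w\}$ lies on the outer face, because a chord of the cycle $L_1$ would cut off a vertex of $L_1\setminus\{u,w\}$ on each side of $S_i$, contradicting $L_1$-triviality.

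The crucial step, which I expect to be the main obstacle, is the claim that $L_1$ lies on the \emph{same} side of every separator in $\S'$: either $L_1\subseteq A_i$ for all $i$, or $L_1\subseteq B_i$ for all $i$. By $L_1$-triviality, $L_1$ is contained in $A_i$ or in $B_i$ for each $i$. If $L_1\subseteq A_i$ and $L_1\subseteq B_j$ for some $i<j$, linearity gives $L_1\subseteq A_i\subseteq A_j$, hence $L_1\subseteq A_j\cap B_j=S_j$, which is impossible since $|S_j|=3$ while $v_j\notin L_1$ and $|L_1|\ge 3$. The opposite configuration, $L_1\subseteq B_i$ and $L_1\subseteq A_j$ with $i<j$, is the delicate one; I would exclude it by planarity. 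Each path $P_\ell=(u,v_\ell,w)$ together with the outer-face edge $\{u,w\}$ bounds a region $\lambda_\ell$ of the disk that contains no vertex of $L_1$ other than $u,w$; since $(A_\ell,B_\ell)$ is the nice separation induced by $P_\ell$, the side of it containing $L_1$ is the one complementary to $\lambda_\ell$, which I denote $M_\ell$. Thus $L_1\subseteq B_i$ forces $A_i=V(\lambda_i)$ and $B_i=V(M_i)$, while $L_1\subseteq A_j$ forces $B_j=V(\lambda_j)$ and $A_j=V(M_j)$. As $P_i$ and $P_j$ are non-crossing and share $\{u,w\}$, the lenses $\lambda_i$ and $\lambda_j$ are nested; if $\lambda_i\subseteq\lambda_j$ then $v_i$ lies in the interior of $\lambda_j$, hence $v_i\notin V(M_j)$, contradicting $v_i\in V(\lambda_i)=A_i\subsetneq A_j=V(M_j)$, and the nesting $\lambda_j\subseteq\lambda_i$ is excluded symmetrically via $B_i\supsetneq B_j$. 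Here one needs the $v_\ell$ to be pairwise distinct, which follows from linearity and the nontriviality of the separations; the at most two trivial triangular separations in $\S'$ occur only at its ends and are handled directly.

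Granting the claim, it remains to verify the properties of \cref{def-wfsf} for the sequence \T produced according to \cref{sscase4}. If $L_1\subseteq A'$, where $(A',B')$ is the first separation of $\S'$, then $\T=((A_i,S_i,B_i))_{i=1}^{n}$; otherwise $L_1\subseteq B'$ by $L_1$-triviality, and $\T=((B_{n+1-j},S_{n+1-j},A_{n+1-j}))_{j=1}^{n}$ is $\S'$ reversed with the two sides of each separation swapped. In both cases \T arises from the linear sequence $\S'$ (possibly reversed and with swapped sides), so \cref{wfsf1,wfsf1',wfsf3} are immediate from the definition of a separation, the width is $3$ since $|S_i|=3$, and \cref{wfsf2} follows from linearity. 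For \cref{wfsf4} we are in the cycle case \cref{wfsf4b}: $S_i$ induces the triangle $(u,v_i,w)$, the minimum-index layer meeting $S_i$ is $L_1$, so we take $\vcyca=u$ and $\vcycb=w$, which are the same for every $i$ by the hinged property and yield $p'=1=p-2$; and the inclusion required of the first separator's $A$-side holds because that set equals $A'$ (first case) or $B_n$ (second case), and both contain $L_1$ by the claim. \cref{wfsf4c} is vacuous since $p'=1$, and \cref{wfsf7',wfsf7} hold because $u,w\in L_1$ and every $v_i\in L_2$. Finally \T has the same length as $\S'$, hence at least $t$. Apart from the planarity argument for the structural claim, every step is routine bookkeeping analogous to the proof of \cref{ss-case1}.
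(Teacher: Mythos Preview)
Your argument is correct and follows the same route as the paper's proof: verify \cref{wfsf1,wfsf1',wfsf3,wfsf2} from the fact that $\S'$ is a linear sequence of triangular nice separations, and then establish \cref{wfsf4b} by showing that $L_1$ lies on the same side of every separation in~$\S'$. The paper dispatches this last point in one line (``since $\S'$ is hinged, the path induced by each separation touches $L_1$ in the same place; thus $L_1\subseteq B$ for each''), whereas you supply the geometric argument that the paper leaves implicit: that $\{u,w\}$ must be an outer-face edge, that the lenses $\lambda_\ell$ are nested, and that a sign change between $L_1\subseteq B_i$ and $L_1\subseteq A_j$ (with $i<j$) contradicts linearity. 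Your treatment of the easy direction ($L_1\subseteq A_i$ and $L_1\subseteq B_j$ with $i<j$ forces $L_1\subseteq S_j$) is clean. The one soft spot you flag yourself is the degenerate case $v_i=v_j$; the paper does not address it either, and your remark that it forces one of the two separations to be trivial (hence at an end of~$\S'$) is the right observation, though it could be spelled out more. Overall this is the paper's proof, done with more care on the step the paper hand-waves.
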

\begin{proof}
  \cref{wfsf1,wfsf1',wfsf3} are fulfilled since $\S'$ is a sequence of triangular separations. The linearity of~$\S'$ implies \cref{wfsf2}.

  We claim that \T fulfills \cref{wfsf4b}: Clearly, the intersections~$A \cap B$ in the definition of~\T induce triangles of the required form.  It remains to show $L_1 \subseteq A'$ for the first separation~$(A', B')$ in $\S'$.  Assume that this is not the case.  Then, since $\S'$~is $L_1$-trivial, we have $L_1 \subseteq B'$. Furthermore, since $\S'$~is hinged, the path induced by each separation in~$\S'$ touches~$L_1$ in the same place. Thus, $L_1 \subseteq B$ for each separation~$(A, B)$ in~$\S'$. Since \T~contains $(B, A \cap B, A)$ in this case, it satisfies \cref{wfsf4b}.

  Finally, \cref{wfsf4c,wfsf7',wfsf7} directly follow from the fact that each separation in~$\S'$ is nice.
\end{proof}

\noindent For \cref{sscase2}, we first need to show that all the considered \chlds{} are nontrivial and that their induced separators differ.

\begin{lemma}\label{ss-children-nontrivial}
  Each \chld{} is nontrivial in the sequence $\S'$ of \chlds{} in \cref{sscase2} of \cref{cons-smallseps} and, moreover, for each pair of \chlds{} $(C, D)$ and~$(C', D')$ in~$\S'$, we have $C \cap D \neq C' \cap D'$.
\end{lemma}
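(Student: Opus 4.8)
The plan is to prove the two assertions in turn, the second building on the first. For \emph{nontriviality}, let $(C,D)$ be a base appearing in $\S'$; it is the base of some maximal homogeneous subsequence of $\S$, so pick a separation $(A,B)$ in that subsequence. Being homogeneous, $(A,B)$ is an $L_1$-nontrivial triangular separation that points left or right, say left (the other case is symmetric). By \cref{child-seps} this gives $A\subseteq C$, $B\cap L_1\subseteq D$, and $C\cap D=A\cap B\cap L_1$. Since $(A,B)$ is $L_1$-nontrivial, there are vertices $a\in(A\setminus B)\cap L_1$ and $b\in(B\setminus A)\cap L_1$; then $a\in A\subseteq C$ while $a\notin A\cap B\cap L_1=C\cap D$, so $a\in C\setminus D$, and $b\in B\cap L_1\subseteq D$ while $b\notin C\cap D$, so $b\in D\setminus C$. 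Hence $(C,D)$ is nontrivial.

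For the \emph{distinctness of separators} I would argue by contradiction. Suppose $(C,D)$ and $(C',D')$ are two bases in $\S'$ with $(C,D)\neq(C',D')$ but $C\cap D=C'\cap D'$. By \cref{linear-hom-ss-children} a maximal homogeneous subsequence has a single base, so these are the bases of two distinct maximal homogeneous subsequences $\Ssubst$ and $\Ssubst'$; as these are contiguous blocks of $\S$, assume $\Ssubst$ precedes $\Ssubst'$, and pick $(A,B)\in\Ssubst$, $(A',B')\in\Ssubst'$. Since $\S$ is produced by iterating \cref{next-small-sep}, \cref{ss:nested} yields $A\subseteq A'$ and $B\supseteq B'$. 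By the first part, $(C,D)$ and $(C',D')$ are nontrivial order-two separations with the same separator $C\cap D$; by \cref{lem:2seps-are-edges} the pairs $\{C\setminus D,D\setminus C\}$ and $\{C'\setminus D',D'\setminus C'\}$ both equal the set of the two connected components of $G-(C\cap D)$, so $(C,D)\neq(C',D')$ forces $C=D'$ and $D=C'$.

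It remains to refute $C=D'$, $D=C'$ by a case distinction on the directions of $\Ssubst$ and $\Ssubst'$, using from \cref{child-seps} that ``points left'' means $A\subseteq C$ and $B\cap L_1\subseteq D$ (and dually for primes), that $C\cap D=A\cap B\cap L_1\subseteq B$, and the inclusions above. If both point left, then $A\subseteq A'\subseteq C'=D$ and $A\subseteq C$, so $A\subseteq C\cap D\subseteq B$, contradicting $(A\setminus B)\cap L_1\neq\emptyset$; ``both point right'' is the mirror image (using $B'\subseteq B\subseteq D$ and $B'\subseteq D'=C$, so $B'\subseteq C\cap D\subseteq A'$). If $\Ssubst$ points right and $\Ssubst'$ points left, then $B\subseteq D$ and $A'\subseteq C'=D$, so $V(G)=A'\cup B'\subseteq D$, whence $C\setminus D=\emptyset$, contradicting nontriviality of $(C,D)$. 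The last and only delicate case is $\Ssubst$ pointing left, $\Ssubst'$ pointing right: here I would use the \emph{finer} half $A'\cap L_1\subseteq C'=D$ of \cref{child-seps} for the right-pointing $(A',B')$, take $a\in(A\setminus B)\cap L_1$ (nonempty by $L_1$-nontriviality), and note $a\in A\subseteq C$ with $a\notin C\cap D$, hence $a\in C\setminus D$; but $a\in A\subseteq A'$ and $a\in L_1$ give $a\in A'\cap L_1\subseteq D$, a contradiction. I expect this mixed case to be the only real obstacle, since there the tidy ``all of $V(G)$ falls on one side'' reasoning is unavailable and one must instead follow a single outer-layer vertex through the chain of inclusions, being careful to invoke the sharper ($L_1$-restricted) conclusions of \cref{child-seps}.
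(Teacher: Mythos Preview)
Your proof is correct. The nontriviality argument is essentially the paper's (you exhibit a vertex of $(A\setminus B)\cap L_1$ in $C\setminus D$, whereas the paper uses the non-$L_1$ vertex of $A\cap B$; both work).

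For the distinctness of the separators you take a genuinely different route. The paper observes that $C\cap D=C'\cap D'$ means $A\cap B\cap L_1=A'\cap B'\cap L_1$, so $(A,B)$ and $(A',B')$ satisfy the hypotheses of \cref{linear-sep-children}; together with the uniqueness in \cref{child-seps} (both separations are $L_1$-nontrivial) this immediately gives $(C,D)=(C',D')$, contradicting the duplicate removal in \cref{sscase2}. You instead first establish nontriviality, invoke \cref{lem:2seps-are-edges} to force $C=D'$ and $D=C'$, and then run a four-way case analysis on the pointing directions. This case analysis is essentially the nontrivial-case argument inside the \emph{proof} of \cref{linear-sep-children}, so you are unfolding that lemma rather than citing it. Your approach is more self-contained and makes explicit why the ``left/right'' mixed case is the only subtle one; the paper's is shorter because it packages that work into the earlier lemma. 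A minor structural difference is that you need nontriviality to apply \cref{lem:2seps-are-edges}, hence you prove it first, while the paper's order is reversed.
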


\begin{proof}
  We first prove that, for each pair of \chlds{} $(C, D)$ and~$(C', D')$ in~$\S'$, we have $C \cap D \neq C' \cap D'$.  Observe that $(C, D)$ and~$(C', D')$ are the \chlds{} of two different maximal homogeneous subsequences~\T and~$\T'$ of~$\S'$.  Towards a contradiction, assume that $C \cap D = C' \cap D'$.  Then, \T and~$\T'$ point to different directions.  Without loss of generality, each separation of~\T comes before each separation of~$\T'$ in~$\S'$. Let $(A, B)$ and $(A', B')$~be separations in~\T and~$\T'$, respectively. Then, $(A, B)$ and $(A', B')$ fulfill the preconditions of \cref{linear-sep-children}. This implies that $(C, D) = (C', D')$ by \cref{child-seps} since~\T and~$\T'$ are homogeneous, and therefore $L_1$-nontrivial by definition.  We now have our contradiction, since the sequence~$\S'$ output by \cref{sscase2} does not contain duplicates.

  It remains to show that all \chlds{} in~$\S'$ are nontrivial.  Let $(A,B)$~be a separation whose \chld{}~$(C, D)$ is in~$\S'$. Since $(A, B)$ is part of a homogeneous sequence, it either points left or right and, in particular, is $L_1$-nontrivial. Without loss of generality, assume that $(A, B)$~points left, the other case is similar. By \cref{bases}, $A \subseteq C$. Moreover $(A \cap B) \setminus D \neq \emptyset$, whence we have $C \setminus D \neq \emptyset$. By $L_1$-nontriviality $(B \setminus A) \cap L_1 \neq \emptyset$. Since, by \cref{child-seps}, $B \cap L_1 \subseteq D$, we also have $D \setminus C \neq \emptyset$. Hence, $(C, D)$ is nontrivial.
\end{proof}

\begin{lemma}\label{ss-case2}
  If \T was constructed according to \cref{sscase2} in \cref{cons-smallseps}, then \T is a \wfsf{} of width two.
\end{lemma}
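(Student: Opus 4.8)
The plan is to check, one by one, the conditions \cref{wfsf1,wfsf1',wfsf3,wfsf2,wfsf4,wfsf4c,wfsf7',wfsf7} of \cref{def-wfsf} for the sequence~\T produced in \cref{sscase2} of \cref{cons-smallseps}; recall that the entries of~\T are the triples $(C, C\cap D, D)$ with $(C,D)$ running over the deduplicated sequence~$\S'$ of \chlds{} of the maximal homogeneous subsequences of~$\S$. By \cref{ss-children-nontrivial} each such~$(C,D)$ is a \emph{nontrivial} separation of order two (so $|C\cap D|=2$) and the separators~$C\cap D$ are pairwise distinct; since moreover each~$(C,D)$ is the \chld{} of a homogeneous---hence $L_1$-nontrivial---subsequence of~$\S$, \cref{bases,child-seps} give that $C\cap D$ is an edge of~$G$ with both endpoints on~$L_1$ and that both $C\setminus D$ and $D\setminus C$ meet~$L_1$. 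The width is therefore two, and \cref{wfsf1,wfsf1',wfsf3} are immediate from~$(C,D)$ being an order-two separation; \cref{wfsf4} holds via \cref{wfsf4a}, because $G[C\cap D]$ is a two-vertex induced path with both vertices on~$L_1$ (so $p'=p=2$); and \cref{wfsf7',wfsf7} are trivial because every separator consists of exactly two vertices of~$L_1$.

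The substance is \cref{wfsf2}; granting it, \cref{wfsf4c} will be easy. It suffices to prove $C\subsetneq C'$ and $D\supsetneq D'$ for two entries $(C,C\cap D,D)$ and $(C',C'\cap D',D')$ that are consecutive in~$\S'$. Such a pair comes from two distinct maximal homogeneous subsequences \Ssubst and~$\Ssubst'$ of~$\S$, with every separation of~\Ssubst preceding every separation of~$\Ssubst'$ in~$\S$ (that the duplicate removal does not spoil this is exactly where \cref{ss-hom-share-children} enters). Fixing $(A,B)\in\Ssubst$ and $(A',B')\in\Ssubst'$, linearity of~$\S$ (\cref{ss:nested} of \cref{next-small-sep}) gives $A\subsetneq A'$ and $B\supsetneq B'$, and distinctness of the bases gives $A\cap B\cap L_1\neq A'\cap B'\cap L_1$. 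The key observation---obtained from \cref{child-seps} and \cref{lem:2seps-are-edges}---is that $C\setminus D$ is precisely the connected component of~$G-(A\cap B\cap L_1)$ containing the nonempty set~$(A\setminus B)\cap L_1$; this holds whether~\Ssubst points left or right, because in both cases the inclusion $A\cap L_1\subseteq C$ forces~$(A\setminus B)\cap L_1$ into the single component~$C\setminus D$. The analogous statement holds for~$C'\setminus D'$ and~$\Ssubst'$.

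From this, $A\cap B\cap L_1\subseteq A'\cap L_1\subseteq C'$ and $(A\setminus B)\cap L_1\subseteq (A'\setminus B')\cap L_1\subseteq C'\setminus D'$, so the component $C\setminus D$ already meets~$C'$; propagating this along the whole component---that is, showing that~$C\setminus D$ does not reach across the edge~$C'\cap D'$---then yields $C\subseteq C'$, and $D\supseteq D'$ follows dually, both inclusions being proper because $C\cap D\neq C'\cap D'$. This gives \cref{wfsf2}. Finally, for \cref{wfsf4c}: nestedness of the sets~$C$ makes the arcs $C\cap L_1$ nested \emph{proper} sub-arcs of the cycle~$L_1$ (proper precisely because both $C\setminus D$ and $D\setminus C$ meet~$L_1$); fixing an orientation of the largest such arc and always letting~$v_{i,1}$ be the endpoint of the $i$-th separator $C\cap D$ encountered first in that orientation makes the indexing globally consistent, so $v_{i,k}=v_{j,\ell}$ forces $k=\ell$.

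The main obstacle is \cref{wfsf2}, and within it the step showing that passing from a triangular separation to its \chld{} respects the nesting of~$\S$: taking a \chld{} perturbs the two sides of a triangular separation by an empty triangle, in a direction that depends on whether the subsequence points left or right, so $C\subsetneq C'$ does not drop out of $A\subsetneq A'$ formally. Ruling out that the component~$C\setminus D$ ``spills over'' the later base edge~$C'\cap D'$ is the crux, and is exactly where the structural content of \cref{ss-hom-share-children} (a left-pointing maximal homogeneous subsequence is immediately followed by a right-pointing one sharing its \chld{}, with no separation of~$\S$ strictly between them) together with the maximality of the homogeneous subsequences and the fixed embedding of the triangulated disk~$G$ are needed. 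Once \cref{wfsf2} is established, \cref{wfsf4c} and the remaining conditions are routine.
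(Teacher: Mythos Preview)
Your plan is correct and close to the paper's proof; the paper dispatches \cref{wfsf1,wfsf1',wfsf3,wfsf4c,wfsf7',wfsf7} the same way (the last three with the one-line remark that $C\cap D$ is an edge contained in~$L_1$). For \cref{wfsf2} the paper argues geometrically---the two base edges $C\cap D$ and $C'\cap D'$ cut the disk enclosed by~$L_1$ into three regions, and one traces a neighbour in $L_1\setminus D$ of a vertex in $(C\cap D)\setminus(C'\cap D')$, splitting into two cases---whereas you recast the argument in terms of connected components. Both routes work.

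Where you go astray is in your assessment of the ``spill over'' step. You have already observed that $C\setminus D$ is connected (\cref{lem:2seps-are-edges}) and meets~$C'\setminus D'$, so all that remains is $(C\setminus D)\cap(C'\cap D')=\emptyset$; then $C\setminus D\subseteq C'\setminus D'$ and, combined with $C\cap D\subseteq C'$ (which you already have), you are done. This step is a one-liner and does \emph{not} require \cref{ss-hom-share-children}, maximality, or any embedding argument: since $C'\cap D'=A'\cap B'\cap L_1\subseteq B'\subseteq B$ and---regardless of whether $(A,B)$ points left or right---\cref{child-seps} gives $B\cap L_1\subseteq D$, one has $C'\cap D'\subseteq D$, hence $C'\cap D'$ is disjoint from~$C\setminus D$. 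The paper's proof of this lemma uses only \cref{ss-children-nontrivial}, \cref{lem:2seps-are-edges}, and \cref{child-seps}; \cref{ss-hom-share-children} enters only in the length bound (\cref{ss-length}), to control how many duplicates are removed in \cref{sscase2}, and plays no role in verifying the \wfsf{} properties. Your proof is thus essentially complete once you replace the final paragraph by this observation.
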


\begin{proof}
  \looseness=-1 Clearly, as each \chld{} of a separation is itself a separation, \cref{wfsf1,wfsf1'} of \wfsf s are fulfilled.  It is easy to see that \cref{wfsf3} is fulfilled as well.

  To prove \cref{wfsf2} first recall that each \chld{} in~$\S'$ is nontrivial by \cref{ss-children-nontrivial}. Let $(C, D)$ be the \chld{} in~$\S'$ of some separation~$(A, B)$ in \S that is not the last one and let $(C', D')$~be the \chld{} in~$\S'$ belonging to a separation $(A', B')$ with a higher index than $(A, B)$ in~\S. We claim that $C \subsetneq C'$ and $D \supsetneq D'$. Since, by definition of nice separations (\cref{def:nicesep}), $D$~is uniquely determined once~$C$ and~$C \cap D$ are defined, whence it suffices to prove that~$C \subsetneq C'$. Since both $C \cap D$ and $C' \cap D'$~are edges in~$G$ with endpoints in~$L_1$, they subdivide the region enclosed by~$L_1$ into three regions. One of these regions, $R$, is incident with~$C \cap D$ and not incident with~$C' \cap D'$ because $C \cap D \neq C' \cap D'$ by \cref{ss-children-nontrivial}. Again, since $C \cap D \neq C' \cap D'$, there is a vertex~$v \in (C \cap D) \setminus (C' \cap D')$. Moreover, since $(C, D)$~is nontrivial, $v$~has a neighbor~$u \in L_1 \setminus D$ contained in~$R$.  We distinguish two cases: vertex~$u$ is contained in~$A$ or~$B$.

  \begin{case}[$u \in A$] Then, as $A \subsetneq A'$ and~$u \in L_1$, we have $u \in C'$ by \cref{child-seps}. Furthermore, $u \in C \setminus D$ by the choice of~$u$. Since $C \setminus D$~is connected (\cref{lem:2seps-are-edges}) and $(C \setminus D) \cap C' \cap D' = \emptyset$, we obtain $C \setminus D \subseteq C'$. Since $C \cap D$ is connected to~$u$ via~$v$ and~$v \notin C \cap D \cap C' \cap D'$, furthermore $C \cap D \subseteq C'$ holds. Hence, if $u \in A$, we have~$C \subsetneq C'$.
  \end{case}

  \begin{case}[$u\in B$]
    We lead this case to a contradiction.  By \cref{def:nicesep} of nice separations, $B$~is enclosed by the curve induced by the vertices in~$L_1$ that are also in~$R$ and a path~$P$ contained in~$A \cap B$.  Since $|(V(P) \cap L_1) \setminus \{v\}| \leq 1$\todo{ms: $v \in L_1$ indeed.

\par\quad rvb: Argh, in all other proofs, $v$ is used as a name for a vertex NOT in $L_1$}{}, we have that $|B \cap (C' \cap D')| \leq 1$. This is a contradiction, since $B' \subsetneq B$ and there are two vertices in $C' \cap D' = A' \cap B' \cap L_1$.
  \end{case}

\noindent  This proves \cref{wfsf2}.  Finally, since for each separation~$(C, D)$ in~\T, we have that $C \cap D$~is an edge and $C \cap D \subseteq L_1$, \cref{wfsf4c,wfsf7',wfsf7} are fulfilled.%
\end{proof}

\begin{lemma}\label{ss-case3}
  If \T was constructed according to \cref{sscase3} in \cref{cons-smallseps}, then \T is a \wfsf{} of width two or three.
\end{lemma}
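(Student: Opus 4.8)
The plan is to go through \cref{def-wfsf} property by property for the sequence $\T = ((A, A \cap B, B))_{(A, B) \in \S'}$ produced in \cref{sscase3}. Recall that $\S'$ is obtained from the sequence~$\S$ of nice separations generated via \cref{next-small-sep} by first discarding all triangular separations and then retaining all separations of order two, or all separations of order three, whichever subsequence is longer; in particular all separations in~$\S'$ share one order~$p \in \{2, 3\}$, which will be the width of~\T. Since each $(A, B) \in \S'$ is a separation, \cref{wfsf1,wfsf1'} hold at once, and \cref{wfsf3} holds because $|A \cap B|$ equals the common order~$p$. For \cref{wfsf4}, I would show that \cref{wfsf4a} always holds: if $p = 2$, then by \cref{def:nicesep} the set $A \cap B$ is an edge of~$G$ with both endpoints in~$L_1$, hence an induced path on two $L_1$-vertices; if $p = 3$, then by \cref{def:nicesep} the set $A \cap B$ consists of two vertices in~$L_1$ together with one further vertex~$v$ adjacent to both, and since $(A, B)$ is not triangular the two $L_1$-vertices are non-adjacent, so $G[A \cap B]$ is precisely the induced three-vertex path with its two endpoints in~$L_1$ (and $v$, being a neighbor of an $L_1$-vertex but not in~$L_1$, lies in~$L_2$).

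The crux is \cref{wfsf2}: consecutive separations in~$\S'$ must be \emph{strictly} nested, which is subtle because consecutive elements of~$\S'$ need not be consecutive in~$\S$. First I would record a strengthening of \cref{ss:wfsub}, obtained by inspecting the proof of \cref{next-small-sep}: for separations $(A, B), (A', B')$ consecutive in~$\S$, the equality $A = A'$ occurs only in the single subcase where $(A, B)$ has order three and $(A', B')$ has order two (namely $v$ having no neighbor in $B\setminus A$), and $B = B'$ occurs only where $(A, B)$ has order two and $(A', B')$ has order three (namely $w\notin L_1$); in every other subcase both inclusions are strict. Now suppose, for contradiction, that two consecutive elements $(A, B), (A', B')$ of~$\S'$ had $A = A'$. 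By \cref{ss:nested} all separations of~$\S$ forming the segment from $(A,B)$ to $(A',B')$ would then share the same $A$-side, so by the strengthening each consecutive pair in that segment would have orders exactly $(3, 2)$; two such pairs would force one separation to have order both $2$ and $3$, so the segment consists of just $(A, B), (A', B')$, consecutive in~$\S$ with orders $(3,2)$ --- contradicting that both have order~$p$. Hence $A \subsetneq A'$; the symmetric argument with the other half of the strengthening gives $B \supsetneq B'$. (The non-strict inclusions $A \subseteq A'$, $B \supseteq B'$ come directly from \cref{ss:nested}.)

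It remains to fix a consistent indexing of the separator vertices and check \cref{wfsf4c,wfsf7',wfsf7}. When $p = 3$, the middle vertex of each $A \cap B$ lies in~$L_2$ and the other two in~$L_1$, so it is declared the index-$2$ vertex, the two $L_1$-vertices getting indices~$1$ and~$3$; when $p = 2$ both separator vertices lie in~$L_1$. The indices of the $L_1$-endpoints are then fixed globally by cutting the cycle~$L_1$ at one point and orienting every separator along the resulting linear order; since the sets $\{A \cap B : (A, B) \in \S'\}$ are pairwise non-crossing curves through the disk sweeping monotonically across it (the $A$-sides grow along~$\S'$, by the strict nesting just proved), this orientation is consistent. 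Then \cref{wfsf7',wfsf7} are immediate (each separator has at most two vertices in~$L_1$ and at most one in~$L_2$, and the layer of index~$k$ is constant along~\T), and for \cref{wfsf4c}, if $v_{i,k} = v_{j,\ell}$ then either both indices equal~$2$ (a shared $L_2$-vertex, possible only when $p = 3$) or both vertices lie in~$L_1$ and the sweep orientation forces $k = \ell$. The main obstacle is the strict nesting of \cref{wfsf2}: it genuinely requires reopening the case analysis in the proof of \cref{next-small-sep}, since \cref{ss:wfsub} alone permits orders to alternate as $(3, 2, 3, \dots)$ along a chain of separations with a common $A$-side, which would break \cref{wfsf2}.
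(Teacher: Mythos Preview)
Your argument is correct and follows the same property-by-property verification as the paper, but on \cref{wfsf2} you are actually more careful than the paper's own proof. The paper simply asserts that for every pair of consecutive separations $(A,B),(A',B')$ in~$\S$ one has $A\subsetneq A'$ and $B\supsetneq B'$, citing \cref{ss:wfsub}; this is not what \cref{ss:wfsub} states, and in fact it is false---Subcase~1.2 in the proof of \cref{next-small-sep} gives $B'=B$, and Subcase~2.1 gives $A'=A$. Your strengthening (that $A=A'$ forces the order pattern $(3,2)$ and $B=B'$ forces $(2,3)$) is exactly what is needed to deduce strict nesting for consecutive elements of the same-order subsequence~$\S'$, so your route is the honest one.

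One small imprecision: in your indexing for \cref{wfsf4c}, cutting $L_1$ at an \emph{arbitrary} point does not work. For instance, on the cycle $(a,b,c,d,e,f)$ the nested order-two separators $\{a,d\}$ (with $A=\{a,b,c,d\}$) and $\{d,f\}$ (with $A'=\{f,a,b,c,d\}$) can occur consecutively in~$\S$; cutting at the edge $fa$ gives $d$ index~$2$ in the first separator and index~$1$ in the second. The fix is to cut at an edge contained in the $A$-arc of the \emph{first} separator in~\T: then every $B_i$-arc is a genuine interval, these intervals are nested by the strict inclusions you established, and assigning index~$1$ to the left endpoint and index~$p'$ to the right endpoint is consistent. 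The paper glosses over this entirely (``follow directly from \cref{def:nicesep}''), so your treatment is already the more explicit of the two.
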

\begin{proof}
  Clearly, each object in \T is a separation, hence \cref{wfsf1,wfsf1'} are fulfilled. By the choice of the separations in~\T, also \cref{wfsf2} holds: for each two consecutive separations~$(A, B)$ and~$(A', B')$ in~\S, the initial sequence in \cref{cons-smallseps}, we have $A \subsetneq A'$ and $B \supsetneq B'$ by  \cref{ss:wfsub} of \cref{next-small-sep}. Hence, the same holds for the subsequence~\T.  \cref{wfsf4a} is fulfilled since none of the separations~$(A, B)$ in \T induces a triangle, and since $G[A \cap B]$ contains a path whose endpoints are in~$L_1$ by \cref{def:nicesep} of nice separation. Finally, also \cref{wfsf4c,wfsf7',wfsf7} follow directly from \cref{def:nicesep}.
\end{proof}

\noindent Combining \cref{ss-length,ss-case1,ss-case4,ss-case2,ss-case3}, we obtain the following corollary. Note that, in the case that $G$~is outerplanar, there are no nice separations of order three in~$G$, and hence, the \wfsf{} constructed in \cref{cons-smallseps} has width two.
\begin{corollary}\label{small-seps}
  Let $G$ be an $\layers$-outerplanar triangulated disk, let $\ell \geq 2$ be an upper bound on the size of each block in $G - L_1$ and $k > 0$ a lower bound on $q(V(G))$. Using \cref{cons-smallseps}, we can construct a \wfsf{} of width at most three and of length at least~\sssl. If $G$ is outerplanar, then the width of the sequences is two instead.
\end{corollary}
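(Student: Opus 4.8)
The plan is to combine \cref{ss-length} with the case-by-case verification lemmas \cref{ss-case1,ss-case4,ss-case2,ss-case3}. Concretely, I would run \cref{cons-smallseps} with its parameter set to $t := \sssl$. By \cref{ss-length}, the sequence~\T thus produced has length at least $t = \sssl$. Since the case distinction inside \cref{cons-smallseps} is exhaustive (\cref{sscase3} being the explicit catch-all), \T is built according to exactly one of \cref{sscase1,sscase4,sscase2,sscase3}, and the matching one of \cref{ss-case1,ss-case4,ss-case2,ss-case3} certifies that \T is, in that case, a \wfsf{} of width two or three; in particular, of width at most three. This proves the first assertion. Note that the width guarantees of \cref{ss-case1,ss-case4,ss-case2,ss-case3} hold for \emph{any} legitimate run of \cref{cons-smallseps}, independently of~$t$, so there is no tension between the choice of~$t$ dictated by \cref{ss-length} and the width bounds.

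For the refinement to outerplanar~$G$, the key observation is that no separation of order three ever arises during the process. Since $V(G) = L_1$, \cref{def:nicesep} admits no nice separation of order three at all, as such a separation requires a vertex in $(A \cap B) \setminus L_1$. Tracing \cref{next-small-sep} starting from the order-two separation $(A_1, B_1)$ of \cref{cons-smallseps}: only the subcase ``$w \in L_1$'' of the order-two case can ever fire (the newly found vertex $w$ is necessarily on $L_1$), and that subcase outputs a separation of order two; hence every separation in the internally generated sequence~$\S$ of \cref{cons-smallseps} has order two. Consequently none of \cref{sscase1,sscase4,sscase2} can apply, since each of them requires triangular, i.e.\ order-three, separations, so \T is necessarily constructed via \cref{sscase3}. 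There, after the (vacuous) removal of triangular separations, $\S'$ equals $\S$, so \T consists exclusively of order-two separations and thus has width two, while \cref{ss-case3} still guarantees that \T is a \wfsf{}. (Incidentally, $\ell = 2$ is forced here, as $G - L_1$ has no blocks.)

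I do not expect a genuine obstacle in this step: the entire technical burden is carried by the lemmas being cited. The only points meriting a moment of care are (a) confirming that the four cases of \cref{cons-smallseps} are jointly exhaustive, so that one of \cref{ss-case1,ss-case4,ss-case2,ss-case3} always applies to the output, and (b) the outerplanar argument that the internally generated sequence~$\S$ never leaves order two, which follows by direct inspection of \cref{next-small-sep}.
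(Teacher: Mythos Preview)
Your proposal is correct and matches the paper's own argument essentially verbatim: the paper simply states that the corollary follows by combining \cref{ss-length} with \cref{ss-case1,ss-case4,ss-case2,ss-case3}, and for the outerplanar refinement it notes, as you do, that no nice separation of order three can exist, forcing width two. Your trace through \cref{next-small-sep} in the outerplanar case is more explicit than the paper's one-line remark, but the reasoning is the same.
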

\else%
\looseness=-1\noindent It can be shown that not too many \chlds{} are removed in \cref{sscase2} and, furthermore, that there are at most two maximal homogeneous subsequences of \S that do not consist of homogeneous subsequences. This means that, in \cref{sscase3}, at most $t(t + 2)$ triangular separations are removed. Moreover, if $q(V(G))$ is large, then the initial sequence \S is long. Combining these observations, we obtain:%
\begin{lemma}\label{small-seps}
  Let $G$ be an $\layers$-outerplanar triangulated disk, let $\ell \geq 2$ be an upper bound on the size of each block in $G - L_1$ and $k > 0$ a lower bound on $q(V(G))$. Using \cref{cons-smallseps} we can construct a \wfsf{} of width at most three and of length at least \sssl. If $G$ is outerplanar, then the width of the sequences is two.
\end{lemma}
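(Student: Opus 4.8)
The plan is to run the inductive scheme behind \cref{cons-smallseps}, first bounding the length of the initial sequence~\S of nice separations it produces and then verifying, case by case, that for the right threshold~$t$ the construction outputs a genuine \wfsf{} of length at least~$t$. For the length of~\S: the construction starts from the trivial order-two separation~$(A_1,B_1)$ with $B_1=V(G)$, so $q(B_1)\ge k$, and repeatedly applies \cref{next-small-sep}; as long as $q(B_i)\ge\ell$ the lemma applies and yields $(A_{i+1},B_{i+1})$ with $q(B_{i+1})\ge (q(B_i)-1)/\ell$, strictly nested in~$(A_i,B_i)$ by \cref{ss:nested,ss:wfsub}. Unrolling this geometric-type recurrence shows that \S has length at least $\log_\ell k - 1$; I would reduce this to a single inequality in~$\ell$ and~$k$ that holds for all $\ell\ge 2$, $k>0$, and leave the arithmetic at that. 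Every separation in~\S is nice, hence satisfies the Linear Separation properties of \wfsf{}s, and $G[A\cap B]$ always contains a path between two vertices of~$L_1$.

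Next I would set $t:=\sssl$ and argue that the output~\T has length at least~$t$ in each of the four cases. If \T comes from \cref{sscase1} or \cref{sscase4}, this is immediate from the case hypothesis. For \cref{sscase2}, the point is to bound the number of discarded duplicate \chlds{}: I would show that whenever two maximal homogeneous subsequences of~\S share a \chld{}, one of them points left and the other points right and no separation of~\S lies strictly between them, so that these colliding pairs are pairwise disjoint; hence out of the at least~$2t$ maximal homogeneous subsequences at most~$t$ \chlds{} are removed, leaving length at least~$t$. For \cref{sscase3}, I would count the triangular separations in~\S: each lies in a hinged subsequence; a linear hinged sequence is either decomposable into homogeneous subsequences or $L_1$-trivial; there are at most~$2t$ maximal homogeneous subsequences (else \cref{sscase2} applies) each of length below~$t$ (else \cref{sscase1} applies), and at most two $L_1$-trivial hinged subsequences (a separate structural observation) each of length below~$t$ (else \cref{sscase4} applies). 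This gives at most~$t(2t+2)$ triangular separations, and since $t(2t+2)=\log_\ell k-1-2t$ for $t=\sssl$, at least~$2t$ separations of order two or three survive their removal, so whichever of the two is larger has length at least~$t$.

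The remaining work is to check \cref{def-wfsf} for the output of each case, which I expect to be routine. In \cref{sscase1}, \T is obtained from a homogeneous---hence linear, $L_1$-nontrivial, triangular---subsequence by moving the small ear $D\setminus C$ or $C\setminus D$ of the common \chld{}~$(C,D)$ to one side; the separators are the triangles, giving width three, Linear Separation follows from linearity, \cref{wfsf4b} holds with the two endpoints of the shared base in the roles of~$\vcyca$ and~$\vcycb$, and $L_1\subseteq A_1$ follows from \cref{child-seps}. In \cref{sscase4} the separators are triangles with a common edge in~$L_1$, and $L_1$-triviality together with hingedness forces all of~$L_1$ onto one side, giving \cref{wfsf4b}. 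In \cref{sscase2} the separators are the \chlds{}, each an edge with both endpoints in~$L_1$, so the width is two; their nesting is inherited from the nesting of~\S via \cref{child-seps}. In \cref{sscase3} each separator induces a path between two vertices of~$L_1$, so \cref{wfsf4a} holds, with width two or three according to whether the retained separations have order two or three. Finally, an outerplanar~$G$ admits no nice separation of order three, so only width-two separators occur there.

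I expect the main obstacle to be the length analysis of~\T---specifically the structural facts about \chlds{}: that two colliding maximal homogeneous subsequences must point in opposite directions with nothing between them (so the duplicates pair up disjointly), the dichotomy that a linear hinged sequence is homogeneous-decomposable or $L_1$-trivial, and the bound of two on the number of maximal $L_1$-trivial hinged subsequences. These drive both the duplicate count for \cref{sscase2} and the triangular-separation count for \cref{sscase3}; once they are established, the recurrence estimate and the property-by-property verification of \cref{def-wfsf} are mechanical.
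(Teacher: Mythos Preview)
Your proposal is correct and follows essentially the same route as the paper: you bound~$|\S|$ by unrolling the recurrence from \cref{next-small-sep} to get $|\S|\ge\log_\ell k-1$, set $t=\sssl$, and then carry out exactly the case split the paper uses, invoking the same three structural facts (the left/right collision lemma for shared \chlds{}, the homogeneous-or-$L_1$-trivial dichotomy for hinged sequences, and the bound of two on maximal $L_1$-trivial hinged subsequences) and the identity $t(2t+2)=\log_\ell k-1-2t$. The paper packages these structural facts as separate lemmas and then verifies \cref{def-wfsf} case by case just as you outline; the only detail you might underestimate is that in \cref{sscase2} the nesting of the \chlds{} is not quite immediate from \cref{child-seps} alone and requires a short geometric argument (plus showing the \chlds{} are nontrivial), but your plan is otherwise complete.
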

\fi

\subsection{Proof of \cref{sepseq}}\label{sec:proof-main}
Let $G$~be an $\layers$-outerplanar triangulated disk with \nVer{}~vertices.  If $t:=\lfloor\sqrt[2r]{\log(n)}/6^\layers\rfloor\leq 0$, then the theorem follows trivially.  Thus, in the following, assume that $t\geq 1$, that is, $\log(n)\geq 6^{2r^2}\geq9$.  We use induction on the outerplanarity~$\layers$ of~$G$ in order to prove that we can construct a \wfsf{} of width at most~$2r$ and length at least~$t$ for~$G$.

\allowdisplaybreaks For $\layers=1$, note that $q(V) = |L_1| = n$ and there are no blocks in $G - L_1$. Hence, by \cref{small-seps}, we can construct a \wfsf{} of width two and length at least $\sqrt{(\log(n) + 1)/2} - 1$.  In this case, $\sqrt{(\log(n) + 1)/2} - 1 \geq t$ is implied by
\begin{align*}
  \sqrt{\log(n)}/2 - 1 &\geq \sqrt{\log(n)}/6, &
  \sqrt{\log(n)} - \sqrt{\log(n)}/3 & \geq 2,\text{ and} &
  \log(n) & \geq 9\text{.}
\end{align*}
Now, assume that the statement is true for $(r-1)$-outerplanar triangulated disks, where $r-1\geq 1$, and we prove it for $r$-outerplanar triangulated disks.  Assume that there is a block~$C$ in~$G - L_1$ with at least $s := 2^{(\log(n))^{\frac{r-1}{r}}}$~vertices.  It is not hard to see that, since $G$~is a triangulated disk and since~$C$ contains at least three vertices, $C$~is a triangulated disk. Moreover, $C$ has outerplanarity at most~$r - 1$.  Therefore, we can apply the inductive hypothesis to~$C$.  We thus infer that there is a \wfsf{} \S for $C$ of width at most~$2(r - 1)$ and length at least~$\lfloor(\log(s))^{1/2(r - 1)}/6^{r-1}\rfloor = \lfloor6\cdot \sqrt[2r]{\log(n)}/6^\layers\rfloor \geq 6t$. By \cref{lem:cons3}, we can extend \S to a \wfsf{} of width at most~$2\layers$ and length at least~$t$ for~$G$. %

Now, assume that each block in~$G - L_1$ contains at most~$s$ vertices. Note that $q(V(G)) \geq |L_1| + (n - |L_1|)/s$, and hence, $q(V) \geq n/s$. By \cref{small-seps}, there is a \wfsf{} of width at most~$3 \leq 2\layers$ and length at least~$\sqrt{(\log_{s}(q(V)) + 1)/2} - 1$.\iflong{} We claim that this sequence has length at least~$t$. This claim follows from the following list of inequalities that are pairwise equivalent:
\begin{align*}
  \sqrt{(\log_{s}(n/s) + 1)/2} - 1 &\geq t,\\
  \log_{s}(n/s) &\geq 2(t + 1)^2 -1,\\
  \frac{\log(n/s)}{\log(s)} &\geq 2(t + 1)^2-1,\\
  \frac{\log(n)}{\log(s)}  &\geq 2(t + 1)^2, \\
  \frac{\log(n)}{(\log(n))^{\frac{r-1}{r}}} &\geq 2(t + 1)^2,\\
  (\log(n))^{\frac{1}{r}} &\geq 2(t + 1)^2.
\end{align*}
\noindent Now the last inequality is true because, for $r, t \geq 1$, we have $2(t + 1)^2  \leq 8t^2 \leq \sqrt[r]{\log(n)}$.\qed
\else{}
Again, it can be shown that this quantity is at least~$t$.
\fi{}

\iflong{}
\section{Supplement I: Beware of removing twins}\label{sec:supp}
As mentioned in \cref{sec:intro}, earlier results about \PS~(\eg, \citet[p. 179]{makinen}, \citet[p. 346]{bkmsv}, \citet[p. 399]{KKS08}) have been obtained under the assumption that the input hypergraph is \emph{twinless}.
\begin{figure}[t]\centering
  \begin{tikzpicture}[>=stealth,rounded corners=2pt,x=35pt,y=25pt,
    vert/.style={circle,thick,draw,inner sep=0pt,minimum size=2mm},
    ivert/.style={rectangle,thick,fill=black,draw,inner
      sep=0pt,minimum size=2mm},
    tvert/.style={circle,thick,fill=black,draw,inner
      sep=0pt,minimum size=2mm}]
    \node (a) at (0,0) [vert,label=left:$a$] {};
    \node (b) at (3,4.5) [vert,label=above:$b$] {};
    \node (c) at (6,0) [vert,label=right:$c$] {};
    \node (d) at (3,1) [ivert,label=below:$d$] {};

    \draw [-,thick] (a)-- (d) -- (b);
    \draw [-,thick] (a) to [bend left=45] (b);
    \draw [-,thick] (b) to [bend left=45] (c);
    \draw [-,thick] (c) -- (d);
    \draw [-,thick] (a)-- (c);

    \node (vab) at (0.9,1) [vert,label=right:$v_a$] {};
    \node (uab) at (1.75,2) [ivert,label=right:$v_d$] {};
    \node (vba) at (2.25,3) [vert,label=right:$v_b$] {};

    \draw [-,thick] (a)-- (vab) -- (uab) -- (vba) -- (b);
    \draw [-,thick] (d) -- (uab);

    \node (vbc) at (3.75,3) [vert,label=left:$u_b$] {};
    \node (ubc) at (4.25,2) [ivert,label=left:$u_d$] {};
    \node (vcb) at (5.1,1) [vert,label=left:$u_c$] {};

    \draw [-,thick] (b)-- (vbc) -- (ubc) -- (vcb) -- (c);
    \draw [-,thick] (d) -- (ubc);

    \node (t) at (1.25,2.5) [tvert,label=left:$t$] {};
    \node (tp) at (4.75,2.5) [tvert,label=right:$t'$] {};

    \draw [dotted,thick] (a)-- (t);
    \draw [dotted,thick] (b)-- (t);
    \draw [dotted,thick] (vab)-- (t);
    \draw [dotted,thick] (vba)-- (t);

    \draw [dotted,thick] (c)-- (tp);
    \draw [dotted,,thick] (b)-- (tp);
    \draw [dotted,,thick] (vbc)-- (tp);
    \draw [dotted,,thick] (vcb)-- (tp);

  \end{tikzpicture}

  \caption{An example showing that twins can be essential for obtaining a ($2$-outer)planar support. The hyperedges consist of the solid lines in the figure, plus $\{a,v_a,t,t',c\},\allowbreak \{a,v_b,t,t',c\},\allowbreak
  \{b,v_a,t,t',c\},\allowbreak \{b,v_b,t,t',c\},  \{b,u_b,t,t',a\}$, $\{b,u_c,t,t',a\}, \{c,u_b,t,t',a\}, \{c,u_c,t,t',a\}$. The vertices~$t$ and $t'$ are twins, and $\Hyp$~has a (2-outer)planar support but $\Hyp-t$ does not.}
\label{fig:counter}
\end{figure}
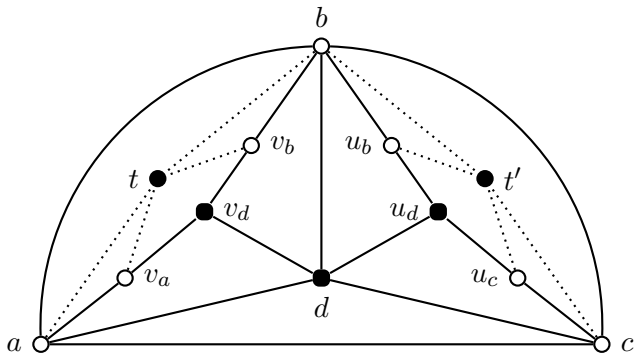
In \cref{fig:counter}, we provide a concrete example why removing twins is harmful.  The vertex-set of the hypergraph~$\Hyp$ shown in~\cref{fig:counter} is:
\[\Ver:=\{a,b,c,d,v_a,v_b,v_d,u_b,u_c,u_d,t,t'\}.\]
We construct~$\Hyp$ in such a way that~$t$ and~$t'$ are twins
and~$\Hyp$ has a planar support but~$\Hyp-t$ does not. Let~$\Ed$ contain the size-two hyperedges for each \emph{solid} edge shown in
\cref{fig:counter}. Observe that the embedding for this graph, and
thus for any support for a hypergraph containing these edges, is
basically fixed: The
set~$\{a,b,c,d\}$ induces a~$K_4$ and any plane embedding of the~$K_4$
has one face for each triangle. Now the path from~$a$ to~$b$
containing~$v_d$ has to be inside the face that is incident with~$a$, $b$,
and~$c$ as~$v_d$ is a neighbor of~$d$. The same holds for the path
from~$b$ to~$c$ containing~$u_d$. The remaining hyperedges contained in~$\Ed$ are:
\begin{align*}
  \{a,v_a,t,t',c\}, &&\{a,v_b,t,t',c\},&&
  \{b,v_a,t,t',c\}, &&\{b,v_b,t,t',c\},\\
  \{b,u_b,t,t',a\}, &&\{b,u_c,t,t',a\},&&
  \{c,u_b,t,t',a\}, &&\{c,u_c,t,t',a\}.
\end{align*}
Adding~$t$ and~$t'$ and the dotted edges to the solid graph gives a
planar support for~$\Hyp$.

Now consider the hypergraph~$\Hyp-t$. The solid edges are still
hyperedges of this hypergraph, hence the embedding of the solid edges
and their incident vertices is fixed in any support. Now observe that
in any planar support either~$v_a$ is not adjacent to~$b$ or~$v_b$ is
not adjacent to~$a$. Moreover, neither of these vertices can be
adjacent to~$c$. Thus, to make the graph induced by the hyperedges
containing~$v_a$ or~$v_b$ connected, $t'$ must be adjacent to one of
the two vertices in any support. For the same reason, $t'$ must be
adjacent to~$u_b$ or~$u_c$. This is not possible since each face is
either incident with~$v_a$ and/or~$v_b$ or with~$u_b$ and/or~$u_c$ but
not both. Hence, $\Hyp-t$ has no planar support. Therefore, removing
one vertex of a twin class can transform a yes-instance into a
no-instance.

The above example can be generalized to make the twin classes arbitrarily
large: Copy the vertex set above~$\ell$ times, and let~$$V_i := \{a_i,
  b_i, c_i, d_i, v_{i,a} , v_{i,b}, v_{i,d} , u_{i,b} , u_{i,c} ,
  u_{i,d}, t_i, t_i'\}$$ denote the vertex set of the~$i$-th
copy. Within each copy, add the size-two hyperedges as in the example
above. Then, further add a distinguished vertex~$v^*$, and add the
size-two hyperedges~$\{a_i,v^*\}$, $\{b_i,v^*\}$, and~$\{c_i,v^*\}$ to
the hypergraph. After this, for any support~$G$, $G[V_i]$ is
constrained to be a copy of~\cref{fig:counter}, and for each
copy, $v^*$ is adjacent to the three vertices~$a_i$, $b_i$, and~$c_i$.
Now, let~$A:=\{a_i\mid 1\le i \le\ell\}$, $B:=\{b_i\mid 1\le i
\le\ell\}$, and~$C:=\{c_i\mid 1\le i \le\ell\}$, $V_a:=\{v_{i,a}\mid 1\le i \le\ell\}$, $V_b:=\{v_{i,b}\mid 1\le i \le\ell\}$, $U_b:=\{u_{i,b}\mid 1\le i \le\ell\}$, $U_c:=\{u_{i,c}\mid 1\le i \le\ell\}$, and let~$T:=\{t_i\mid
1\le i \le\ell\}\cup \{t'_i\mid 1\le i \le\ell\}$. Then, add the
hyperedges
\begin{align*}
  A\cup C\cup V_a\cup T\cup \{v^*\}, &&A\cup C\cup V_b\cup T\cup \{v^*\},\\
  B\cup C\cup V_a\cup T\cup \{v^*\}, &&B\cup C\cup V_b\cup T\cup \{v^*\},\\
  B\cup A\cup U_b\cup T\cup \{v^*\}, &&B\cup A\cup U_c\cup T\cup \{v^*\},\\
  C\cup A\cup U_b\cup T\cup \{v^*\}, &&C\cup A\cup U_c\cup T\cup \{v^*\}.
\end{align*}
The instance is a yes-instance as~$v^*$ can be used to ``connect''
partial solutions for each~$V_i$ that are obtained by copying the
solution for the simple example. Moreover, each face that is initially
incident with~$\{a_i,b_i,v_{i,a},v_{i,b},v_{i,d}\}$ has to contain at
least one vertex of~$T$. Since there are~$\ell$ such faces, removing
one vertex of~$T$ transforms the yes-instance into a no-instance.

\section{Supplement II: Non-uniform fixed-parameter tractability}\label{sec:nonuni-fpt}

\begin{theorem}\label{nonuni-fpt}
  Let $\Pi$~be a graph property that is closed under adding degree-one vertices.  There is a function~$f\colon\mathbb N\to\mathbb N$ such that, for each fixed~$m \in \mathbb{N}$, there is an algorithm that determines whether a given hypergraph $\Hyp$ with $\nEd{}$ hyperedges has a support satisfying~$\Pi$ in time $f(\nEd{})\cdot\poly(|\Hyp|)$.
\end{theorem}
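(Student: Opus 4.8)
The plan is to apply the well-quasi-ordering machinery (Dickson's Lemma) to the class of all hypergraphs with exactly $\nEd{}$~hyperedges. The first step is to observe that, up to isomorphism, such a hypergraph is completely described by a \emph{type vector}: fixing a canonical hyperedge set $[\nEd{}]=\{1,\dots,\nEd{}\}$, identify each vertex~$v$ with the set $\Ed(v)\subseteq[\nEd{}]$ of hyperedges containing it, and describe~$\Hyp$ by the function $c_\Hyp\colon 2^{[\nEd{}]}\to\N$ counting the vertices of each type~$X$. Call $\mathcal T(\Hyp):=\{X\mid c_\Hyp(X)\ge 1\}$ the \emph{realized type set}. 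Since having a support satisfying~$\Pi$ is invariant under relabelling hyperedges, it suffices, for the algorithmic part, to decide this property on type vectors with a fixed hyperedge set.

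The key structural claim is: for every fixed $\nEd{}\in\N$ and every fixed $\mathcal T\subseteq 2^{[\nEd{}]}$, the set $Y_{\nEd{},\mathcal T}$ of hypergraphs with hyperedge set $[\nEd{}]$ and realized type set \emph{exactly}~$\mathcal T$ that admit a $\Pi$-support is upward-closed in the componentwise partial order on $\mathbb Z_{\ge 1}^{\mathcal T}$. To prove this I would take $\Hyp\in Y_{\nEd{},\mathcal T}$ with a $\Pi$-support~$G$ and any $\Hyp'$ having at least as many vertices of each type (and realized type set still exactly~$\mathcal T$), and build a $\Pi$-support~$G'$ on the vertex set of~$\Hyp'$ by starting from~$G$ and attaching each surplus vertex~$v'$ of type~$X$ as a degree-one neighbour of some vertex of~$G$ of type~$X$ (one exists because $X\in\mathcal T$). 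One checks that this preserves the support property (a pendant at a vertex of type~$X$ lies in exactly the hyperedges of~$X$, and is attached to a vertex already lying in all of them, so every hyperedge still induces a connected subgraph) and preserves~$\Pi$ by the closure assumption. By Dickson's Lemma, $\mathbb Z_{\ge 1}^{\mathcal T}$ is a well-quasi-order, so $Y_{\nEd{},\mathcal T}$ has a finite set $\mathrm{Min}_{\nEd{},\mathcal T}$ of minimal elements. Define $f(\nEd{})$ to be the maximum number of vertices over all elements of $\mathrm{Min}_{\nEd{},\mathcal T}$, ranging over the (finitely many) choices of $\mathcal T\subseteq 2^{[\nEd{}]}$; this is finite for each~$\nEd{}$, though possibly uncomputable.

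For fixed~$\nEd{}$, the algorithm hard-codes the finitely many lists $\mathrm{Min}_{\nEd{},\mathcal T}$. On input~$\Hyp$ with $\nEd{}$~hyperedges it fixes an arbitrary bijection $E(\Hyp)\to[\nEd{}]$, computes in $O(\nEd{}\cdot|\Hyp|)$~time the induced realized type set~$\mathcal T$ and type vector~$c$, and accepts if and only if $c$ componentwise dominates some element of $\mathrm{Min}_{\nEd{},\mathcal T}$. Correctness is immediate: the answer does not depend on the chosen bijection, and within $\mathbb Z_{\ge 1}^{\mathcal T}$ an element lies in the upward-closed set $Y_{\nEd{},\mathcal T}$ exactly when it dominates one of its (finitely many) minimal elements. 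The domination check involves only a number of vectors bounded in~$\nEd{}$, each of size bounded in~$\nEd{}$, so the total running time is $f(\nEd{})\cdot\poly(|\Hyp|)$.

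The main obstacle is precisely what forces the stratification by realized type set: the class of yes-instances is \emph{not} upward-closed under adding a vertex of an arbitrary, possibly new, type, since a single pendant can only be inserted into hyperedges that already share the vertex it is attached to; hence Dickson's Lemma cannot be applied directly to $\N^{2^{\nEd{}}}$ but must be invoked separately on each of the finitely many type strata~$\mathbb Z_{\ge 1}^{\mathcal T}$. A secondary point that needs care is that the statement only asserts the \emph{existence} of~$f$: the minimal yes-instances cannot in general be extracted from~$\nEd{}$ (this is exactly why the result is non-uniform, and why even an undecidable~$\Pi$ is covered), so the algorithm for each fixed~$\nEd{}$ must be permitted to carry these lists as non-constructive advice.
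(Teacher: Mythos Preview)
Your proposal is correct and takes essentially the same approach as the paper: both represent an $m$-hyperedge hypergraph by the vector of twin-class sizes, stratify by the set of nonempty twin classes (your ``realized type set''), invoke Dickson/Higman to obtain finitely many minimal yes-instances in each stratum, and prove upward closure by attaching surplus twins as degree-one pendants. Your write-up is more explicit about why the stratification is needed, but the underlying argument is the same.
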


\noindent Note that the theorem holds in particular for $\Pi$ being planarity or $r$-outerplanarity.

\newcommand{\good}{$\Pi$-supportable}
\begin{proof}[Proof sketch]
  Let us call a hypergraph \emph{\good} if it admits a $\Pi$-support. We define a quasi-order~$\preceq$ on the family of hypergraphs with $\nEd{}$~hyperedges such that, if $\Hyp$~is \good{} and $\Hyp \preceq \G$, then $\G$~is \good.  We show that, for every~$m \in \mathbb{N}$, the family~$\Psi_m$ of \good{} hypergraphs that are minimal under~$\preceq$ is finite.

  \looseness=-1 To define~$\preceq$, we say that $\Hyp \preceq \G$ if $\Hyp$~can be obtained from~$\G$ by iteratively removing a vertex that has a twin.  If we allow zero removals so that $\preceq$~is reflexive, it is clear that $\preceq$~is a quasi-order. Furthermore, if $\Hyp$~has a $\Pi$-support~$G$, then adding the missing twins of a vertex~$v$ in~$\G$ as degree-one vertices to~$v$ in~$G$ will yield a $\Pi$-support for~$\G$. Thus indeed, if $\Hyp$~is \good, so is~$\G$.

  \looseness=-1 To see that $\Psi_m$~is finite, consider the representation of an $\nEd{}$-hyperedge hypergraph~$\Hyp$ as a $2^m$-tuple~$t_\Hyp$, each entry of which represents the size of a distinct twin class. The set of such tuples is quasi-ordered by the natural extension of~$\leq$ as $(a_1, \ldots, a_\ell) \leq (b_1, \ldots, b_\ell)$ if and only if $a_i \leq b_i$ for each $i \in \{1, \ldots, \ell\}$. Moreover, \citet[Theorem 2.3]{Hig52} has shown that every infinite sequence of $2^m$-tuples contains two tuples $t_1, t_2$ with $t_1 \leq t_2$. Assume that $\Psi_m$ is infinite; then there is an infinite subset $\Psi_m'$ of hypergraphs which have the same (nonempty) twin classes. For hypergraphs~$\Hyp, \G$ with the same twin classes, $t_\Hyp \leq t_\G$ implies $\Hyp \preceq \G$. Thus, $\Psi_m'$ implies an infinite sequence of tuples that are pairwise incomparable under $\leq$, a contradiction. Hence, $\Psi_m$ is finite.

  Finally, to obtain an algorithm for every fixed~$\nEd{}$ as in the theorem, we hard-wire the family~$\Psi_m$ of \good{} hypergraphs minimal with respect to~$\preceq$ into the algorithm. The algorithm simply checks whether its input hypergraph~$\Hyp$ fulfills $\mathcal{F} \preceq \Hyp$ for some $\mathcal{F} \in \Psi_m$, which clearly can be checked in polynomial time for each~$\mathcal{F}\in\Psi_m$.
\end{proof}
\else
\fi

\section{Conclusion}

So far, we only used \wfsf{}s for kernelization. It is interesting to find more algorithmic applications of these separators, for example in a divide and conquer algorithm for \PS.  We would also like to point out that \wfsf{}s can be used to find %
nicely structured separators in $r$-outerplanar graphs that are \emph{not} triangulated disks: an $r$-outerplanar graph~$G$ can be turned into a triangulated disk~$G'$ such that each vertex remains on its layer~\citep{Biedl15}.  Hence, by computing a long \wfsf{} for~$G'$, one obtains  for~$G$ a separator sequence satisfying \cref{wfsf1,wfsf1',wfsf3,wfsf2,wfsf4c,wfsf7',wfsf7} of \cref{def-wfsf}. Additionally, the graph~$G[S_i]$ is a subgraph of an induced path or a cycle.  Using this approach, we conjecture that it is also possible to apply our arguments to the variant of \PS{} that asks for a planar support with a minimum number of edges.

\paragraph{Acknowledgments.}
We thank anonymous referees for very helpful comments improving the presentation of the results and for pointing out\iflong{}
\cref{nonuni-fpt}.
\else{}
that \PS{} is non-uniformly FPT.
\fi

\iflong{}
René van Bevern, Iyad Kanj, and Manuel Sorge acknowledge support by the DFG, project DAPA (NI~369/12).  Parts of this work were done while René van Bevern was employed at TU Berlin and during a six month stay of Iyad Kanj at TU Berlin.
\fi{}

\iflong\else\newpage\fi
\makeatletter
\renewcommand\bibsection%
{
  \section*{\refname
    \@mkboth{\MakeUppercase{\refname}}{\MakeUppercase{\refname}}}
}
\makeatother
\setlength{\bibsep}{0pt}
\bibliographystyle{abbrvnat}
\bibliography{planar-supports}

\end{document}

\documentclass{llncs}%
\newif\iflong
\newif\ifshort

\longtrue

\iflong
\else
\shorttrue
\fi

\usepackage[utf8]{inputenc}
\usepackage[T1]{fontenc}
\usepackage[numbers,sort&compress]{natbib} %
\usepackage[breaklinksf]{hyperref}
\usepackage{mathtools}
\usepackage{amsmath,amssymb}
\usepackage{enumerate}
\usepackage[capitalize]{cleveref}
\usepackage{graphicx}
\usepackage{booktabs}
\usepackage{xcolor}
\usepackage{todonotes}
\usepackage{subfigure}
\usepackage{paralist}

\renewcommand{\UrlFont}{\sf}

\newlength{\capitalheight}
\settoheight{\capitalheight}{X}

\newcommand{\prob}[5]{%
  \begin{center}
    \begin{quote}
      #1
      \begin{compactdesc}
      \item[#2]#3
      \item[#4]#5
      \end{compactdesc}
    \end{quote}
  \end{center}
}

\newcommand{\decprob}[3]{\prob{#1}{Input:}{#2}{Question:}{#3}}
\newcommand{\decprobnotitle}[3]{\prob{}{Input:}{#2}{Question:}{#3}}

\newcommand{\optprob}[3]{\prob{#1}{Instance:}{#2}{Task:}{#3}}
\newcommand{\optprobnotitle}[3]{\prob{}{Instance:}{#2}{Task:}{#3}}

\pagestyle{plain}

\spnewtheorem{observation}{Observation}{\bfseries}{}

\newcommand{\PS}{\textsc{Planar Support}}

\newcommand{\HE}{\ensuremath{\mathcal{F}}} %
\newcommand{\HV}{\ensuremath{V}} %

\begin{document}

\title{Finding planar hypergraph supports}

\maketitle

\begin{abstract}
We do stuff.
\end{abstract}

\section{Introduction}
We want to draw hypergraphs by finding planar supports. One way to draw a hypergraph in the plane is to make a subdivision of the plane, i.e. separate it into connected cells by drawing lines. The cells are called faces and each face shall correspond to a vertex of the hypergraph. Then we pick a style of line (dashed, solid, dotted etc.) for each hyperedge and draw lines in that style around the outlines of the connected regions induced by the faces of the vertices in the hyperedge. If for each hyperedge there is only one such connected region, then this is a \emph{(vertex-)planar drawing} of the hypergraph. Such a drawing exists if and only if there is a planar support for the hypergraph~\cite{JP87}. A \emph{support} is a graph on the same vertex set such that each subgraph induced by a hyperedge is connected. To decide whether a planar support exists is NP-hard~\cite{JP87}.

\decprob{\PS}{A hypergraph $H$.}{Is there a planar support for $H$?}

Vertex planarity is a generalization of Zykov-planarity and having a well-formed Euler diagram (see \citet{BCPS11}). That is, every Zykov-planar hypergraph and every hypergraph with a well-formed Euler diagram is vertex planar.

\section{Literature}
\paragraph{Theory}
\begin{enumerate}
\item \PS
  \begin{compactdesc}
  \item[\citet{JP87}] Define vertex-planarity (and, similarly, hyperedge-planarity), observe connection to planar supports, results on the relationships of the two planarity concepts and Zykov planarity, show NP-hardness of \PS{} (reduction from {\sc Hamiltonian Path}).
  \end{compactdesc}%
\item Variants of \PS{} pertaining to special drawings (restrictions on support mostly dependent on hypergraph)
  \begin{compactdesc}
  \item[\citet{KKS08}] Overview over different concepts of planarity for hypergraphs, theoretical results on a special kind of subdivision drawing with convex faces, inducing special kinds of planar supports. Open question: Is finding these drawings NP-hard?
  \item[\citet{KMN14}] Finding minimum-weight tree supports to produce
    ``area-proportional'' Euler diagrams.
  \item[\citet{BCPS12}] Subway map drawing: finding \emph{path-based} supports, meaning that each hyperedge induces a (path-)Hamiltonian graph. NP-hard to find minimum-edge path-based supports, planar path-based supports. Polytime for path-based tree supports.
  \end{compactdesc}%
\item Variants of \PS{} pertaining to structure of support (independent of hypergraph)
  \begin{compactdesc}
  \item[\citet{bkmsv}] NP-hardness to decide if there are
    2-outerplanar supports, polytime for constructing tree supports
    with a given maximum degree for each vertex, polytime for cycle
    supports.
  \item[\citet{BCPS11}] Polytime for \PS{} (or outerplanar supports)
    when $H$ is closed under taking intersections and differences,
    polytime for cactus supports.
  \item[\citet{BFMY83,TY84}] Polytime to find a tree support:
    \citet{BFMY83} showed that there is a tree support iff the dual
    hypergraph is acyclic in some sense (basically, all subsets of
    vertices that cannot be ``cut'' by the intersection of two
    hyperedges are of size $< 2$), this probably corresponds to
    equivalence of Condition~3.1 and Condition~3.9 in their paper;
    \citet{TY84} show that this notion of acyclicity can be decided in
    linear time.
  \end{compactdesc}
\end{enumerate}

\paragraph{Applications}
\begin{compactdesc}
  \item[\citet{Fag83}] Database schemata: This also serves as a motivation for looking at planar hypergraphs (see~\cite{makinen})
  \item[\citet{EGB06,San04,makinen}] Circuit visualization
  \item[\citet{Lun89,RTP04}] Computational biology
  \item[\citet{BWR07,AAR10}] Social Networks
\end{compactdesc}
One might also want to draw the rss feed subscriptions gathered by \citet{LRS05} or something in their vicinity (\citet{CMTV07}).

\section{Preliminaries}

\paragraph*{Graphs and Hypergraphs}
Given an undirected graph~$G=(V,E)$ with vertex set~$V$ and edge set~$E$, we use $E(G)$ to denote the edge set~$E$ of~$G$. We denote by~$G[V']$ the \emph{subgraph~$(V', \{e \subseteq V' \mid e \in E\})$ of~$G$ induced} by~$V'$. We also use~$G-V'$ as a shorthand for~$G[V\setminus V']$.

Let~$\HV$ be a finite set and let $\HE$ be a family of subsets of~$\HV$. We call~$H = (\HV, \HE)$ a \emph{hypergraph} with \emph{vertex set}~$V$ and \emph{hyperedge set}~$\HE$. Unless stated otherwise, we assume all hypergraphs to not contain singleton hyperedges, empty hyperedges, or multiple copies of the same hyperedge since they are not meaningful for \PS{}, and searching for and removing them can be done in linear time. We use~\nVer{} to denote~$|V|$ and~$|H|$ to denote~$\sum_{F \in \HE}|F|$. We call~$v \in V$ and~$F \in \HE$ \emph{incident} if~$v \in F$.
We denote by~$\HE(v)$ the set of all hyperedges that are incident with~$v$, that is, $\HE(v):=\{F\in \HE \mid v \in F\}$.
If~$u, v \in \HV$ and~$\HE(u) \supseteq \HE(v)$,
then we say that~$u$ \emph{covers}~$v$.
Vertices that cover each other are called \emph{twins}; a maximal set of twins is called \emph{twin class}. %
The \emph{subhypergraph induced by~$\HV'$} is the hypergraph $H[\HV'] := (\HV', \HE')$ where~$\HE' = \{F \subseteq \HV' \mid F \in \HE\}$. %
By \emph{removing} a vertex~$v$ from~$H$, we mean taking the hypergraph~$H' = (\HV \setminus \{v\}, \HE')$, where $\HE'$ is obtained from $\{F \setminus \{v\} \mid F \in \HE\})$ by removing the empty set and singleton sets. A \emph{hyperwalk} between vertices~$u$ and~$v$ is an alternating sequence of vertices and hyperedges starting in~$u$ and ending in~$v$ such that succeeding elements are incident. A hypergraph is \emph{connected} if there is a hyperwalk between every pair of vertices. %

The \emph{covering graph} of hypergraph~$H=(\HV, \HE)$ is the directed
graph~$G_C = (\HV, \{(u, v) \mid \HE(u) \supseteq \HE(v)\})$. In other
words, $G_C$ contains an arc~$(u, v)$ if and only if $u$ covers
$v$. Note that~$G_C$ is transitive. Some of our reduction rules
construct the covering graph of~$H$ as a subroutine. The following
lemma bounds the running time for this step.
\begin{lemma}\label{lem:finding-covered}
  Given a hypergraph~$H = (\HV, \HE)$ one can construct the covering graph~$G_C$ in~$O(n\cdot |H|)$~time.
\end{lemma}

\begin{proof}
  Initialize~$G_C$ as~$(\HV,\HV\times\HV)$. Then, for each~$F\in\HE$,
  remove the arcs in~$(\HV\setminus F)\times F$ from~$G_C$
  in~$O(n\cdot|F|)$~time. Clearly, if~$(u,v)$ is an arc of the
  resulting directed graph~$G_C$, then there is no hyperedge containing~$v$
  but not~$u$ or, equivalently, $\HE(u)\supseteq\HE(v)$. If~$G_C$ does not contain the arc~$(u, v)$, then, by the construction of~$G_C$, there is a
  hyperedge~$F\in \HE$ such that $v\in F$ but $u \notin F$. Thus, $G_C$ contains exactly the edges~$(u,v)$ such that~$u$ covers~$v$.
\end{proof}

\section{Open questions about planar supports and supports for planar hypergraphs}
\begin{enumerate}

\item What is the complexity of deciding if a hypergraph has an outerplanar support? This is an open question posed in a couple of papers (\cite{bkmsv,KMN14}).

\item What is the complexity of deciding if a hypergraph has a support of TW $\leq 2$?

\item What is the complexity of deciding if a hypergraph has a simple planar support (\cite{KKS08})?

\item Questions about planar hypergraphs motivated by applications in DB (\cite{Fag83}, \cite{makinen}):

\begin{enumerate}
 \item Given a planar hypergraph, what is the complexity of computing a planar support with the minimum number of edges?
 \item Given a planar hypergraph, what is the complexity of deciding the existence of a simple planar support?
\end{enumerate}

\item Parameterized complexity questions related to the above.

\item We once discussed briefly the question of whether a $2^{O(n)}$-time algorithm exists for some planar support problem (cannot remember exactly which).

\item Parameterize \PS{} by distance to the \citet{BCPS11} case, that is, closure under taking intersections and differences. (Are the resulting parameterizations data-driven?)

\end{enumerate}

\section{Relevant hypergraph parameters}

\begin{itemize}
\item Size of the largest block of the dual. A block is a connected (in the usual sense) vertex-induced subhypergraph which does not have an articulation set---an intersection of two hyperedges whose deletion makes the hypergraph disconnected. If each block has size $ < 2$ then there is a tree support \cite{BFMY83}.
\end{itemize}

\section{Useful facts}

\begin{itemize}
\item If $H$ is planar, not necessarily all subhypergraphs are planar \cite{BCPS11}; they don't have a proof but an example should be $K_{3, 3}$ with a vertex added to each hyperedge.
\item \citet{BCPS11} define a decomposition into blocks (different from above) and show that there is a (outer-)planar support if each block has a (outer-)planar support.
\item Every 3-vertex connected planar graph has essentially only one planar embedding \cite[p. 747]{Tut63}.
\item Every outerplanar graph can be augmented to a biconnected outerplanar graph~\cite[Theorem~3.2]{Kan96}.  Thus, a hypergraph has an outerplaner support if and only if it has a biconnected outerplanar support.
\end{itemize}

\section{First results}

\begin{proposition}
  Finding a $(\text{planar} \mid \text{3-outerplanar} \mid \text{2-outerplanar})$ support is NP-hard, even with maximum hyperedge size eight.
\end{proposition}
\begin{proof}[Sketch]
  The idea is to modify the reduction by \citet{bkmsv} by merging all ``top'' vertices into one and all ``bottom'' vertices into one except the four leftmost and rightmost vertices. The forced graph still remains 3-connected, so the embedding is still essentially fixed, meaning that the variable gadgets still work. Also, the forced graph remains $(\text{3-outerplanar} \mid \text{2-outerplanar})$. Soundness and completeness should have the same proofs as the ones by \citet{bkmsv}.
\end{proof}

\begin{proposition}
  Finding a planar support is NP-hard even if each vertex is in at most eight hyperedges.
\end{proposition}
\begin{proof}[Sketch]
  The idea is to modify the reduction by \citet{JP87} by replacing all face-vertices with a 3-connected gadget (e.g. a grid with beams), enforcing all edges of the gadget directly with hyperedges. Instead of connecting the face vertices to the edge vertices $E'$, we connect the edge vertices to the gadget. The only role the face vertices play is to subdivide each face into appropriately many new faces.
\end{proof}

\begin{proposition}
  There is a hypergraph $H$ containing twin vertices $u, v$ so that $H$ has a planar support but removing $u$ or $v$ from $H$ results in a hypergraph without planar support.
\end{proposition}
The twin reduction rule is used by at least \citet[p. 179]{makinen}, \citet[p. 346]{bkmsv}, \citet[p. 399]{KKS08} (authors overlap in the last two).

\section{Notes: Complexity of the outerplanar case}

Being outerplanar should be equivalent to having an arrangement of the
vertices on a circle such that no two edges cross, and all edges are
either between consecutive vertices on the circle or between opposite
ones (but within the circle).%
\begin{theorem}
  If we are given such an arrangement of the vertices, we have a
  polynomial-time dynamic program deciding whether an outerplanar
  support with this arrangement exists.
\end{theorem}
Basically, supposed that there is an edge $i,j$, this edge is
non-breaking, i.e. there are no hyperedge fragments on the left and
right, the vertex-induced hypergraph on the right is connected, and we
can find a support such that there are non-breaking edges $i,k$ and
$k, j$ such that their vertex-induces hypergraphs on the right are
connected as well.

Trivially, we also get an $n^n$ algorithm out of this, but if we
instead make another dynamic program, guessing the bipartition induced
by $i, j$, we should get an $3^n$ algorithm.

We started to think about how we can decide more efficiently which
parts of the hypergraph are on the left and which are on the right of
$i, j$. Clearly, connected hyperedge-induced hypergraphs that avoid
$i, j$ have to reside on either side. Furthermore, if $i, j$ are in a
cycle of 2-hyperedges, the two paths obtained by removing $i, j$ from
the cycle have to be on different sides.

Unfortunately, combining these two observations we found even an
example such that a connected hyperedge-induced hypergraph avoiding
$i, j$ is not a consecutive segment on the outer cycle.

Does it help to generalize the cycle of 2-hyperedges? What if the more
general case does not occur; is it clear what goes where? Does it
matter or can we just put anything anywhere?

\section{Notes: FPT wrt. number of hyperedges}

We suspect the following theorem.
\begin{theorem}
  Finding an $\ell$-outerplanar support for a hypergraph with
  $\nEd{}$~hyperedges admits a problem kernel with $f(\ell, m)$~vertices.
\end{theorem}
Basic idea: Consider an $\ell$-outerplanar support along with its embedding. If the number of vertices in the input is very large, then either there are two twins, or there are two separators $S, T$ as follows:
\begin{itemize}
\item The sequences of twin classes of the vertices in the separators, ordered from top to bottom in the embedding, are the same.
\item The signatures of the separators are the same. The signature contains for each hyperedge $F$: the intersection of $F$ with the separator, and how the disconnected parts of the intersection are connected to the left and to the right of the separator.
\item Between $S$ and $T$ each vertex has a twin not between $S$ and $T$.
\end{itemize}
If there are these separators, we think that we can remove each vertex in between $S$ and $T$, glue $S$ and $T$ together, and attach each removed vertex to a twin not between $S$ and $T$. In this way, we obtain an $\ell$-outerplanar support with two adjacent twins.

This means that, if the number of vertices is large, then there is a large twin class, and a support with adjacent twins in this class, which means we can reduce (shifting all neighbors from one twin to the other and deleting one).

Going from this, it would be nice to prove that, given a hypergraph
with $\nEd{}$~hyperedges, there is always a support that is
$f(\nEd{})$-outerplanar or, equivalently, of treewidth at most~$f(\nEd{})$. We
struggled to find hypergraphs that enforce large treewidth in
supports; all supports so far could be changed so that the treewidth
seemed small. But we also do not seem to have a ``rewriting rule''
general enough, or a method of analyzing our current ones, to prove
that the result is of bounded treewidth. Furthermore, large grid
minors do not seem to help us, because we do not know how to tie them
to the structure of the support with respect to the hypergraph. (Where are the twins or separators?) Maybe it would make sense to look at
other obstructions like brambles or havens of large order.

Another rough idea to find a support with small outerplanarity would be to iteratively add hyperedges and their vertices. In each step, we would kernelize in the end and try to guess a new support, including some bounded number of new vertices with degree larger than one and show that this is sufficient? Then this means that after $\nEd{}$ steps we have a support with a small number of non-degree-one vertices.

\makeatletter
\renewcommand\bibsection%
{
  \section*{\refname
    \@mkboth{\MakeUppercase{\refname}}{\MakeUppercase{\refname}}}
}
\makeatother
\bibliographystyle{abbrvnat}
\bibliography{planar-supports}
\iflong
\include{appendix}
\fi
 \end{document}

